\definecolor{darkgreen}{rgb}{0,0.5,0}
\newtheorem{theorem}{Theorem}[section]
\newtheorem{lemma}[theorem]{Lemma}
\newtheorem{corollary}[theorem]{Corollary}
\newtheorem{definition}{Definition}[section]
\newtheorem{observation}[theorem]{Observation}
\newcommand{\defcal}[1]{\expandafter\newcommand\csname c#1\endcsname{{\mathcal{#1}}}}
\newcommand{\defbb}[1]{\expandafter\newcommand\csname b#1\endcsname{{\mathbb{#1}}}}
\newcommand{\defvec}[1]{\expandafter\newcommand\csname v#1\endcsname{{\mathbf{#1}}}}
\newcounter{calBbCounter}
    \edef\letter{\alph{calBbCounter}}
    \edef\Letter{\Alph{calBbCounter}}
\newcommand{\eps}{\varepsilon}
\newcommand{\nnR}{{\bR_{\geq 0}}}
\newcommand{\email}[1]{{\href{mailto:#1}{#1}}}
\newcommand{\characteristic}{{\scalebox{1.2}{$\mathbf{e}$}}}
\newcommand{\trueCharacteristic}{{\mathbf{1}}}
\DeclareMathOperator{\ff}{frac}
\DeclareMathOperator{\supp}{supp}
\DeclareMathOperator{\marg}{Mar}
\newcommand{\hprod}{\odot}
\newcommand{\psum}{\oplus}
\DeclareMathOperator*{\PSum}{\scalerel*{\oplus}{\sum}}
\newcommand{\extendedGroundSet}{{2^\cN}}
\newcommand{\extendedVectorSpace}{[0,1]^{\extendedGroundSet}\!\!}
\newcommand{\SplitText}{{Split}}
\newcommand{\Split}{{\textnormal{\textsc{\SplitText}}}}
\newcommand{\RelaxText}{{Relax}}
\newcommand{\Relax}{{\textnormal{\textsc{\RelaxText}}}}
\newcommand{\HitConstraintText}{{Hit-Constraint}}
\newcommand{\HitConstraint}{{\textnormal{\textsc{\HitConstraintText}}}}
\newcommand{\DeterministicPipage}{{\textnormal{\textsc{Deterministic-Pipage}}}}
\newcommand{\AcceleratedSplitText}{{AcceleratedSplit}}
\newcommand{\AcceleratedSplit}{{\textnormal{\textsc{\AcceleratedSplitText}}}}
\newcommand{\vzero}{{\mathbf{0}}}
\newcommand{\vone}{{\mathbf{1}}}
\newcommand{\RSet}{{\mathtt{R}}}
\DeclareMathOperator*{\HProd}{\scalerel*{\hprod}{\sum}}
\newcommand{\poly}{{\mathtt{Poly}}}
\newcommand{\Relaxed}{{\textnormal{\texttt{Relaxed}}}}
\title{Extending the Extension: Deterministic Algorithm for Non-monotone Submodular Maximization}
\author{Niv Buchbinder\thanks{Department of Statistics and Operations Research, Tel Aviv University. E-mail: \email{niv.buchbinder@gmail.com}} \and
			  Moran Feldman\thanks{Department of Computer Science, University of Haifa. E-mail: \email{moranfe@cs.haifa.ac.il}}}
\date{}
\begin{document}

\maketitle
\thispagestyle{empty}
\pagenumbering{Alph}
\begin{abstract}
Maximization of submodular functions under various constraints is a fundamental problem that has been studied extensively. 
A powerful technique that has emerged and has been shown to be extremely effective for such problems is the following. First, a continues relaxation of the problem is obtained by relaxing the (discrete) set of feasible solutions to a convex body, and extending the discrete submodular function $f$ to a continuous function $F$ known as the multilinear extension. Then, two algorithmic steps are implemented. The first step approximately solves the relaxation by finding a fractional solution within the convex body that approximately maximizes $F$; and the second step rounds this fractional solution to a feasible integral solution. While this ``fractionally solve and then round'' approach has been a key technique for resolving many questions in the field, the main drawback of algorithms based on it is that evaluating the multilinear extension may require a number of value oracle queries to $f$ that is exponential in the size of $f$'s ground set. The only known way to tackle this issue is to approximate the value of $F$ via sampling, which makes all algorithms based on this approach inherently randomized and quite slow.

In this work, we introduce a new tool, that we refer to as the {\em extended multilinear extension}, designed to derandomize submodular maximization algorithms that are based on the successful ``solve fractionally and then round'' approach. We demonstrate the effectiveness of this new tool on the fundamental problem of maximizing a submodular function subject to a matroid constraint, and show that it allows for a deterministic implementation of both the fractionally solving step and the rounding step of the above approach. As a bonus, we also get a randomized algorithm for the problem with an improved query complexity. 

\end{abstract}
\newpage
\pagenumbering{arabic}

\section{Introduction} \label{sec:introduction}

Maximization of submodular functions under various constraints is a fundamental class of problems, and has been studied continuously, in both computer science and operations research, since the late $1970$'s~\cite{conforti1984submodular,fisher1978analysis,hausmann1978greedy,hausmann1980worst,jenkyns1976efficacy,korte1978analysis,nemhauser1978best,nemhauser1978analysis}. A problem of this class consists of a submodular set function $f\colon 2^\cN \to \bR$ over a ground set $\cN$ and a family $\cI \subseteq 2^\cN$ of feasible subsets , and its objective is to find a subset $A\in \cI$ maximizing $f(A)$. 
The importance of submodular maximization problems stems from the richness of submodular functions, which include many functions of interest, such as cuts functions of graphs and directed graphs, the mutual information function, matroid weighted rank functions and log-determinants. Hence, many well-known problems in combinatorial optimization can be cast as submodular maximization problems. A few examples are Max-Cut~\cite{goemans1995improved,hastad2001optimal,karp1972reducibility,khot2007optimal,trevisan2000gadgets}, Max-DiCut~\cite{feige1995approximating,goemans1995improved,halperin2001combinatorial}, Generalized Assignment~\cite{chekuri2005polynomial,cohen2006efficient,feige2006approximation,fleischer2006tight}, Max-$k$-Coverage~\cite{feige1998threshold,khuller1999budgeted}, Max-Bisection~\cite{austrin2016better,frieze1997improved} and Facility Location~\cite{ageev1999approximation,cornuejols1977location,cornuejols1977uncapacitated}. From a more practical perspective, submodular maximization problems have found uses in social networks~\cite{hartline2008optimal,kempe2015maximizing}, vision~\cite{boykov2001interactive,jegelka2011submodularity}, machine learning~\cite{krause2005near,krause2008efficient,krause2008near,lin2010multidocument,lin2011class} and many other areas (see, for example, a comprehensive survey by Bach~\cite{bach2013foundations}). 

Since the descriptions of the function $f\colon 2^\cN \to \bR$ and the family $\cI \subseteq 2^\cN$ of feasible subsets are often quite involved, a standard assumption in the literature is that the algorithm has access to these objects only through appropriate oracles: a \emph{value oracle} for accessing $f$, and an \emph{independence oracle} for accessing $\cI$ (see Section~\ref{sec:preliminaries} for more detail). The complexity of the algorithm is then often measured in terms of the number of oracle queries that it uses.\footnote{This complexity measure is motivated by the observation that, in practice, the running time of submodular maximization algorithms is often dominated by the time required to evaluate its oracle queries.} To allow for algorithms with multiplicative approximation guarantees, it is also customary to assume that $f$ is non-negative.

The first works on submodular maximization problems used direct combinatorial approaches such as local search and greedy variants~\cite{buchbinder2015tight,feige11maximizing,feldman2011improved,lee2010maximizing,lee2010submodular,maxim2004note}. However, algorithms based on such approaches fail to obtain the best possible approximation ratios for many problems, and they also tend to be highly tailored for the specific details of the problem at hand, making it difficult to get generally applicable results. Consequently, a different approach emerged, which is based on extending the problem to the cube $[0,1]^\cN$. Under this approach, the feasible set $\cI\subseteq \{0,1\}^{\cN}$ is relaxed to a convex body $P\subseteq [0,1]^\cN$, and the objective function $f\colon\{0,1\}^{\cN}\to \bR$ is extended to a function $F\colon [0,1]^\cN \to \bR$. Then, the problem is solved in two steps. First, a fractional solution $\vx \in P$ that approximately maximizes $F$ is found, and then, the fractional solution $\vx$ is rounded to obtain an integral solution, while incurring a bounded loss in terms of the objective. There are multiple ways in which the function $f$ can be extended, and the extension that turned out to be the most useful in this context is known as the \emph{multilinear extension} (first introduced by C{\u{a}}linescu et al.~\cite{calinescu2011maximzing}). For every vector $\vx \in [0, 1]^\cN$, the multilinear extension is defined as
\begin{equation}\label{standard-extension}
   \bar{F}(\vx)
	\triangleq \sum_{S\subseteq \cN} \bigg[f(S) \cdot \prod_{u\in S} x_u \prod _{u\notin S}(1-x_u)\bigg]
	\enspace. 
\end{equation}

Note that we use $\bar{F}$ to denote the multilinear extension, rather than the more standard notation $F$, as $F$ is reserved in this paper to our main extension, which we define in the sequel. 
One can verify that $\bar{F}$ is indeed an extension of $f$ in the sense that for every set $S \subseteq \cN$, $f(S) = \bar{F}(\trueCharacteristic_S)$, where $\trueCharacteristic_S$ the characteristic vector of $S$ in $[0, 1]^\cN$ (i.e., a vector that takes the value $1$ in the coordinates corresponding to elements of $S$, and the value $0$ in the other coordinates). It is often useful to observe also that $\bar{F}(\vx)$ is the expected value of $f$ on a random set including every element $u \in \cN$ with probability $x_u$, independently.

The multilinear extension is so useful because it allows for implementing the two steps of the above approach for many kinds of constraints, which lead to an amazingly success in resolving many open problems. For example, when $P$ is a matroid polytope, the rounding step can be done with no loss under the multilinear extension~\cite{calinescu2011maximzing,chekuri2010dependent}, and when $P$ is defined by a constant number of knapsack constraints, rounding results in only an arbitrarily small constant loss~\cite{kulik2013approximations} (see also~\cite{bruggmann2022optimal,chekuri2014submodular,feldman13maximization,qiu2022submodular} for other examples of rounding under the multilinear extension). Finding a good fractional solution, under this extension, was also shown to be possible for any convex body that is solvable and down-closed.\footnote{A convex body $P \subseteq [0, 1]^\cN$ is down-closed if $\vy \in P$ implies that every non-negative vector $\vx \leq \vy$ belongs to $P$ as well. The polytope $P$ is solvable if one can optimize linear functions subject to it.} The first suggested algorithm for this purpose is the Continuous Greedy algorithm designed by C{\u{a}}linescu et al.~\cite{calinescu2011maximzing}. Their algorithm outputs a $(1 - \nicefrac{1}{e})$-approximate solution when the submodular function is monotone.\footnote{A set function $f\colon 2^\cN \to \bR$ is monotone if $f(A) \leq f(B)$ for every two sets $A \subseteq B \subseteq \cN$.} Accompanied with the above mentioned rounding for matroid polytopes, this yielded the best possible approximation for maximizing a non-negative monotone submodular function subject to a matroid constraint, resolving a central problem that was open for $30$ years.
Finding a good fractional solution when $f$ is a general (not necessarily monotone) submodular function was also the subject of a long line of work~\cite{chekuri2014submodular,feldman2011unified,ene2016constrained,buchbinder2019constrained,buchbinder2024constrained}.
The current best approximation ratio of $0.401$ for this problem was obtained very recently by Buchbinder and Feldman~\cite{buchbinder2024constrained}. On the inapproximability side, Oveis Gharan and Vondr\'{a}k~\cite{gharan2011submodular} proved that no algorithm can achieve approximation ratio better than $0.478$ even when $P$ is the matroid polytope of a partition matroid, and recently, Qi~\cite{qi2022maximizing} showed that the same inapproximability applies also to a cardinality constraint. 

The main drawback of algorithms based on the ``fractionally solve and then round'' approach is that evaluation of the multilinear extension may require as many as $2^{|\cN|}$ value oracle queries to $f$. The only known way to tackle this issue is to approximate the value of $F(\vx)$ via sampling~\cite{calinescu2011maximzing}, which makes all algorithms based on this approach inherently randomized and quite slow. An intriguing related open question is whether randomization is indeed necessary for obtaining a good approximation for submodular maximization (when the function can be accessed only via the value oracle). For unconstrained submodular maximization, Buchbinder and Feldman~\cite{BF18} were able to derandomize an earlier algorithm~\cite{BFNS15} obtaining the best possible $\nicefrac{1}{2}$-approximation~\cite{feige11maximizing}. Very recently, for the problem of maximizing a {\em monotone} submodular function subject to a matroid constraint, Buchbinder and Feldman~\cite{buchbinder2024constrained} were able to get the optimal approximation of $1-\nicefrac{1}{e}-\eps$ using a deterministic algorithm, which improved upon an earlier work that only obtained a guarantee of $0.5008$-approximation~\cite{buchbinder2023deterministic}. Buchbinder and Feldman~\cite{BF24a} obtained their result by derandomizing an algorithm of Filmus and Ward~\cite{filmus2014monotone} that is not based on the multilinear extension, and is not known to extend to non-monotone functions or other kinds of constraints. Thus, it is unclear to what extent (if any) their technique can be used in other related problems. A particularly important such related problem is maximizing a general submodular functions subject to a matroid constraint. In this problem, there is still a large gap between the performance of randomized and deterministic algorithms. As mentioned above, the best randomized approximation guarantees $0.401$~\cite{buchbinder2024constrained}, but the best approximation obtained by a deterministic algorithm is currently only $0.305$~\cite{chen2024discretely} (this improves to $0.385$-approximation for a simple cardinality constraint~\cite{chen2024discretely}). The main obstacle for closing the gap for this problem (and many related ones) seems to be avoiding the inherent randomization of evaluating the multilinear extension.

\subsection{Our Contribution}

Our main technical contribution is the introduction of a new technique for derandomization of algorithms that are based on the multilinear extension. Our technique is based on a new tool that we refer to as the {\em Extended Multilinear Extension}. 
The extended multilinear extension is defined on vectors $\vy\in \extendedVectorSpace$ of dimension $2^{|\cN|}$ whose coordinates correspond to subsets of $\cN$. Formally, given a function $f\colon 2^\cN \to \bR$, the extended multilinear extension of $f$ is a function $F\colon \extendedVectorSpace \to \bR$ such that, for every $\vy \in \extendedVectorSpace$,
\begin{equation}
 F(\vy) \triangleq \sum_{J\subseteq 2^{\cN}} \mspace{-9mu} \Big(f(\cup_{S\in J}S) \cdot \prod_{S\in J}y_S\cdot \prod_{S\in 2^\cN \setminus J} \mspace{-9mu} (1-y_S) \Big)
\enspace.
\end{equation}

The extended multilinear extension is an extension of $f$ in the sense that for every set $S \subseteq \cN$, $F(\characteristic_S) = f(S)$, where $\characteristic_S$ is a vector in $\extendedVectorSpace$ that has $1$ in the coordinate corresponding to the set $S$, and $0$ in all other coordinates. More generally, for a collection $J \subseteq 2^{\cN}$ of subsets, $F(\sum_{S\in J}\characteristic_S) = f(\cup_{S\in J}S)$. Note that when restricted to vectors $\vy$ whose support includes only coordinates corresponding to singleton sets, the extended multilinear extension unifies with the standard multilinear extension defined by Equation \eqref{standard-extension}. As with the standard multilinear extension, it is useful to take a probabilistic perspective of the extended multilinear extension. Let $\RSet(\vy) = \cup_{S \subseteq \cN} \RSet(\vy, S)$, where $\RSet(\vy, S)$ is a random set that is equal to $S$ with probability $y_S$, and is equal to $\varnothing$ otherwise. Given these definitions, for every $\vy \in\extendedVectorSpace$, $F(\vy) = \bE[f(\RSet(\vy))]$. This probabilistic point of view also makes it natural to consider the marginal probability $\marg_u(\vy)$ of element $u\in \cN$ to appear in $\RSet(\vy)$. We use $\marg(\vy)\in [0,1]^{\cN}$ to denote the vector of these marginal probabilities. The algorithms we design later for fractional maximization subject to the extended multilinear extension aim to find a vector $\vy \in \extendedVectorSpace$ that (approximately) maximizes $F$ among all vectors obeying $\marg(\vy) \in P$, where $P$ is the extension of the constraint to $[0,1]^{\cN}$. 


At this point the reader may wonder why the definition of the extended multilinear extension is useful, given that evaluating it may require in the worst case as many as $2^{2^{|\cN|}}$ value oracle queries. The crux of our technique is the observation that since this new extension enables single coordinates to capture arbitrarily large sets, rather than just singletons, it is possible to find extremely sparse vectors $\vy$ that already give good approximations, and $F$ can be efficiently evaluated on such vectors. Formally, let $\ff(\vy)$ be the number of coordinates of $\vy$ that take fractional values.  We observe that the product $\prod_{S\in J}y_S\cdot \prod_{S\in 2^\cN \setminus J} (1-y_S)$ is non-zero only when $J$ includes every set $S$ with $y_S = 1$, and does not include any set $S$ with $y_S = 0$. Since there are only $2^{\ff(\vy)}$ such collections $J \subseteq 2^\cN$ of subsets, $F(\vy)$ can be evaluated using that many value oracle queries to $f$.

The first step of our general approach is to use the extended multilinear extension to \emph{deterministically} simulate known algorithms for maximizing the standard multilinear extension. The following theorem demonstrates this step by derandomizing the Measured Continuous Greedy algorithm of~\cite{feldman2011unified}. Given a matroid constraint $\cM$, Measured Continuous Greedy produces a vector $\vx \in P(\cM)$, where $P(\cM)$ is the matroid polytope (the convex closure of all the independent sets of $\cM$). Similarly, given a matroid constraint $\cM$ and an error parameter $\eps>0$, the algorithm of Theorem~\ref{thm:deterministicMeasured} produces a vector $\vy$ such that $\marg(\vy)\in P(\cM)$ and $\supp(\vy)= poly(1/\eps)$, where $\supp(\vy)$ is the size of the support of the vector $\vy$ (note that for every $\vy$, $\ff(\vy) \leq \supp(\vy)$). In particular, for a constant $\eps$, the output vector $\vy$ only has a constant number of non-zero coordinates!
We use $n= |\cN|$ to denote the size of the ground set, $r$ to denote the rank of the matroid $\cM$, and $OPT$ to denote a feasible set maximizing $f$. The proof of Theorem~\ref{thm:deterministicMeasured} appears in Section~\ref{sec:measured}.


\newtoggle{isIntro}\toggletrue{isIntro}
\begin{restatable}[Fractional Solution]{theorem}{thmDeterministicMeasured} \label{thm:deterministicMeasured}
There is a {\bf deterministic} algorithm that given a matroid $\cM=(\cN, \cI)$, a non-negative submodular function $f\colon 2^\cN \to \nnR$, and a parameter $\eps\in(0,1)$, produces a vector $\vy\in \extendedVectorSpace$ such that $\marg(\vy)\in P(\cM)$, $\supp(\vy) = O(\nicefrac{1}{\eps^4})$ and $F(\vy)\geq (\nicefrac{1}{e} - \eps) \cdot f(OPT)$. Furthermore, if $f$ is monotone, then $F(\vy)\geq (1-\nicefrac{1}{e} - \eps) \cdot f(OPT)$. This algorithm can be implemented to use only $O_\eps(n \log r)$ value oracle queries to $f$ and independence oracle queries to $\cM$.\iftoggle{isIntro}{\footnote{$O_\eps$ suppresses a multiplicative dependence on a function of $\eps$. Formally, $g(\eps) \cdot h_1(n) = O_\eps(h_2(n))$ if $g$ is nonnegative and $h_1(n) = O(h_2(n))$.}} {} 

Additionally, without making any additional queries, the algorithm can also produce a set of {\bf random} independent sets $S_1, \ldots, S_{1/\eps^3}$ of the matroid such that $\bE[\bar{F}(\vx)] \geq F(\vy)$, where $\vx = \eps^3 \cdot \sum_{i = 1}^{1/\eps^3}\trueCharacteristic_{S_i}$ is a convex combination of the characteristic vectors of these independent sets.
\end{restatable}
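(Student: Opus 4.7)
The plan is to derandomize the Measured Continuous Greedy (MCG) algorithm of~\cite{feldman2011unified} by running it over the extended multilinear extension $F$ in place of the standard multilinear extension $\bar F$. The source of randomization in classical MCG is that exact evaluation of $\bar F$ requires exponentially many value-oracle queries, forcing one to resort to sampling. In contrast, as noted in the introduction, $F(\vy)$ can be computed exactly using only $2^{\ff(\vy)}$ queries. Thus, if we can keep $\vy$ sparse throughout, the entire MCG procedure can be executed deterministically, with $\marg(\vy)$ playing the role of the classical fractional iterate.

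Concretely, I would discretize time into $T = \Theta(1/\eps)$ outer iterations of step size $\eps$. At each iteration, I compute the marginal $w_u = F(\vy + \eps \cdot \characteristic_{\{u\}}) - F(\vy)$ for every $u \in \cN$, and invoke the standard matroid greedy with weights $w_u$ to obtain a maximum-weight independent set $I_t$. I then update $\vy$ by adding mass $\eps$ on the coordinate $\characteristic_{I_t}$. Feasibility $\marg(\vy) \in P(\cM)$ is immediate from $\marg(\vy) \leq \sum_t \eps \cdot \trueCharacteristic_{I_t}$ and the down-closedness of the matroid polytope. The claimed $O(1/\eps^4)$ support bound will come from splitting each outer step into $\Theta(1/\eps^3)$ finer sub-steps whose smaller scale is dictated both by the discretization needed for a tight analysis and by the concentration required by the bonus rounding.

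The approximation guarantee follows the classical MCG analysis with $F$ replacing $\bar F$. The key observation is that adding mass $\eps$ at the coordinate $\characteristic_{I_t}$ multiplies $(1 - \marg_u(\vy))$ by $(1 - \eps)$ for each $u \in I_t$, exactly reproducing the measured update rule $\marg(\vy) \leftarrow \marg(\vy) + \eps \cdot \trueCharacteristic_{I_t} \hprod (\vone - \marg(\vy))$. Using submodularity, the per-step gain of $F(\vy)$ is then at least $\eps \cdot ((1-\eps)^t f(OPT) - F(\vy))$ in the monotone case and the appropriate non-monotone analog, because the greedy choice of $I_t$ dominates the corresponding weight of any feasible alternative. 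Telescoping and applying $(1-\eps)^{1/\eps} \geq 1/e - O(\eps)$ yields the claimed $(1 - 1/e - \eps)$ and $(1/e - \eps)$ guarantees.

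The query complexity decomposes into $O(n)$ evaluations of $F$ per outer iteration at $2^{O_\eps(1)}$ queries each (since $\supp(\vy) = O_\eps(1)$ throughout), plus one matroid-greedy call at $O(n \log r)$ independence queries via a heap-based implementation, and multiplying by $T = O(1/\eps)$ gives $O_\eps(n \log r)$ in total. For the bonus, since every coordinate in the support of $\vy$ corresponds to a matroid-independent set $I_t$, sampling one such set per sub-step with probability proportional to its assigned mass yields $1/\eps^3$ random independent sets whose averaged characteristic vector $\vx$ satisfies $\bE[\bar F(\vx)] \geq F(\vy)$; this inequality comes from a standard concavity-along-monotone-lines comparison between the correlated distribution of $\RSet(\vy)$ (which draws entire sets at a time) and the product distribution underlying $\bar F(\vx)$. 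The main obstacle I anticipate is calibrating the sub-step scale so that the deterministic per-step marginal gain in $F$ matches the continuous analysis tightly enough to preserve the optimal rates without losing any extra $\eps$ factor, and verifying that the bonus-sampling comparison is valid uniformly over the (deterministic) trajectory.
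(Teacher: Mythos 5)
There is a genuine gap at the heart of your fractional-progress argument. You select $I_t$ to maximize the \emph{sum of singleton weights} $\sum_{u \in I_t} w_u$ with $w_u = F(\vy \psum \eps\characteristic_{\{u\}}) - F(\vy)$, but you then place all of $I_t$ on a \emph{single} coordinate $\characteristic_{I_t}$. The actual gain of that update is
\[
F(\vy \psum \eps\characteristic_{I_t}) - F(\vy) \;=\; \eps\cdot\bigl(F(\characteristic_{I_t}\vee\vy)-F(\vy)\bigr) \;=\; \eps\cdot g_{\vy}(I_t\mid\varnothing)\enspace,
\]
and by submodularity of $g_{\vy}$ this is \emph{at most} $\sum_{u\in I_t} g_{\vy}(u\mid\varnothing)\approx \eps^{-1}\sum_{u\in I_t} w_u$ — the inequality points the wrong way. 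In $\RSet(\vy)$ the elements of $I_t$ are perfectly correlated (all present or all absent), so the block gain can be a factor $|I_t|$ smaller than the sum of weights (e.g., a coverage function where all of $I_t$ covers one item). Maximizing $\sum_u w_u$ therefore gives no useful lower bound on the per-step increase of $F$, and the classical continuous-greedy telescoping does not go through. Refining the step size does not help: however small the mass, putting a whole independent set on one coordinate keeps its elements perfectly correlated.

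This is precisely the obstacle the paper's \Split{}/\AcceleratedSplit{} subroutine is built to overcome. The paper partitions the chosen independent set into $\ell=1/\eps$ \emph{disjoint} parts $T_1,\dots,T_\ell$, gives each part its own coordinate (so the parts are mutually independent in $\RSet(\vy)$ while correlation survives only within each small part), and proves a direct set-level guarantee $\sum_j g(T_j\mid\varnothing)\geq (1-1/\ell-\eps)\cdot g(OPT) - \tfrac{1}{\ell}\sum_j g(T_j)$ via a base-exchange bijection — not via linearized singleton weights. The residual correlation loss is the $\tfrac1\ell\sum_j g(T_j)$ term, which is driven to $O(\eps)$ by taking $\ell = 1/\eps$; your single-block scheme corresponds to $\ell=1$, where this loss is a constant factor. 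Your feasibility, support-counting, and query-complexity bookkeeping are fine, and your intuition for the bonus claim ($\bE[\bar F(\vx)]\geq F(\vy)$ via decorrelation increasing value) matches the paper's Lemma~\ref{lem:marg_prop}, but the bonus rests on the main construction, which as stated does not achieve the claimed per-step gain.
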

\togglefalse{isIntro}

To make the extended multilinear extension useful, the second step in our general approach is to convert algorithms for rounding of the standard multilinear extension into \emph{deterministic} algorithms for rounding the extended multilinear extension. We demonstrate this step with the following theorem (proved in Section~\ref{sec:pipage_round}), which converts the Pipage Rounding procedure suggested by~\cite{calinescu2011maximzing} for matroid constraints.

\begin{theorem}[Rounding] \label{thm:deterministicRounding}
There exist a {\bf deterministic} rounding algorithm that, given a matroid $\cM=(\cN, \cI)$ and a vector $\vy \in \extendedVectorSpace$ such that $\marg(\vy)\in P(\cM)$, returns an independent set $T$ of $\cM$ such that $f(T)\geq F(\vy)$ using $O(n^2 \cdot  2^{\ff(\vy)})$ value oracle queries to $f$ and $O(n^5 \log^2 n)$ independence oracle queries to $\cM$.
\end{theorem}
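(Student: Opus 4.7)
The plan is to perform pipage rounding directly in the extended space $\extendedVectorSpace$, guided by two key structural properties of $F$. First, for any two distinct coordinates $S_1, S_2 \subseteq \cN$,
\[
\frac{\partial^2 F}{\partial y_{S_1}\,\partial y_{S_2}} = \bE_{R'}\!\left[f(R' \cup S_1 \cup S_2) - f(R' \cup S_1) - f(R' \cup S_2) + f(R')\right] \leq 0,
\]
where $R'$ is the random set drawn from the coordinates other than $S_1, S_2$. The inequality is immediate from the ``decreasing marginal returns'' formulation of submodularity applied to $R' \subseteq R' \cup S_2$ and the set $S_1$. Consequently, $F$ is convex along any line in the direction $\characteristic_{S_1} - \characteristic_{S_2}$. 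Second, whenever $\vy$ is integral, it encodes a single set $T = \bigcup_{S : y_S = 1} S$ with $F(\vy) = f(T)$ and $\marg(\vy) = \trueCharacteristic_T$, so the hypothesis $\marg(\vy) \in P(\cM)$ forces $T \in \cI$.

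Guided by these observations, at each iteration the algorithm picks two fractional coordinates $y_{S_1}, y_{S_2}$ of $\vy$ and modifies $\vy$ along $\characteristic_{S_1} - \characteristic_{S_2}$, moving as far as feasibility permits in either orientation and committing to the extreme endpoint with the larger value of $F$. By the convexity of $F$ along this direction, the step never decreases $F$. Each step either turns a fractional coordinate integral (so $\ff(\vy)$ strictly drops) or activates a new tight constraint of $P(\cM)$ on $\marg(\vy)$, so the total number of iterations is polynomially bounded. Each iteration requires evaluating $F$ at two vectors whose fractional support has size at most $\ff(\vy)$, which costs $O(2^{\ff(\vy)})$ value oracle queries by enumerating the realizations of $\RSet(\vy)$ restricted to the fractional coordinates. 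The final integral vector encodes an independent set $T$ with $f(T) = F(\vy_{\text{final}}) \geq F(\vy_{\text{initial}})$.

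The main obstacle is preserving $\marg(\vy) \in P(\cM)$ throughout the process. Unlike in standard pipage rounding, where the multilinear extension is over element-indexed coordinates and the marginal map is linear, here $\marg$ is a \emph{non-linear} function of $\vy$---the coordinate $y_S$ affects the marginal of every $u \in S$ multiplicatively. Determining the maximal feasible step size along $\characteristic_{S_1} - \characteristic_{S_2}$ therefore requires solving, for each potentially tight matroid constraint $\sum_{u \in A} \marg_u(\vy) = r_\cM(A)$, a univariate polynomial equation in the step parameter, and locating the binding constraint using an independence-oracle based subroutine (matroid partition, matroid intersection, or similar). This combinatorial search is what I expect to account for the $O(n^5 \log^2 n)$ independence oracle query bound. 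Correctness of the overall procedure then follows by induction on $\ff(\vy)$, maintaining the invariants $\marg(\vy) \in P(\cM)$ and $F(\vy) \geq F(\vy_{\text{initial}})$.
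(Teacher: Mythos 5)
Your overall architecture (pipage along a difference of two coordinate directions, convexity forcing no loss, an integral $\vy$ with $\marg(\vy)\in P(\cM)$ encoding an independent set) matches the paper's, but there are two genuine gaps. The first is your opening claim that $\frac{\partial^2 F}{\partial y_{S_1}\partial y_{S_2}}\leq 0$ for \emph{any} two distinct coordinates. The set-marginal form of diminishing returns, $f(S_1\mid R'\cup S_2)\leq f(S_1\mid R')$, is only valid when $S_1\cap S_2\subseteq R'$ or $f$ is monotone; for non-monotone $f$ and overlapping $S_1,S_2$ it can fail. Concretely, take $\cN=\{a,b,c\}$ and $f(S)=2\cdot[c\notin S]$ (non-negative and submodular), $S_1=\{a,c\}$, $S_2=\{b,c\}$, $R'=\varnothing$: then $f(S_1\cup S_2)-f(S_1)-f(S_2)+f(\varnothing)=0-0-0+2=2>0$, so $F$ is strictly \emph{concave} along $\characteristic_{S_1}-\characteristic_{S_2}$ there, and your ``commit to the better endpoint'' step can strictly lose value. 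The paper's convexity lemma (Lemma~\ref{lem:extendedF}) is stated only for \emph{disjoint} $S,T$, and this is exactly why its algorithm never pipages on raw set-coordinates: it first runs a subroutine $\Relax(\vy,u)$ that rewrites $\vy$ (without decreasing $F$ and without changing $\marg(\vy)$) so that the element $u$ lives only in the singleton coordinate $y_{\{u\}}$, and then pipages along $\characteristic_{\{u\}}-\characteristic_{\{v\}}$ for distinct, hence disjoint, singletons.

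The second gap is the feasibility and complexity bookkeeping. Because $\marg$ is multiplicative in the set-coordinates, a coordinate $y_S$ couples all elements of $S$; once a matroid constraint on a set $A$ becomes tight one must round the inside and outside of $A$ separately, but a coordinate $S$ straddling $A$ cannot be moved without disturbing both sides, and splitting it into $S\cap A$ and $S\setminus A$ increases $\ff(\vy)$ --- which is fatal, since each evaluation of $F$ costs $2^{\ff(\vy)}$ for the \emph{current} $\vy$, not the initial one. You implicitly assume the per-evaluation cost stays $O(2^{\ff(\vy_{\text{initial}})})$ but give no mechanism ensuring that, and your tight-constraint search (univariate polynomial equations over exponentially many sets $A$) is not reduced to anything implementable with the stated number of independence queries. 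The paper resolves both issues with the same device: after $\Relax$, the step is along singleton coordinates, so each constraint is affine in the step size and the binding one is found by a single submodular function minimization; and by recursing first on the smaller side of each tight set, the recursion depth, hence the total growth of $\ff(\vy)$, is $O(\log n)$, which is precisely where the $O(n^2\cdot 2^{\ff(\vy)})$ value-query bound comes from. Without analogues of these two ingredients the proof does not go through.
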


Intuitively, Pipage Rounding works by picking two fractional elements, and then increasing one of them, and decreasing the other until either one of the two elements becomes integral, or some constraint becomes tight. If a constraint becomes tight, the algorithm uses this constraint to partition the rounding problem into two independent parts on which one can recurse. In general, the number of fractional coordinates can greatly increase in intermediate steps of this procedure, which will make it impossible to efficiently evaluate the extended multilinear extension. However, our version of Pipage Rounding is able to avoid this pitfall by recursing first on the smaller of the two independent parts, and making sure to pick again elements that have already been picked before whenever this is possible.

Unfortunately, the above deterministic rounding has a high query complexity. If one is willing to accept randomization, then there is a way to get a faster algorithm via 
Lemma~\ref{lem:marg_prop} (that we prove later). This lemma shows that $\bar{F}(\marg(\vy)) \geq F(\vy)$, and thus, allows us to use any (randomized) rounding algorithm designed for the standard multilinear extension (e.g., swap rounding~\cite{chekuri2010dependent}) to round the output of Theorem~\ref{thm:deterministicMeasured}. 
Furthermore, a recent algorithm by Kobayashi and Terao~\cite{kobayashi2024subquadratic} can round a vector $\vx$ represented as a convex combination of $t$ independent sets of the matroid\footnote{Technically, the algorithm of~\cite{kobayashi2024subquadratic} requires $\vx$ to be a convex combination of \emph{bases} of the matroid, rather than independent sets. However, this technicality can be avoided by adding dummy elements to the ground set (see Section~\ref{sec:preliminaries} for details).} using only $O(r^{3/2}t \log^{3/2}(\frac{rt}{\eps}))$ independence oracle queries to $\cM$ (while losing a factor of $1 - \eps$ in the objective).
Applying their algorithm to the vector $\vx$ from the second part of Theorem~\ref{thm:deterministicMeasured} allows randomized rounding of this vector using $O_\eps(r^{3/2} \log^{3/2} r)$ independence oracle queries to $\cM$.

The implications of Theorem~\ref{thm:deterministicMeasured} and Theorem~\ref{thm:deterministicRounding} are summarized by Corollary \ref{cor:main}. The first bullet of this corollary follows directly by combining the two theorems, and the second bullet follows by combining Theorem~\ref{thm:deterministicMeasured} with the randomized rounding discussed above. The last bullet of Corollary~\ref{cor:main} holds since the mere existence of a lossless rounding shows that the value $F(\vy)$ of the output vector $\vy$ of Theorem~\ref{thm:deterministicMeasured} gives us (deterministically) an estimate of $f(OPT)$ using only $O_\eps(n \log r)$ value oracle queries to $f$ and independence oracle queries to $\cM$. Previously, the state-of-the-art estimates of $f(OPT)$ that could be obtained using a nearly-linear number of queries were only good up to a factor of $\nicefrac{1}{4} - \eps$ for general submodular functions~\cite{feldman2023how}, and a factor of $\nicefrac{1}{2} - \eps$ for monotone functions (based on the technique of~\cite{babanidiyuru2014fast}).

\begin{corollary}\label{cor:main}
Given a matroid $\cM=(\cN, \cI)$, a non-negative submodular function $f\colon 2^\cN \to \nnR$, and a parameter $\eps\in(0,1)$ it is possible to
\begin{itemize}
    \item produce {\bf deterministically} a set $S \in \cI$ such that $f(S)\geq (\nicefrac{1}{e} - \eps) \cdot f(OPT)$, or if $f$ is monotone, $f(S)\geq (1-\nicefrac{1}{e} - \eps) \cdot f(OPT)$, using $O_\eps(n^2)$ value oracle queries to $f$ and $O_\eps(n^5 \log^2 n)$ independence oracle queries to $\cM$.
    \item produce {\bf randomly} a set $S \in \cI$ such that $f(S)\geq (\nicefrac{1}{e} - \eps) \cdot f(OPT)$, or if $f$ is monotone, $f(S)\geq (1-\nicefrac{1}{e} - \eps) \cdot f(OPT)$, using $O_\eps(n \log r)$ value oracle queries to $f$ and $O_\eps(n \log r + r^{3/2} \log^{3/2} r)$ independence oracle queries to $\cM$.
    \item produce {\bf deterministically} a value $V$ such that $V\in [(\nicefrac{1}{e} - \eps) \cdot f(OPT), f(OPT)]$, or if $f$ is monotone, $V\in [(1-\nicefrac{1}{e} - \eps) \cdot f(OPT), f(OPT)]$, using only $O_\eps(n \log r)$ value oracle queries to $f$ and independence oracle queries to $\cM$.
\end{itemize}
\end{corollary}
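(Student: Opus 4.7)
The plan is to assemble the corollary by combining Theorems~\ref{thm:deterministicMeasured} and~\ref{thm:deterministicRounding}, with a small amount of extra bookkeeping for the randomized version and the estimator. The first bullet follows directly. Theorem~\ref{thm:deterministicMeasured} deterministically produces a vector $\vy \in \extendedVectorSpace$ with $\marg(\vy) \in P(\cM)$, $\supp(\vy) = O(\nicefrac{1}{\eps^4})$, and $F(\vy) \geq (\nicefrac{1}{e} - \eps)f(OPT)$ (or $(1 - \nicefrac{1}{e} - \eps)f(OPT)$ in the monotone case). Feeding this $\vy$ into Theorem~\ref{thm:deterministicRounding} returns an independent set $S$ with $f(S) \geq F(\vy)$. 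The bound $\ff(\vy) \leq \supp(\vy) = O(\nicefrac{1}{\eps^4})$ makes $2^{\ff(\vy)} = 2^{O(1/\eps^4)} = O_\eps(1)$, so the rounding contributes $O_\eps(n^2)$ value queries and $O(n^5 \log^2 n)$ independence queries; adding the $O_\eps(n \log r)$ queries of Theorem~\ref{thm:deterministicMeasured} yields the stated totals.

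For the second bullet, I would avoid the expensive deterministic rounding and instead use the ``bonus'' output of Theorem~\ref{thm:deterministicMeasured}: the random independent sets $S_1, \ldots, S_{1/\eps^3}$ whose uniform convex combination $\vx = \eps^3 \sum_i \trueCharacteristic_{S_i}$ satisfies $\bE[\bar{F}(\vx)] \geq F(\vy)$. Since $\vx$ is already represented as a convex combination of $t = 1/\eps^3$ independent sets, the algorithm of Kobayashi and Terao~\cite{kobayashi2024subquadratic} rounds $\vx$ using $O(r^{3/2} t \log^{3/2}(rt/\eps)) = O_\eps(r^{3/2}\log^{3/2} r)$ independence queries (and no value queries) to a set $S$ with $f(S) \geq (1-\eps) \bar{F}(\vx)$. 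To absorb the $(1-\eps)$ factor without worsening the final guarantee, I would invoke Theorem~\ref{thm:deterministicMeasured} with a parameter $\eps' = \Theta(\eps)$ chosen so that $(1 - \eps')(\nicefrac{1}{e} - \eps') \geq \nicefrac{1}{e} - \eps$ (respectively $(1-\eps')(1-\nicefrac{1}{e} - \eps') \geq 1 - \nicefrac{1}{e} - \eps$); this is a one-line calculation. Summing the query counts gives $O_\eps(n\log r)$ value queries and $O_\eps(n\log r + r^{3/2}\log^{3/2} r)$ independence queries, matching the bullet.

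For the third bullet, the idea is that Theorem~\ref{thm:deterministicRounding} need never be executed — its mere existence supplies the upper bound we need. Applied to the vector $\vy$ produced by Theorem~\ref{thm:deterministicMeasured}, Theorem~\ref{thm:deterministicRounding} guarantees the existence of an independent set $T$ with $f(T) \geq F(\vy)$, so $F(\vy) \leq f(OPT)$; combined with the lower bound $F(\vy) \geq (\nicefrac{1}{e} - \eps) f(OPT)$ (or $(1 - \nicefrac{1}{e} - \eps) f(OPT)$ in the monotone case), the value $V \triangleq F(\vy)$ lies in the prescribed interval. Producing $\vy$ costs $O_\eps(n\log r)$ queries of each type, and since $\supp(\vy) = O(\nicefrac{1}{\eps^4})$, a single evaluation of $F(\vy)$ uses only $2^{O(1/\eps^4)} = O_\eps(1)$ additional value queries, keeping the total within the stated budget.

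No real obstacle is expected; the only non-trivial point is the error-parameter rescaling in the second bullet and the observation that Theorem~\ref{thm:deterministicRounding} can be used as an existential upper bound on $F(\vy)$ in the third bullet without actually being run. Everything else is a direct chaining of the two theorems and an accounting of oracle queries.
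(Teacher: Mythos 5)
Your proposal is correct and follows essentially the same route as the paper: the first bullet by chaining Theorems~\ref{thm:deterministicMeasured} and~\ref{thm:deterministicRounding}, the second by applying the Kobayashi--Terao rounding to the random convex combination from the second part of Theorem~\ref{thm:deterministicMeasured}, and the third by using the mere existence of the lossless rounding to certify $F(\vy) \le f(OPT)$. Your explicit $\eps$-rescaling to absorb the $(1-\eps)$ rounding loss in the second bullet is a detail the paper leaves implicit, but it is the same argument.
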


The deterministic algorithm of the first bullet of Corollary~\ref{cor:main} matches the optimal approximation guarantee of the deterministic algorithm of Buchbinder and Feldman~\cite{BF24a}  for monotone functions, and significantly improves over the previously best approximation guarantee of $0.305$ for general submodular functions due to Chen et al.~\cite{chen2024discretely}. For monotone functions, the randomized algorithm of the second bullet of Corollary~\ref{cor:main} improves over the state-of-the-art algorithm of Buchbinder and Feldman~\cite{BF24a} that requires $\tilde{O}_\eps(n + r\sqrt{n})$ oracle queries ($\tilde{O}$ suppresses poly-logarithmic factors).  
For general submodular functions, we are not aware of an explicit calculation of the query complexity that can be achieved using known techniques. However, it is reasonable to believe that the algorithm described by Kobayashi and Terao~\cite{kobayashi2024subquadratic} for monotone functions can be adapted into an algorithm for general submodular functions guaranteeing $(\nicefrac{1}{e} - \eps)$-approximation using $\tilde{O}_\eps(n\sqrt{r})$ oracle queries.

We would like also to mention another result of Kobayashi and Terao~\cite{kobayashi2024subquadratic} that shows that it is possible to get $(1 - \nicefrac{1}{e} - \eps)$-approximation for maximizing a monotone submodular function subject to a matroid constraint using $\tilde{O}_\eps(n + r^{3/2})$ oracle queries when the algorithm is given access to the matroid $\cM$ through an oracle known as the \emph{rank oracle}, which is stronger than the more common \emph{independence oracle}. Furthermore, if one is only interested in estimating the value of the optimal solution up to the above factor, then the query complexity reduces to $\tilde{O}_\eps(n)$. Corollary~\ref{cor:main} shows that the use of the rank oracle is unnecessary to get these results.

As our last contribution, we present and analyze in Section~\ref{sec:split} a simple algorithm termed {\Split} that may be of independent interest.
A faster version of this algorithm, termed {\AcceleratedSplit}, is analyzed in Appendix~\ref{app:split}. The properties of this faster version are stated in the next proposition. 

\begin{restatable}{proposition}{propPartitionAccelerated}\label{prop-partition-accelerated}
{\AcceleratedSplit} gets as input a matroid $\cM = (\cN, \cI)$, a non-negative submodular function $f\colon 2^\cN \to \nnR$, a positive integer $\ell$ and an error parameter $\eps \in (0, 1)$. It makes $O(\eps^{-1} n\ell \log (r/\eps))$ value oracle queries and $O(\eps^{-1} n \log (r/\eps))$ independence oracle queries, and 
    produces disjoint sets $T_1, T_2, \ldots, T_{\ell} \subseteq \cN$ such that their union $T= \cup_{j=1}^{\ell}T_j$ is independent in $\cM$, and
\[
    \sum_{j = 1}^{\ell} f(T_j \mid \varnothing)
    \geq
    \max\Big\{\Big(1 - \frac{1}{\ell} - \eps\Big) \cdot f(OPT) - \frac{1}{\ell} \sum_{j=1}^{\ell}f(T_j), 0\Big\} \enspace.
\]
Furthermore, if $f$ is monotone, then the last inequality improves to
\[
    \sum_{j = 1}^{\ell} f(T_j \mid \varnothing)
    \geq \max\Big\{(1 - \eps) \cdot f(OPT) - \frac{1}{\ell} \sum_{j=1}^{\ell}f(T_j), 0\Big\}\enspace.
\]
\end{restatable}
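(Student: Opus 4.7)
The plan is to realize \AcceleratedSplit\ as a threshold-based accelerated greedy that concurrently grows $\ell$ disjoint sets $T_1, \dots, T_\ell$ whose union $T = \bigcup_{j} T_j$ is kept independent in $\cM$. Fix $\tau_0 = \max_{u \in \cN} f(u \mid \varnothing)$ and sweep through the geometric grid $\tau_t = (1-\eps)^t \tau_0$, stopping once $\tau_t < \eps \tau_0 / r$, which gives $O(\eps^{-1}\log(r/\eps))$ threshold levels. At each threshold $\tau$, scan every $u \in \cN$ once: compute the $\ell$ marginals $f(u \mid T_j)$, let $j^\star \in \arg\max_j f(u \mid T_j)$, and if $f(u \mid T_{j^\star}) \geq \tau$ and $T \cup \{u\} \in \cI$, add $u$ to $T_{j^\star}$. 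Each pass uses $n\ell$ value queries and at most $n$ independence queries, summing over passes to the stated query complexity.

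For the approximation, set $O = \mathrm{OPT}$. Applying subadditivity of marginal gains to the disjoint partition $(T_j)_{j=1}^{\ell}$ with $O$ as the base set ($f(T \mid O) \leq \sum_j f(T_j \mid O)$) and rearranging yields the master inequality
\begin{equation*}
\sum_{j=1}^{\ell} f(O \mid T_j) \;\geq\; (\ell - 1) f(O) + f(T \cup O) - \sum_{j=1}^{\ell} f(T_j) \;\geq\; (\ell-1) f(O) - \sum_{j=1}^{\ell} f(T_j);
\end{equation*}
in the monotone case the termwise bound $f(O \cup T_j) \geq f(O)$ strengthens this to $\sum_j f(O \mid T_j) \geq \ell f(O) - \sum_j f(T_j)$. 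Dividing by $\ell$ and combining with submodularity $f(O \mid T_j) \leq \sum_{o \in O} f(o \mid T_j)$, both claims reduce to the single ``greedy-covers-marginals'' statement
\begin{equation*}
\ell \sum_{j=1}^{\ell} f(T_j \mid \varnothing) + \ell \eps \cdot f(O) \;\geq\; \sum_{o \in O}\sum_{j=1}^{\ell} f(o \mid T_j).
\end{equation*}

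To prove the last inequality, we split the double sum according to how the greedy treated each $o \in O$. (i) For $o \in O \cap T$ with $o \in T_{j(o)}$, we have $f(o \mid T_{j(o)}^{\mathrm{final}}) = 0$, and for $j \neq j(o)$ submodularity combined with the $\arg\max$ rule at $o$'s addition step bounds $f(o \mid T_j^{\mathrm{final}})$ by the marginal gain recorded when $o$ was inserted into $T_{j(o)}$; summing over these elements yields at most $\ell \sum_j f(T_j \mid \varnothing)$. (ii) For $o \in O \setminus T$ whose final marginals are all below $\tau_{\mathrm{end}}$, using $\tau_{\mathrm{end}} \leq \eps \tau_0 / r \leq \eps f(O)/r$ and $|O| \leq r$, these contribute at most $\ell \eps f(O)$ in total. (iii) For the remaining $o \in O \setminus T$ that were blocked by the matroid, fix a Brualdi exchange injection $\pi \colon O \setminus T \to T \setminus O$, and observe that just before $\pi(o)$ is added, $T^{\text{then}} \cup \{o\} \subseteq (T \setminus \{\pi(o)\}) \cup \{o\} \in \cI$, so $o$ was feasible to add at that moment; since the greedy picked $\pi(o)$ over $o$, the $\arg\max$ rule bounds each $f(o \mid T_j^{\text{then}})$ by $f(\pi(o) \mid T_{j(\pi(o))}^{\text{then}})$ up to a $(1-\eps)^{-1}$ factor from threshold discretization, and these terms charge to $\pi(o)$'s contribution to $\sum_j f(T_j \mid \varnothing)$, with injectivity of $\pi$ preventing double-counting.

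The main obstacle is case (iii): correctly coupling the matroid-exchange moment (when $\pi(o)$ is added and $o$ is feasible) with the threshold at which $o$'s marginals are actually compared by the algorithm. If $o$ appears later than $\pi(o)$ in the scan order at the same pass, we back up one pass, and the geometric grid absorbs the extra $(1-\eps)^{-1}$ factor into the additive $\eps f(O)$ term. Once this coupling is in place, the monotone case requires no additional work beyond substituting the stronger master inequality, and this substitution is precisely the source of the improvement from $(1 - 1/\ell - \eps)$ to $(1 - \eps)$ in the proposition; the $-\tfrac{1}{\ell}\sum_j f(T_j)$ correction appears in both cases as an unavoidable byproduct of the division-by-$\ell$ step.
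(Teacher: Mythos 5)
Your proposal follows essentially the same route as the paper's proof: a decreasing-threshold multi-way greedy, a Brualdi exchange bijection between $OPT$ and the (dummy-padded) output base, and the same three-way case analysis charging each $o \in OPT$ either to itself (if $o \in T$), to the final threshold $\tau_{\mathrm{end}} \leq \eps f(OPT)/r$, or to its exchange partner via the threshold one level earlier --- with your deterministic derivation of $\frac{1}{\ell}\sum_j f(OPT \cup T_j) \geq (1 - 1/\ell) f(OPT)$ via subadditivity being an equivalent substitute for the paper's probabilistic Corollary~\ref{cor:lowerbound}. The only point to tidy is the constant bookkeeping: the $(1-\eps)^{-1}$ slack from backing up one threshold level multiplies the charged term $\sum_j f(T_j \mid \varnothing)$ rather than the additive $\eps f(OPT)$ term, so (as the paper does) one should run the grid with ratio $1 - \eps/2$ and absorb the loss multiplicatively.
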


\section{Preliminaries}\label{sec:preliminaries}

In this section, we formally define the problem considered in this paper. We also introduce in this section the standard notation and known results that we employ. Notation and results related to our new extended multilinear extension are discussed separately in Section~\ref{sec:extended_multilinear}. Below, we use $[n]$ to denote the set $\{1,2, \ldots, n\}$.

\paragraph{Vector operations:}
Throughout the paper, we use $\vzero$ and $\vone$ to represent the all zeros and all ones vectors, respectively. Since we use both vectors of dimension $n$ and vectors of dimension $2^n$, the dimension of individual instances of $\vzero$ and $\vone$ vary, and should be understood from the context.
We also employ various vector operators. First, given two vectors $\va, \vb \in [0,1]^\cN$, we use the following standard operators notation.
\begin{itemize}
	\item	The coordinate-wise maximum of $\va$ and $\vb$ is denoted by $\va \vee \vb$ (formally, $(\va \vee \vb)_u = \max\{a_u,b_u\}$ for every $u \in \cN$).
	\item The coordinate-wise minimum of $\va$ and $\vb$ is denoted by $\va \wedge \vb$ (formally, $(\va \wedge \vb)_u = \min\{a_u,b_u\}$ for every $u \in \cN$).
	\item The coordinate-wise product of $\va$ and $\vb$ (also known as the Hadamard product) is denoted by $\va \hprod \vb$ (formally, $(\va \hprod \vb)_u = a_u \cdot b_u$ for every $u \in \cN$).
\end{itemize}

We also use the operator $\psum$ defined in \cite{buchbinder2024constrained} for the coordinate-wise probabilistic sum of two vectors. Formally, $\va \psum \vb = \vone - (\vone -\va) \hprod (\vone-\vb)$. Observe that $\psum$ is a symmetric associative operator, and therefore, it makes sense to apply it also to sets of vectors. Formally, given vectors $\va^{(1)}, \va^{(2)}, \dotsc, \va^{(m)}$, we define
\[
	\PSum_{i=1}^{m}\va^{(i)}
	\triangleq
	\va^{(1)} \psum \va^{(2)} \psum \dotso \psum \va^{(m)}
	=
	\vone - \HProd_{i=1}^{m} (\vone - \va^{(i)})
	\enspace.
\]

To avoid using too many parentheses, we assume all the above vector operations have higher precedence compared to normal vector addition and subtraction. Additionally, whenever we have an inequality between vectors, this inequality should be understood to hold coordinate-wise. 

\paragraph{Submodular Functions:}
A set function $f$ over a ground set $\cN$ is a function of the form $f\colon 2^\cN \to \bR$. A set function $f$ is \emph{non-negative} if $f(S) \geq 0$ for every set $S \subseteq \cN$, and it is \emph{monotone} if $f(S) \leq f(T)$ for every two sets $S \subseteq T \subseteq \cN$.
We use the shorthand $f(u \mid S) \triangleq f(S \cup \{u\}) - f(S)$ to denote the marginal contribution of element $u$ to the set $S$. Similarly, given two sets $S, T \subseteq \cN$, we use $f(T \mid S) \triangleq f(S \cup T) - f(S)$ to denote the marginal contribution of $T$ to $S$. A set function $f$ is \emph{submodular} if $f(S) + f(T) \geq f(S \cap T) + f(S \cup T)$ for every two sets $S, T \subseteq \cN$. Following is another definition of submodularity, which is known to be equivalent to the above one.
\begin{definition}
A set function $f\colon 2^\cN \to \bR$ is submodular if $f(u \mid S) \geq f(u \mid T)$ for every two sets $S \subseteq T \subseteq \cN$ and element $u \in \cN \setminus T$.
\end{definition}

We also need the following known Lemma from \cite{buchbinder2014submodular}. 

\begin{lemma}[Lemma~2.2 of~\cite{buchbinder2014submodular}]\label{lem:lowerbound1}
Let $f\colon 2^{\cN}\to \nnR$ be a submodular function, and let $A$ be some subset of $\cN$. Denote by $A(p)$ a random subset of $A$, where each element appears with probability at most $p$ (not necessarily independently). Then, $E[f(A(p))]\geq (1-p)\cdot f(\varnothing)$.
\end{lemma}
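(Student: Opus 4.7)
The plan is to prove the bound through the Lov\'asz extension $\hat{f}\colon [0,1]^{\cN} \to \bR$ of $f$, which admits the integral representation
\[
\hat{f}(\vq) \;=\; \int_{0}^{1} f\bigl(\{u \in \cN : q_u > t\}\bigr)\, dt,
\]
agrees with $f$ on every $0/1$ vector, and (by Lov\'asz's classical theorem) is convex on $[0,1]^{\cN}$ precisely because $f$ is submodular. Applying Jensen's inequality to the random $0/1$ vector $\trueCharacteristic_{A(p)}$, whose mean is the marginal vector $\vq$ of $A(p)$, yields
\[
\bE[f(A(p))] \;=\; \bE[\hat{f}(\trueCharacteristic_{A(p)})] \;\geq\; \hat{f}(\vq),
\]
and crucially this step imposes no independence assumption on the coordinates of $A(p)$.

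Given this reduction, I would finish by showing $\hat{f}(\vq) \geq (1-p)\,f(\varnothing)$. Since $q_u \leq p$ for every $u \in A$ and $q_u = 0$ for $u \in \cN \setminus A$, I would split the defining integral of $\hat{f}(\vq)$ at the threshold $t = p$. For $t \in [p, 1]$ every coordinate of $\vq$ satisfies $q_u \leq p \leq t$, so the level set $\{u : q_u > t\}$ is empty and the integrand equals $f(\varnothing)$; this range therefore contributes exactly $(1-p)\,f(\varnothing)$. For $t \in [0, p]$ the integrand is $f$ evaluated on some subset of $A$, hence non-negative by assumption, so this range contributes at least $0$. Summing the two pieces gives $\hat{f}(\vq) \geq (1-p)\,f(\varnothing)$, and chaining with the Jensen inequality above completes the proof.

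The only substantive step is the passage from the correlated distribution of $A(p)$ to its single vector of marginals $\vq$, which relies on the classical characterization of submodular set functions as exactly those whose Lov\'asz extension is convex. Once this fact is invoked the remainder is a direct split of a scalar integral and an application of the non-negativity of $f$, so the argument is essentially one invocation of a standard theorem plus a one-line computation.
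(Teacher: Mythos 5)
Your argument is correct. Note that the paper itself gives no proof of this statement: it is imported verbatim as Lemma~2.2 of \cite{buchbinder2014submodular}, so there is nothing internal to compare against. Your route --- pass to the Lov\'asz extension $\hat{f}$, use its convexity (equivalent to submodularity of $f$) and Jensen's inequality to replace the correlated distribution of $A(p)$ by its marginal vector $\vq$, then split the defining integral at $t=p$ and use $q_u\leq p$ together with non-negativity of $f$ --- is the standard proof of this fact and every step checks out; in particular, Jensen correctly absorbs the lack of independence, and the range $t\in[p,1]$ contributes exactly $(1-p)\cdot f(\varnothing)$ because the level sets there are empty.
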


For a non-negative submodular function $f$ and every fixed set $T\subseteq \cN$, the function $g(S)=f(S\cup T)$ is also a non-negative submodular function. By applying Lemma~\ref{lem:lowerbound1} to the function $g$ with $A = \cN$, we immediately get the following corollary. 

\begin{corollary}\label{cor:lowerbound}
Let $f\colon 2^{\cN}\to \nnR$ be a submodular function, and let $T$ be some subset of $\cN$. Denote by $A(p)$ a random subset of $\cN$, where each element appears with probability at most $p$ (not necessarily independently). Then, $E[f(A(p))]\geq (1-p)\cdot f(\varnothing)$.
\end{corollary}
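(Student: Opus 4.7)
The plan is to follow exactly the reduction sketched in the paragraph immediately preceding the corollary. The core idea is that Lemma~\ref{lem:lowerbound1} already does all the work; we just need to apply it to a suitably shifted set function so that the base point $\varnothing$ in the lemma plays the role of $T$.

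First, I would introduce the auxiliary function $g\colon 2^{\cN}\to \nnR$ defined by $g(S) = f(S \cup T)$, and check its two required properties. Non-negativity of $g$ is immediate from non-negativity of $f$. For submodularity, I would verify the equivalent marginals condition: for any $S_1 \subseteq S_2 \subseteq \cN$ and $u \in \cN \setminus S_2$, we have $g(u \mid S_1) = f(u \mid S_1 \cup T)$ and $g(u \mid S_2) = f(u \mid S_2 \cup T)$, and since $S_1 \cup T \subseteq S_2 \cup T$, submodularity of $f$ gives $g(u \mid S_1) \geq g(u \mid S_2)$ (with the understanding that if $u \in T$ both marginals are $0$). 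Hence $g$ is a non-negative submodular function on $\cN$.

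Second, I would apply Lemma~\ref{lem:lowerbound1} to $g$ with the choice $A = \cN$. The random set $A(p)$ in the statement of the corollary is already a random subset of $\cN$ in which every element appears with probability at most $p$, so it matches the lemma's hypothesis verbatim (no independence assumed). The lemma therefore yields
\[
    \bE[g(A(p))] \;\geq\; (1-p)\cdot g(\varnothing).
\]
Unwinding the definition of $g$, the left-hand side is $\bE[f(A(p) \cup T)]$ and the right-hand side is $(1-p)\cdot f(T)$, which is the intended conclusion. (As written, the corollary's conclusion does not mention $T$; the shifted-function reduction naturally produces the stronger statement $\bE[f(A(p) \cup T)] \geq (1-p)\cdot f(T)$, recovering the displayed inequality in the special case $T = \varnothing$.)

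There is essentially no obstacle: the argument is a one-line application of the lemma to a shifted function. The only detail worth stating carefully is the inheritance of submodularity by $g$, and even that is a textbook consequence of the diminishing-marginals definition. Consequently, the proof will be two or three lines long.
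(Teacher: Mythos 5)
Your proof is correct and is exactly the paper's argument: the paragraph preceding the corollary defines $g(S)=f(S\cup T)$ and applies Lemma~\ref{lem:lowerbound1} to $g$ with $A=\cN$. You are also right that the displayed conclusion as printed omits $T$; the intended (and actually used) statement is $\bE[f(A(p)\cup T)]\geq (1-p)\cdot f(T)$, which is what your reduction yields.
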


\paragraph{Independence Systems and Matroids:}
An independence system is a pair $(\cN, \cI)$, where $\cN$ is a ground set and $\cI \subseteq 2^\cN$ is a non-empty collection of subsets of $\cN$ that is down-closed in the sense that $T \in \cI$ implies that every set $S \subseteq T$ also belongs to $\cI$. A \emph{matroid} is an independent system that also obeys the following exchange axiom: if $S, T$ are two independent sets such that $|S| < |T|$, then there must exist an element $u \in T \setminus S$ such that $S \cup \{u\} \in \cI$. Following the terminology used for linear spaces, it is customary to refer to the sets of $\cI$ as \emph{independent} sets. 
Independent sets that are inclusion-wise maximal (i.e., they are not subsets of other independent sets) are called \emph{bases}. Given a set $S\subseteq \cN$, we denote by $r_{\cM}(S)\triangleq \max_{T\subseteq S, T\in \cI}|T|$ 
the \emph{rank} of $S$. An immediate corollary of the exchange axiom is that all the bases of a matroid are of size $r_\cM(\cN)$, and thus, every independent set of this size is a base. The size $r_\cM(\cN)$ is also known the rank of the matroid.

Matroids have been extensively studied as they capture many cases of interest, and at the same time, enjoy a rich theoretical structure (see, e.g.,~\cite{schrijver2003combinatorial}). We mention here only the few results that we use. First, given a matroid $\cM= (\cN, \cI)$, the following two operations generate matroids on a subset of its elements.
The first operation is a {\em restriction} of $\cM$ to a subset $A \subseteq \cN$ of elements. The matroid obtained by this operation is denoted by $\cM|_A$. Formally, given a subset $A\subseteq \cN$, $\cM|_A$ is a matoid whose ground set is $A$, and a set $S\subseteq A$ is independent in $\cM|_A$ if it is independent in the original matroid $\cM$. Notice that the rank functions $r_\cM$ and $r_{\cM|_A}$ identify on subsets of $A$.
The second operation is {\em contraction} of a set $A \subseteq \cN$, and the matroid obtained by this operation is denoted by $\cM/A$. Formally, given a subset $A\subseteq \cN$, $\cM/A$ is a matroid whose ground set is $\cN\setminus A$, and a set $S \subseteq \cN \setminus A$ is independent in $\cM / A$ if $r_{\cM}(S \mid A) \triangleq r_\cM(S \cup A) - r_\cM(A) = |S|$. The rank function of $\cM / A$ is $r_{\cM/A}(S)=r_{\cM}(S \mid A)$.

The following lemma describes a useful exchange property of bases of matroids. In this lemma and throughout the paper, given a set $S$ and element $u$, we use $S + u$ and $S - u $ as shorthands for $S \cup \{u\}$ and $S \setminus \{u\}$, respectively.
\begin{lemma}[Proved by~\cite{brualdi1969comments}, and can also be found as Corollary~39.12a in~\cite{schrijver2003combinatorial}] \label{le:perfect_matching_two_bases}
Let $A$ and $B$ be two bases of a matroid $\cM = (\cN, \cI)$. Then, there exists a bijection $h\colon A \setminus B \rightarrow B \setminus A$ such that for every $u \in A \setminus B$, $(B - h(u)) + u \in \cI$.
\end{lemma}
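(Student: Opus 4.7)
The plan is to analyze a threshold-greedy procedure that maintains $\ell$ disjoint output sets in parallel. The algorithm initializes $T_1 = \cdots = T_\ell = \varnothing$, starts with a threshold $\tau$ roughly equal to $\max_u f(\{u\})$, and in each outer iteration reduces $\tau$ by a factor of $(1-\eps)$ until it drops below $\eps \cdot \max_u f(\{u\})/r$. Within a threshold level, the algorithm scans each element $u \in \cN$, performs a single independence query for $T + u$ where $T = \cup_i T_i$, and if the answer is positive it iterates over the $\ell$ buckets $T_j$ and assigns $u$ to the first one whose marginal $f(u \mid T_j)$ meets the current $\tau$. This structure yields $O(\eps^{-1} \log(r/\eps))$ threshold levels, each performing $n$ independence queries and at most $n\ell$ value queries, matching both oracle bounds in the proposition.

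For the monotone approximation guarantee, I would pad $\cN$ with dummy elements of zero marginal value so that the final $T = \cup_j T_j$ and $OPT$ are both bases; Lemma~\ref{le:perfect_matching_two_bases} then furnishes a bijection $h\colon OPT \setminus T \to T \setminus OPT$ with $(T - h(v)) + v \in \cI$ for every $v$. For each such $v$, the fact that $v$ was not admitted by the algorithm in its final pass implies (via the exchange property and the geometric threshold decay) that $f(v \mid T_{j(h(v))}) \leq (1-\eps)^{-1} f(h(v) \mid T_{j(h(v))}^{\text{prev}}) + \tau_{\min}$; summing over $v$ and using $r \cdot \tau_{\min} \leq \eps f(OPT)$ gives
\[
\sum_{v \in OPT \setminus T} f(v \mid T_{j(h(v))}) \leq (1-\eps)^{-1} \sum_{u \in T \setminus OPT} f(u \mid T_{j(u)}^{\text{prev}}) + \eps f(OPT) \enspace.
\]
Combining with the telescoping identity $\sum_{u \in T_j} f(u \mid T_j^{\text{prev}}) = f(T_j) - f(\varnothing)$ applied per bucket, together with the monotone bound $f(OPT) \leq f(OPT \cup T) \leq f(T) + \sum_{v \in OPT \setminus T} f(v \mid T_{j(h(v))})$, produces the monotone claim after a short rearrangement.

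For the non-monotone case the inequality $f(OPT) \leq f(OPT \cup T)$ is no longer free, so I would apply Corollary~\ref{cor:lowerbound} to the submodular function $h(X) = f(OPT \cup X)$ together with the random set $A(1/\ell) = T_J$ for a uniformly random $J \in [\ell]$, obtaining $\bE_J[f(OPT \cup T_J)] \geq (1 - 1/\ell) f(OPT)$. Subadditivity gives $f(OPT \cup T_J) \leq f(T_J) + \sum_{v \in OPT \setminus T_J} f(v \mid T_J)$, and averaging over $J$ produces the $\bE_J[f(T_J)] = \frac{1}{\ell} \sum_j f(T_j)$ term explicitly, together with a cross-bucket sum that can be controlled by the same threshold/exchange argument used for the monotone case. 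Rearranging then gives exactly the $-\frac{1}{\ell} f(OPT)$ slack appearing in the non-monotone bound relative to the monotone one.

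The main obstacle I anticipate is the \emph{assignment step}: the bijection $h$ sends each $v \in OPT \setminus T$ to a single specific bucket $T_{j(h(v))}$, but deriving the $\frac{1}{\ell} \sum_j f(T_j)$ term on the right-hand side — rather than $\sum_j f(T_j)$ — requires the averaging over $J$ sketched above, combined with the subadditive splitting of $f(OPT \cup T_J)$. Keeping the geometric $(1-\eps)$ loss from the threshold ladder coherent with the matroid exchange argument, and then making the $\frac{1}{\ell}$ averaging appear exactly as stated while the additive $\eps f(OPT)$ slack remains bounded by $\eps$ times the right quantity, is where the technical work will concentrate.
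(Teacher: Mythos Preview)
Your proposal does not address the stated lemma at all. Lemma~\ref{le:perfect_matching_two_bases} is a classical matroid exchange property (Brualdi's bijective exchange lemma): given two bases $A$ and $B$, there is a bijection $h\colon A\setminus B \to B\setminus A$ with $(B - h(u)) + u \in \cI$ for every $u \in A\setminus B$. The paper does not prove this; it simply cites~\cite{brualdi1969comments} and Corollary~39.12a of~\cite{schrijver2003combinatorial}. What you have written is instead a sketch of the analysis of the {\AcceleratedSplit} procedure (Proposition~\ref{prop-partition-accelerated}), which \emph{uses} Lemma~\ref{le:perfect_matching_two_bases} as a black box via Corollary~\ref{cor:perfect_matching_two_bases}. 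No part of your argument establishes the existence of the bijection $h$; you assume it on the line ``Lemma~\ref{le:perfect_matching_two_bases} then furnishes a bijection\ldots''.

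If the intent was to prove the lemma itself, the standard route is the single-element exchange property of matroid bases (for every $u\in A\setminus B$ there exists $v\in B\setminus A$ with $B-v+u\in\cI$), which yields a bipartite graph on $(A\setminus B)\times(B\setminus A)$ whose every subset $X\subseteq A\setminus B$ has $|N(X)|\geq |X|$ (otherwise $B\setminus N(X)$ spans $X$, contradicting independence of $A$); Hall's theorem then gives the perfect matching. Nothing resembling this appears in your write-up.
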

One can extend the domain of the function $h$ from the last lemma to the entire set $A$ by defining $h(u) = u$ for every $u \in A \cap B$. This yields the following corollary.
\begin{corollary} \label{cor:perfect_matching_two_bases}
Let $A$ and $B$ be two bases of a matroid $\cM = (\cN, \cI)$. Then, there exists a bijection $h\colon A \rightarrow B$ such that for every $u \in A$, $(B - h(u)) + u \in \cI$ and $h(u) = u$ for every $u \in A \cap B$.
\end{corollary}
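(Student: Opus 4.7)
The plan is to obtain $h$ by taking the bijection $h'\colon A\setminus B \to B\setminus A$ guaranteed by Lemma~\ref{le:perfect_matching_two_bases} and gluing it together with the identity map on $A\cap B$. Concretely, I would define
\[
h(u) \triangleq \begin{cases} h'(u) & \text{if } u \in A\setminus B,\\ u & \text{if } u \in A\cap B. \end{cases}
\]
The domain $A$ splits into the disjoint pieces $A\setminus B$ and $A\cap B$, while the codomain $B$ splits into the disjoint pieces $B\setminus A$ and $A\cap B$. Since $h'$ is a bijection between the first pair and the identity is a bijection between the second pair, $h$ is a bijection $A\to B$. By construction $h(u)=u$ for every $u\in A\cap B$, which takes care of the second requirement of the corollary.

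It remains to verify the exchange property $(B - h(u)) + u \in \cI$ for every $u\in A$. For $u\in A\setminus B$ this is immediate: $h(u)=h'(u)$, so the statement is exactly what Lemma~\ref{le:perfect_matching_two_bases} already provides. For $u\in A\cap B$ we have $h(u)=u$, and therefore $(B - h(u)) + u = (B - u) + u = B$, which is independent because $B$ is a base of $\cM$. This exhausts both cases, so the property holds on all of $A$.

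There is essentially no obstacle here; the only subtlety worth flagging is that the identity extension is consistent with the bijection property precisely because $h'$ sends $A\setminus B$ into $B\setminus A$ (rather than anywhere in $B$), so the extended map never tries to assign two elements of $A$ to the same element of $B$. This observation is what makes the "extend by identity on the intersection" trick legitimate.
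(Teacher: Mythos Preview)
Your proof is correct and follows exactly the approach sketched in the paper: extend the bijection $h'$ of Lemma~\ref{le:perfect_matching_two_bases} by the identity on $A\cap B$, and then check the exchange property separately on $A\setminus B$ and $A\cap B$. Nothing is missing.
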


The matroid polytope $P(\cM)$ of a matroid $\cM = (\cN, \cI)$ is defined as the convex hull of the charactaristic vectors $\trueCharacteristic_S$ of all independent sets $S\in \cI$.
The following is a well-known characterization of $P(\cM)$.
\[
    P(\cM) = \bigg\{\vx\in \nnR ~\bigg|~ \sum_{u\in S}x_u\leq r_{\cM}(S) \;\; \forall S\subseteq \cN\bigg\} \enspace.
\]
Similarily, the base polytope $B(\cM)$ is the convex hull of all the bases of the matroid $\cN$, and is characterized as follows.
\[
    B(\cM) = \bigg\{\vx\in \nnR ~\bigg|~ \sum_{u\in S}x_u\leq r_{\cM}(S) \;\; \forall S\subseteq \cN \text{ and } \sum_{u\in \cN}x_u= r_{\cM}(\cN)\bigg\} \enspace.
\]

\paragraph{Our Problem and Dummy Elements:}
In this paper, we study the following problem. Given a non-negative submodular function $f\colon 2^\cN \to \nnR$ and a matroid $\cM = (\cN, \cI)$ over the same ground set, we would like to find an independent set of $\cM$ that maximizes $f$ among all independent sets of $\cM$. As mentioned above, it is standard in the literature to assume access to the objective function $f$ and the matroid $\cM$ via two oracles. The first oracle is called \emph{value oracle}, and given a set $S \subseteq \cN$ returns $f(S)$. The other oracle is called \emph{independence oracle}, and given a set $S \subseteq \cN$ indicates whether $S \in \cI$. Some works consider also a more powerful oracle for accessing the matroid $\cM$ known as the \emph{rank oracle}. Given a set $S$, this oracle returns $r_\cM(S)$.

We assume, without loss of generality, that the ground set $\cN$ contains a (known) set $D$ of $r_\cM(\cN)$ ``dummy'' elements that have the following properties: the marginal contribution $f(d \mid S)$ of a dummy element $d$ to any set $S \subseteq \cN$ is always $0$, and a set $S$ is independent in $\cM$ if $|S| \leq r_\cM(\cN)$ and $S \setminus D$ is independent in $\cM$. If a set $D$ with these properties does not naturally exists in the ground set, then one can create it by adding $r_\cM(\cN)$ new elements to the ground set and extending the matroid and submodular function to sets involving these new elements using the above stated properties of the dummy elements. Of course, if we add dummy elements in this way, and some algorithm returns an independent set $S$ that includes some of them, then the dummy elements must be removed from $S$ to get a solution that is a subset of the original ground set. Fortunately, this removal is not problematic since it keeps $S$ independent, and does not affect the value of $f(S)$. It is also crucial to note that an oracle query for the extended matroid or submodular function can be implemented using a single oracle query to the original matroid or submodular function. This implies that adding dummy elements does not distort the query complexity of our algorithms.

Dummy elements are useful because adding enough of them to any independent set $S$ of $\cM$ makes $S$ a base, but does not affect $f(S)$. In particular, this allows us to assume that the set $OPT$ that maximizes $f$ subject to the matroid constraint is always a base of the matroid.

\section{The Extended Multilinear Extension}\label{sec:extended_multilinear}

In this section, we recall the definitions of the {\em Extended Multilinear Extension} and  related notation. We then prove some basic properties stemming from these definitions. Standard extensions of set functions assign values to vectors $\vx\in [0,1]^{\cN}$ of dimension $n$ whose coordinates correspond to elements of $\cN$. However, the extended multilinear extension assigns values to vectors $\vy\in \extendedVectorSpace$ of dimension $2^n$ whose coordinates correspond to subsets of $\cN$. To avoid confusion, we usually use a vector $\vy$ to represents a dimension $2^n$ vector, and $\vx$ to denote a vector of dimension $n$. 

\begin{definition}[Extended Multilinear Extension]
Given a set function $f\colon 2^\cN \to \nnR$, the \emph{extended multilinear extension} of $f$ is a function $F\colon \extendedVectorSpace \to \nnR$ such that, for every $\vy \in \extendedVectorSpace$,
\begin{equation}
 F(\vy) \triangleq \sum_{J\subseteq 2^{\cN}} \mspace{-9mu} \Big(f(\cup_{S\in J}S) \cdot \prod_{S\in J}y_S\cdot \prod_{S\in 2^\cN \setminus J} \mspace{-9mu} (1-y_S) \Big)
\enspace.
\end{equation}
\end{definition}
Given a set $S \subseteq \cN$, we use $\characteristic_S$ to denote the vector in $\extendedVectorSpace$ that has $1$ in the coordinate corresponding to the set $S$, and $0$ in all other coordinates (notice that $\characteristic_S$ is a vector in the standard basis of $\extendedVectorSpace\,$).
The extended multilinear extension is an extension of $f$ in the sense that for every set $S \subseteq \cN$, $F(\characteristic_S) = f(S)$. Additionally, it can be observed that when restricted to vectors $\vy$ whose support includes only coordinates corresponding to singleton sets, the extended multilinear extension unifies with the standard multilinear extension.

The extended multilinear extension can also be viewed from a probabilistic perspective. Recall that, for every set $S \subseteq \cN$ and vector $\vy \in \extendedGroundSet$, we have defined $\RSet(\vy, S)$ as a random set that is equal to $S$ with probability $y_S$, and is equal to $\varnothing$ otherwise. Additionally, we have defined $\RSet(\vy) = \cup_{S \subseteq \cN} \RSet(\vy, S)$. Using these definitions, we get $F(\vy) = \bE[f(\RSet(\vy))]$ for every $\vy \in\extendedVectorSpace$. The following is another direct corollary of these definitions.
\begin{observation} \label{obs:psum_union}
For every two vectors $\vy, \vz \in \extendedVectorSpace$, $\RSet(\vy \psum \vz)$ and $\RSet(\vy) \cup \RSet(\vz)$ have the same distribution.
\end{observation}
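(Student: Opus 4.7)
The plan is to work entirely at the level of the underlying Bernoulli indicators that generate each $\RSet(\cdot)$, and then observe that the natural coupling ``take an OR across $\vy$ and $\vz$ coordinate-by-coordinate'' produces precisely the indicators for $\RSet(\vy \psum \vz)$. Concretely, for each $S \subseteq \cN$ let $X_S(\vy)$ be the indicator of the event $\RSet(\vy, S) = S$, so that $X_S(\vy) \sim \text{Bernoulli}(y_S)$ and the collection $\{X_S(\vy)\}_{S \subseteq \cN}$ is mutually independent by construction; define $X_S(\vz)$ analogously and draw the two collections independently. With these indicators,
\[
    \RSet(\vy) \cup \RSet(\vz)
    = \Bigl(\bigcup_{S : X_S(\vy) = 1} S \Bigr) \cup \Bigl(\bigcup_{S : X_S(\vz) = 1} S\Bigr)
    = \bigcup_{S : X_S(\vy) \vee X_S(\vz) = 1} S \enspace.
\]

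Next I would verify that the ``OR indicators'' $Y_S \triangleq X_S(\vy) \vee X_S(\vz)$ have the same joint distribution as the defining indicators of $\RSet(\vy \psum \vz)$. By independence of $X_S(\vy)$ and $X_S(\vz)$, the marginal is
\[
    \Pr[Y_S = 1] = 1 - (1 - y_S)(1 - z_S) = (\vy \psum \vz)_S \enspace,
\]
which matches the defining Bernoulli parameter of coordinate $S$ in $\RSet(\vy \psum \vz)$. Across different $S$, the variables $Y_S$ are independent because the pairs $(X_S(\vy), X_S(\vz))$ are drawn independently over $S$. Hence the joint distribution of $\{Y_S\}_{S \subseteq \cN}$ coincides with the joint distribution of $\{X_S(\vy \psum \vz)\}_{S \subseteq \cN}$.

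Finally, since both $\RSet(\vy) \cup \RSet(\vz)$ and $\RSet(\vy \psum \vz)$ arise by applying the same deterministic map (``union of all $S$ whose indicator is $1$'') to these two equidistributed indicator collections, the two random sets have the same distribution, which is the desired conclusion. I do not expect a real obstacle here; the one place to be careful is to state up front that $\RSet(\vy)$ and $\RSet(\vz)$ are taken to be independent (this is the intended reading of the observation, and without it the statement would not hold), and to note that the indicators across different coordinates $S$ are independent by the definition of $\RSet(\cdot)$ in Section~\ref{sec:extended_multilinear}. The entire argument is a one-paragraph appeal to the standard fact that a union of independent Bernoullis is a Bernoulli with the probabilistic-sum parameter.
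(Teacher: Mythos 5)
Your proof is correct and follows essentially the same route as the paper's: the paper also argues coordinate-by-coordinate that $\RSet(\vy \psum \vz, S)$ and $\RSet(\vy, S) \cup \RSet(\vz, S)$ are each equal to $S$ with probability $(\vy \psum \vz)_S$, relying implicitly on the independence across coordinates that you spell out explicitly. Your added care about the independent coupling of $\RSet(\vy)$ and $\RSet(\vz)$ is consistent with how the paper uses the observation later (e.g., in the proof of Observation~\ref{obs-marginal}).
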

\begin{proof}
The observation holds since, for every set $S \subseteq \cN$, $\RSet(\vy \psum \vz, S)$ and $\RSet(\vy, S) \cup \RSet(\vz, S)$ are both equal to $S$ with probability $(\vy \psum \vz)_S$.
\end{proof}

We use two ways to measure the ``complexity'' of a vector $\vy \in \extendedVectorSpace$. Specifically, we use $\supp(\vy)$ to denote the size of the support of the vector $\vy$ (i.e., the number of non-zero coordinates), and $\ff(\vy)$ to denote the number of coordinates of $\vy$ that take fractional values. The following observation was proved in Section~\ref{sec:introduction}.


\begin{observation}
For every vector $\vy \in \extendedGroundSet$, $\ff(\vy) \leq \supp(\vy)$, and $F(\vy)$ can be evaluated using $2^{\ff(\vy)}$ value oracle queries to $f$.
\end{observation}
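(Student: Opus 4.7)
The first inequality is immediate from the definitions: any coordinate of $\vy$ with a fractional value is in particular nonzero, so every coordinate counted by $\ff(\vy)$ is also counted by $\supp(\vy)$, giving $\ff(\vy) \leq \supp(\vy)$. I would dispense with this in a single sentence.

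For the evaluation claim, the plan is to identify the collections $J \subseteq 2^\cN$ that contribute a nonzero summand to
\[
    F(\vy) = \sum_{J\subseteq 2^{\cN}} \Big(f(\cup_{S\in J}S) \cdot \prod_{S\in J}y_S\cdot \prod_{S\in 2^\cN \setminus J}(1-y_S)\Big).
\]
The product $\prod_{S\in J}y_S\cdot \prod_{S\notin J}(1-y_S)$ vanishes unless (i) every $S$ with $y_S = 0$ is excluded from $J$, and (ii) every $S$ with $y_S = 1$ is included in $J$. Thus the membership of every integral coordinate in $J$ is forced, and $J$ is free to include or exclude each of the $\ff(\vy)$ coordinates with fractional value. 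The number of such collections is therefore at most $2^{\ff(\vy)}$.

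For each of these (at most) $2^{\ff(\vy)}$ collections $J$, computing the corresponding summand requires one value oracle query to evaluate $f(\cup_{S\in J}S)$, together with a product of values of $\vy$ that we already know; all other summands are known to be zero without any query. Summing over these $J$ yields $F(\vy)$ using at most $2^{\ff(\vy)}$ value oracle queries.

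There is no real obstacle here; the only thing to be careful about is stating clearly that the ``forced'' coordinates of $J$ are precisely those where $y_S \in \{0,1\}$, so that the free choices are indexed exactly by the fractional coordinates. I would not need any lemma beyond the definition of $F$ and of $\ff$, and I would not bother deduplicating collections $J$ that happen to produce the same union $\cup_{S \in J} S$, since the bound $2^{\ff(\vy)}$ is already the claimed one.
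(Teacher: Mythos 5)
Your proof is correct and is essentially identical to the paper's argument (which appears in Section~\ref{sec:introduction}, to which the observation's statement in Section~\ref{sec:extended_multilinear} simply refers): the nonzero summands are exactly those $J$ forced on the integral coordinates and free on the $\ff(\vy)$ fractional ones, and each contributes one query. Nothing is missing.
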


Next, we recall the following definition, which gives an explicit name for the marginal probabilities of elements of $\cN$ to belong to the random set $\RSet(\vy)$. 

\begin{definition}[Marginal vector of $\vy$]
Let $\vy \in \extendedVectorSpace$, then the marginal vector $\marg(\vy)\in [0,1]^{\cN}$ is defined by
 \[\marg_u(\vy) \triangleq \Pr[u\in \RSet(\vy))] = 1- \prod_{S\subseteq 2^\cN\!\mid u\in S} \mspace{-18mu}(1-y_S) \quad \forall\; u \in \cN \enspace.\]
\end{definition}

A useful property of the marginal vector is given by the next observation. It is important to note that the operator $\psum$ used on the left hand side of this observation works on vectors of dimension $2^n$, while the operator $\psum$ used on the right hand side of this observation works on vectors of dimension $n$.
\begin{observation}\label{obs-marginal}
Let $\vy^1, \vy^2 \in \extendedVectorSpace$ be vectors such that $\vy^1+\vy^2 \leq \vone$. Then,
\[
     \marg(\vy^1 \psum \vy^2) = \marg(\vy^1) \psum \marg(\vy^2)
		\enspace.
\]
\end{observation}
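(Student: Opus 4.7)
The plan is to prove this by direct calculation, unfolding both $\marg$ and $\psum$ according to their definitions. The payoff is that both sides reduce to the same product over subsets $S \ni u$, and the identity follows purely algebraically; the hypothesis $\vy^1 + \vy^2 \leq \vone$ will not even be invoked explicitly, though it ensures the probabilistic picture remains coherent.

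First I would fix a coordinate $u \in \cN$ and write the left-hand side using the closed form from the definition of $\marg$:
\[
\marg_u(\vy^1 \psum \vy^2) = 1 - \prod_{S \subseteq \cN \mid u \in S} \bigl(1 - (\vy^1 \psum \vy^2)_S\bigr).
\]
By the definition of $\psum$ on $\extendedVectorSpace$, we have $1 - (\vy^1 \psum \vy^2)_S = (1-y^1_S)(1-y^2_S)$, so the right-hand side of the display becomes
\[
1 - \prod_{S \ni u} (1-y^1_S)(1-y^2_S).
\]

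Next I would expand the right-hand side of the claimed identity. Using the definition of $\psum$ on $[0,1]^{\cN}$ (applied coordinate-wise, on dimension $n$ vectors this time) and then the definition of $\marg$ twice:
\[
(\marg(\vy^1) \psum \marg(\vy^2))_u = 1 - (1 - \marg_u(\vy^1))(1 - \marg_u(\vy^2)) = 1 - \prod_{S \ni u}(1-y^1_S) \cdot \prod_{S \ni u}(1-y^2_S).
\]
Commutativity of scalar multiplication lets me merge the two products into a single product of $(1-y^1_S)(1-y^2_S)$ terms, matching the left-hand side exactly.

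I expect no real obstacle here; the only subtlety is the bookkeeping that the two $\psum$ operators in the statement act on spaces of different dimensions ($2^n$ on the left, $n$ on the right), and the proof essentially shows how the product-of-probabilities structure of $\marg$ converts one into the other. As an optional sanity check, I would also note the probabilistic reading: by Observation~\ref{obs:psum_union}, $\RSet(\vy^1 \psum \vy^2)$ is distributed as $\RSet(\vy^1) \cup \RSet(\vy^2)$ (with the two copies drawn independently), so the identity is the familiar inclusion-exclusion for two independent events, which reinforces that the computation above is correct.
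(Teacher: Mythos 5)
Your proof is correct. It takes a slightly different (though closely related) route from the paper: you unfold the definitions of $\marg$ and $\psum$ coordinate-wise and verify the identity by pure algebra, merging $\prod_{S \ni u}(1-y^1_S)\cdot\prod_{S \ni u}(1-y^2_S)$ into a single product, whereas the paper argues probabilistically, writing $\marg_u(\vy^1 \psum \vy^2) = \Pr[u \in \RSet(\vy^1 \psum \vy^2)]$, invoking Observation~\ref{obs:psum_union} to replace $\RSet(\vy^1 \psum \vy^2)$ by $\RSet(\vy^1) \cup \RSet(\vy^2)$, and then using the stochastic independence of $\RSet(\vy^1)$ and $\RSet(\vy^2)$. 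The two arguments are the same computation in different clothing (your ``sanity check'' is literally the paper's proof), but yours has the small advantages of being self-contained and of making explicit that the hypothesis $\vy^1 + \vy^2 \leq \vone$ is never used, while the paper's version reuses the probabilistic machinery it has already established and avoids manipulating the products directly.
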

\begin{proof}
For every $u \in \cN$,
\begin{multline*}
	\marg_u(\vy^1 \psum \vy^2)
	=
	\Pr[u \in \RSet(\vy^1 \psum \vy^2)]
	=
	\Pr[u \in \RSet(\vy^1) \cup \RSet(\vy^2)]\\
	=
	1 - (1 - \Pr[u \in \RSet(\vy^1)])(1 - \Pr[u \in \RSet(\vy^2)])
	=
	1 - (1 - \marg_u(\vy^1))(1 - \marg_u(\vy^2))
	\enspace,
\end{multline*}
where the second equality holds by Observation~\ref{obs:psum_union}, and the third equality holds since $\RSet(\vy^1)$ and $\RSet(\vy^2)$ are stochastically independent of each other.
\end{proof}

Another property of the marginal vector that we prove in Section~\ref{ssc:subroutines} is the following. This property is useful for rounding the extended multilinear extension.
\begin{restatable}{lemma}{lemMargProp}\label{lem:marg_prop}
    Let $\vy\in \extendedVectorSpace$, then $\bar{F}(\marg(\vy)) \geq F(\vy)$, where $\bar{F}$ is the standard multilinear extension. 
\end{restatable}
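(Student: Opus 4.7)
The plan is to induct on $\supp(\vy)$ (the number of non-zero coordinates of $\vy$) while keeping the claim uniformly quantified over all non-negative submodular functions; this uniformity turns out to be essential in the inductive step. The base case $\supp(\vy) = 0$ is immediate since both $F(\vzero)$ and $\bar{F}(\vzero)$ equal $f(\varnothing)$.

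For the inductive step, I would fix some $S^*$ with $y_{S^*} > 0$ and let $\vy'$ agree with $\vy$ except that $y'_{S^*} = 0$. Conditioning on whether the coordinate $S^*$ is activated in the process defining $\RSet(\vy)$ gives
\[
F(\vy) = y_{S^*} \cdot \bE[f(\RSet(\vy') \cup S^*)] + (1 - y_{S^*}) \cdot F(\vy').
\]
Since $\supp(\vy') < \supp(\vy)$, the induction hypothesis applies both to $f$ and to the shifted submodular function $g(A) \triangleq f(A \cup S^*)$, whose extended multilinear extension is exactly the first expectation above. Letting $R'$ denote an independent sample with marginals $\marg(\vy')$, this produces
\[
F(\vy) \leq \bE\bigl[y_{S^*} f(R' \cup S^*) + (1 - y_{S^*}) f(R')\bigr].
\]

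The next step exploits the well-known fact that the standard multilinear extension of a submodular function is concave along every non-negative direction. Applied to the direction $\trueCharacteristic_{S^* \setminus R'}$ for each realization of $R'$, this concavity gives
\[
y_{S^*} f(R' \cup S^*) + (1 - y_{S^*}) f(R') \leq \bar{F}\bigl(\trueCharacteristic_{R'} + y_{S^*} \cdot \trueCharacteristic_{S^* \setminus R'}\bigr).
\]
Taking expectation over $R'$ then gives an upper bound on $F(\vy)$ that I want to identify with $\bar{F}(\marg(\vy))$.

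For this last identification I would couple an independent random set having marginals $\marg(\vy)$ as $R' \cup B$, where $B \subseteq S^*$ is independent of $R'$ and contains each $u \in S^*$ independently with probability $y_{S^*}$. The identity $\marg_u(\vy) = 1 - (1 - y_{S^*})(1 - \marg_u(\vy'))$ for $u \in S^*$, together with $\marg_u(\vy) = \marg_u(\vy')$ for $u \notin S^*$, makes the coordinates of $R' \cup B$ independent with the correct marginals; iterating the expectation (first over $B$ conditional on $R'$, then over $R'$) matches it with $\bE_{R'}[\bar{F}(\trueCharacteristic_{R'} + y_{S^*} \trueCharacteristic_{S^* \setminus R'})]$ and closes the induction. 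The main subtlety—and therefore the main place to be careful—is the double appeal to the induction hypothesis, which must be applied both to $f$ itself and to the shifted function $g$; this is why the inductive statement has to be quantified over all submodular functions rather than just the function $f$ fixed at the outset. Once this is arranged, the remaining manipulations are short.
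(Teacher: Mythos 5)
Your proof is correct, but it takes a genuinely different route from the paper's. The paper proves the lemma as a corollary of its $\Relax$ subroutine: it applies $\Relax(\cdot,u)$ sequentially for every $u\in\cN$, shows (Lemma~\ref{lem:relax}, via a continuous reparametrization of each update and the four-point submodularity inequality $f(T+u)+f(T\cup S)\ge f(T\cup(S+u))+f(T)$) that each application preserves $\marg(\vy)$ and never decreases $F$, and observes that once every element is relaxed, $\RSet(\vy)$ is a product distribution, so $F$ and $\bar F\circ\marg$ coincide. You instead peel off one \emph{set-coordinate} $S^*$ at a time rather than one \emph{element} at a time: you condition on whether $S^*$ fires, invoke the inductive hypothesis for both $f$ and the shifted function $g(A)=f(A\cup S^*)$ (correctly noting that the induction must therefore be quantified over all submodular functions), and then absorb the Bernoulli($y_{S^*}$) choice into independent element-level randomness using concavity of $\bar F$ along non-negative directions; the final coupling $R'\cup B$ and the identity $\marg_u(\vy)=1-(1-y_{S^*})(1-\marg_u(\vy'))$ are exactly right. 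Both arguments ultimately rest on the non-positivity of the mixed second derivatives of the (extended) multilinear extension, but yours is self-contained modulo the standard concavity fact and terminates after $\supp(\vy)$ steps, whereas the paper's is essentially free given that $\Relax$ and its analysis are needed anyway for the deterministic rounding algorithm in Section~\ref{sec:pipage_round}.
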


We need one more definition, which is based on the extended multilinear extension.

\begin{definition}
Given a set function $f\colon 2^{\cN}\to \bR$ and a vector $\vy \in \extendedVectorSpace$, we define the set function $g_\vy \colon 2^{\cN}\to \bR$ by
\begin{equation}
 g_{\vy}(A) \triangleq  F(\characteristic_A \vee \vy) \quad \forall\; A \subseteq \cN \enspace.
\end{equation}
\end{definition}

\begin{observation}\label{obs-dubmodularg}
If $f$ is a non-negative submodular function, then $g_{\vy}(A)=F(\characteristic_A \vee \vy)$ is also a non-negative submodular function. If $f$ is monotone, then $g$ is also monotone.
\end{observation}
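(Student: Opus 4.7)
The plan is to work through the probabilistic interpretation of the extended multilinear extension and then reduce the claim to a routine fact about expectations of submodular functions.

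First, I would give a clean probabilistic formula for $g_{\vy}$. The key identity is that $\RSet(\characteristic_A \vee \vy)$ has the same distribution as $A \cup \RSet(\vy)$. To see this, note that the vector $\characteristic_A \vee \vy$ agrees with $\vy$ in every coordinate except the coordinate indexed by $A$, where its value is $1$. Hence, decomposing $\RSet(\characteristic_A \vee \vy) = \cup_{S \subseteq \cN} \RSet(\characteristic_A \vee \vy, S)$, the $S = A$ summand contributes the set $A$ deterministically, while each other summand has exactly the same distribution as $\RSet(\vy, S)$. Since tossing in the guaranteed $A$ absorbs whatever $\RSet(\vy, A)$ would have contributed, the union has the same distribution as $A \cup \RSet(\vy)$. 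Taking expectations yields
\[
    g_{\vy}(A) = F(\characteristic_A \vee \vy) = \bE\bigl[f(A \cup \RSet(\vy))\bigr] \enspace.
\]

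Non-negativity is then immediate: the integrand is always non-negative. For monotonicity, if $A \subseteq B$, then pointwise (for every outcome of $\RSet(\vy)$) we have $A \cup \RSet(\vy) \subseteq B \cup \RSet(\vy)$, so monotonicity of $f$ implies $f(A \cup \RSet(\vy)) \leq f(B \cup \RSet(\vy))$, and taking expectations gives $g_{\vy}(A) \leq g_{\vy}(B)$.

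For submodularity, I would use the equivalent marginal-decrease form. Fix $A \subseteq B \subseteq \cN$ and $u \in \cN \setminus B$. Conditioning on the realization of $\RSet(\vy)$, I consider two cases. If $u \in \RSet(\vy)$, then $A \cup \RSet(\vy) = (A+u) \cup \RSet(\vy)$ and similarly for $B$, so the marginals $f(u \mid A \cup \RSet(\vy))$ and $f(u \mid B \cup \RSet(\vy))$ are both zero. If $u \notin \RSet(\vy)$, then $A \cup \RSet(\vy) \subseteq B \cup \RSet(\vy)$ and $u$ lies outside the larger set, so submodularity of $f$ gives $f(u \mid A \cup \RSet(\vy)) \geq f(u \mid B \cup \RSet(\vy))$. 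Taking expectations yields $g_{\vy}(u \mid A) \geq g_{\vy}(u \mid B)$, which is the definition of submodularity.

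The argument is essentially a one-line reduction once the distributional identity $\RSet(\characteristic_A \vee \vy) \stackrel{d}{=} A \cup \RSet(\vy)$ is established, so the only genuine content is verifying that identity carefully. There is no real obstacle; the main thing to be careful about is not confusing the special role of the $A$-coordinate of $\vy$ with the effect of other coordinates $y_S$ for $S$ containing elements of $A$ (these still fire independently and contribute to $\RSet(\vy)$ as usual, but are then subsumed by the deterministic $A$).
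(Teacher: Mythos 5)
Your proof is correct and is essentially the paper's argument in probabilistic dress: the paper writes $g_{\vy}(A)$ as a non-negative weighted sum over collections $J$ of the submodular functions $A \mapsto f(A \cup (\cup_{S \in J} S))$, which is exactly your identity $g_{\vy}(A) = \bE[f(A \cup \RSet(\vy))]$ followed by fixing the realization and averaging (and your absorption of $\RSet(\vy,A)$ into the deterministic $A$ matches the paper's merging of the terms for $J$ and $J \cup \{A\}$). Your explicit treatment of monotonicity is a small bonus the paper leaves implicit.
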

\begin{proof}
To see why the first part of the observation holds, notice that
\begin{align*}
	g_{\vy}(A)
	=
	F(\characteristic_A \vee \vy)
	={} &
	\sum_{J\subseteq 2^\cN - A} \bigg[f(A \cup (\cup_{S \in J} S)) \cdot \prod_{S \in J} y_S \cdot \prod_{S \in 2^\cN \setminus J - A} \mspace{-18mu} (1 - y_S)\bigg]\\
	={} &
	\sum_{J\subseteq 2^\cN} \bigg[f(A \cup (\cup_{S \in J} S)) \cdot \prod_{S \in J} y_S \cdot \prod_{S \in 2^\cN \setminus J} \mspace{-9mu} (1 - y_S)\bigg]
	\enspace,
\end{align*}
where the last equality holds since the term in the first sum corresponding to a set $J$ is equal to the sum of the two terms in the second sum corresponding to the sets $J$ and $J \cup \{A\}$.

Notice now that the term corresponding to every set $J$ on the last sum is the product of $f(A \cup (\cup_{S \in J} S))$, which is a non-negative submodular function of $A$, and a non-negative expression that is independent of $A$. Thus, $g$ is non-negative and submodular because it is a weighed sum of functions having these properties.
\end{proof}

\subsection{Properties of the Extended Multilinear Extension} \label{ssc:properties}

In this section, we prove several useful properties of the extended multilinear extension. We first state properties of the extended multilinear extension that hold even if the underlying set function does not obey any particular property.

\begin{lemma} \label{lem:basic_properties_extension}
Let $F$ be the extended multilinear extesion of an arbitrary set function $f\colon 2^\cN \to \bR$. Then, for every vector $\vy \in \extendedVectorSpace$,
\begin{align}\nonumber
  \frac{\partial^2 F}{(\partial y_S)^2}(\vy) ={}& 0 & \forall S\subseteq \cN \enspace. \\
  (1-y_S) \cdot \frac{\partial F(\vy)}{\partial y_S}={}&  F(\characteristic_{S}\vee \vy)- F(\vy)& \forall S \subseteq \cN, y_S\in[0,1] \enspace. \label{eq-derivative} \\
   F\left(\vy \psum \vz\right) 
 ={} &  \sum_{J\subseteq 2^\cN} \mspace{-9mu} \Big( F\big(\vy \vee \sum\nolimits_{S\in J}\characteristic_{S}\big) \cdot \prod_{S\in J}z_S\cdot \prod_{S\in 2^\cN \setminus J} \mspace{-9mu} (1-z_S)\Big)  & \forall \vz\in \extendedVectorSpace \enspace. \label{ineq-psum}
  \end{align}
\end{lemma}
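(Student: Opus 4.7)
My plan is to dispatch the three claims in increasing order of difficulty, all by reading appropriate properties directly off the defining formula or the probabilistic interpretation.

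For the first identity, I would simply observe from the defining sum that each coordinate $y_S$ appears in every term of the sum at most once: either as $y_S$ (when $S \in J$) or as $(1-y_S)$ (when $S \notin J$), but never squared or higher. Thus $F$ is a multilinear polynomial in the variables $\{y_S\}_{S \subseteq \cN}$, and in particular it is affine in each single $y_S$, which makes $\partial^2 F / (\partial y_S)^2$ identically zero.

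For the second identity, I would exploit the same multilinearity to write, for a fixed $S$,
\[
  F(\vy) = y_S \cdot A(\vy_{-S}) + (1-y_S) \cdot B(\vy_{-S}),
\]
where $\vy_{-S}$ denotes all coordinates except $y_S$. Setting $y_S = 1$ shows $A(\vy_{-S}) = F(\characteristic_S \vee \vy)$ (since $(\characteristic_S \vee \vy)_S = 1$ while all other coordinates agree with $\vy$). Differentiating in $y_S$ gives $\partial F / \partial y_S = A - B$, so
\[
(1-y_S)\cdot \tfrac{\partial F}{\partial y_S}(\vy) = (1-y_S)A - (1-y_S)B = A - \bigl[y_S A + (1-y_S)B\bigr] = F(\characteristic_S \vee \vy) - F(\vy).
\]

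The third identity is where I expect the only real thought, and I would handle it via the probabilistic viewpoint. Starting from $F(\vy \psum \vz) = \bE[f(\RSet(\vy \psum \vz))]$ and applying Observation~\ref{obs:psum_union} gives $F(\vy \psum \vz) = \bE[f(\RSet(\vy) \cup \RSet(\vz))]$, where the two random sets are independent. I would then condition on the ``outcome pattern'' of $\RSet(\vz)$: by definition of $\RSet(\vz)$ as $\cup_{S} \RSet(\vz,S)$ with the $\RSet(\vz,S)$ independent, the pattern is encoded by the set $J = \{S : \RSet(\vz, S) = S\}$, which occurs with probability $\prod_{S \in J} z_S \prod_{S \notin J}(1-z_S)$, and conditional on $J$ we have $\RSet(\vz) = \cup_{S \in J} S$ deterministically. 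Plugging this in yields the claimed outer sum; the remaining step is the identification
\[
  \bE\bigl[f\bigl(\RSet(\vy) \cup (\cup_{S \in J} S)\bigr)\bigr] = F\bigl(\vy \vee \textstyle\sum_{S \in J} \characteristic_S\bigr),
\]
which I would verify by noting that the vector $\vy \vee \sum_{S \in J}\characteristic_S$ has coordinate $1$ at each $S \in J$ and $y_S$ otherwise, so that $\RSet(\vy \vee \sum_{S \in J}\characteristic_S)$ has the same distribution as $(\cup_{S \in J} S) \cup \bigcup_{S \notin J} \RSet(\vy, S)$, which equals $\RSet(\vy) \cup (\cup_{S \in J} S)$ because each $\RSet(\vy, S) \subseteq S$ is absorbed by $\cup_{S \in J} S$ when $S \in J$. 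The main (minor) obstacle is simply keeping the bookkeeping of which coordinates are being conditioned on versus being left random; once the conditioning is organized cleanly, the three identities drop out with no further computation.
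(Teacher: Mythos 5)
Your proposal is correct and follows essentially the same route as the paper: the first two identities from multilinearity of the defining polynomial (which the paper simply asserts and you unpack in the standard way), and the third via the probabilistic interpretation, Observation~\ref{obs:psum_union}, and conditioning on the realization pattern $J$ of $\RSet(\vz)$, with the final identification resting on the fact that $\vee$ against an integral vector matches the corresponding $\psum$.
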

\begin{proof}
The first two properties follow immediately from the fact that $F$ is a multilinear function by definition. Similar properties are often used with the standard multilinear extension. Property~\eqref{ineq-psum} holds because
\begin{align*}
	F(\vy \psum \vz)
	={} &
	\bE[f(\RSet(\vy \psum \vz))]
	=
	\bE[f(\RSet(\vy) \cup \RSet(\vz))]\\
	={} &
	\sum_{J \subseteq 2^\cN} \Big[\bE[f(\RSet(\vy) \cup (\cup_{S \in J} S))] \cdot \prod_{S \in J} z_S \cdot \prod_{S \in 2^\cN \setminus J} \mspace{-9mu} (1 - z_S) \Big]\\
	={} &
	\sum_{J \subseteq 2^\cN} \Big[\bE\big[f\big(\RSet(\vy) \cup \RSet\big(\sum\nolimits_{S \in J} \characteristic_S\big)\big)\big] \cdot \prod_{S \in J} z_S \cdot \prod_{S \in 2^\cN \setminus J} \mspace{-9mu} (1 - z_S) \Big]\\
	={} &
	\sum_{J \subseteq 2^\cN} \Big[\bE\big[f\big(\RSet\big(\vy \vee \sum\nolimits_{S \in J} \characteristic_S\big)\big)\big] \cdot \prod_{S \in J} z_S \cdot \prod_{S \in 2^\cN \setminus J} \mspace{-9mu} (1 - z_S) \Big]\\
	={} &
	\sum_{J \subseteq 2^\cN} F\big(\vy \vee \sum\nolimits_{S \in J} \characteristic_S\big) \cdot \prod_{S \in J} z_S \cdot \prod_{S \in 2^\cN \setminus J} \mspace{-9mu} (1 - z_S)
	\enspace,
\end{align*}
where the third equality holds by applying the law of total expectation to the random bits determining $\RSet(z)$, and the penultimate equality holds since the operations $\psum$ and $\vee$ are identical when at least one of their arguments is an integral vector.
\end{proof}

We now state some properties of the extended multilinear extension that hold when the underlying set function is known to be non-negative and submodular. All these properties are analogous to known properties of the multilinear extension, and the proofs of some of them are very close to the proofs of~\cite{calinescu2011maximzing} for their corresponding multilinear extension versions. Note that some of these properties only hold when $S$ and $T$ are disjoint. These properties hold, in particular, when $S$ and $T$ are (distinct) singletons. 

\begin{lemma}\label{lem:extendedF}
Let $F$ be the extended multilinear extesion of a non-negative submodular set function $f\colon 2^\cN \to \bR$. Then, for every vector $\vy \in \extendedVectorSpace$,
\begin{align}
  F(\characteristic_S \vee \vy)\geq{}& (1-\|\marg(\vy)\|_\infty)\cdot f(S) & \forall S\subseteq \cN\label{ineq-main-opt}\\\nonumber
  \frac{\partial^2 F}{\partial y_S\partial y_T}(\vy) \leq{}&  0 & \forall S,T \subseteq  \cN, S\cap T =\varnothing
\end{align} 
Furthermore, for any two disjoint sets $S,T\subseteq \cN$, $F$ is convex along any line of direction $\vd=\characteristic_{S}-\characteristic_{T}$. 
\end{lemma}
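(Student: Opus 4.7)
The three claims are all ``lifts'' of standard multilinear-extension facts, so the strategy is to reduce each to a statement about the random set $\RSet(\vy)$ and then invoke a tool that was already proved for ordinary submodular set functions.

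For inequality~\eqref{ineq-main-opt}, I would pass to the shifted function $g(A) \triangleq f(S \cup A)$, which is non-negative and submodular because $f$ is. Writing $F(\characteristic_S \vee \vy) = \bE[f(S \cup \RSet(\vy))] = \bE[g(\RSet(\vy))]$, I note that each element $u \in \cN$ appears in $\RSet(\vy)$ with probability exactly $\marg_u(\vy) \leq \|\marg(\vy)\|_\infty$. Applying Corollary~\ref{cor:lowerbound} to $g$ with $p = \|\marg(\vy)\|_\infty$ immediately yields $\bE[g(\RSet(\vy))] \geq (1 - \|\marg(\vy)\|_\infty) \cdot g(\varnothing) = (1 - \|\marg(\vy)\|_\infty) \cdot f(S)$.

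For the second derivative inequality, I will exploit multilinearity: since $F$ is affine in each coordinate $y_S$ separately, the mixed second partial is independent of $y_S$ and $y_T$, and equals
\[
    \frac{\partial^2 F}{\partial y_S \partial y_T}(\vy) = F(\vy_{S=1,T=1}) - F(\vy_{S=1,T=0}) - F(\vy_{S=0,T=1}) + F(\vy_{S=0,T=0}),
\]
where $\vy_{S=a,T=b}$ denotes $\vy$ with coordinate $y_S$ reset to $a$ and $y_T$ to $b$. Letting $R = \RSet(\vy_{S=0,T=0})$, the four terms evaluate to $\bE[f(R)]$, $\bE[f(S \cup R)]$, $\bE[f(T \cup R)]$, and $\bE[f(S \cup T \cup R)]$ respectively. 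Because $S \cap T = \varnothing$, applying the $f(X)+f(Y) \geq f(X\cup Y) + f(X\cap Y)$ form of submodularity to $X = S \cup R$ and $Y = T \cup R$ gives $(X \cap Y) = R$ and $(X \cup Y) = S \cup T \cup R$, so the integrand is non-positive pointwise, and the mixed partial is $\leq 0$ after taking expectations.

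Finally, for convexity along $\vd = \characteristic_S - \characteristic_T$ with $S \cap T = \varnothing$, I will simply expand
\[
    \frac{d^2}{dt^2} F(\vy + t\vd) = \frac{\partial^2 F}{\partial y_S^2} - 2\frac{\partial^2 F}{\partial y_S \partial y_T} + \frac{\partial^2 F}{\partial y_T^2}.
\]
The pure second partials vanish by the first identity of Lemma~\ref{lem:basic_properties_extension}, and the mixed partial is non-positive by the part just established, so the whole expression is non-negative, which is convexity along $\vd$. None of the three steps poses a genuine obstacle; the only point that needs mild care is correctly translating the integer-valued submodularity inequality (for disjoint $S,T$) into the coordinate-indexed mixed-partial formula, where $S$ and $T$ are simultaneously names of coordinates of $\vy$ and of the subsets they index in $\cN$.
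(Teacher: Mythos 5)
Your proof is correct and takes essentially the same route as the paper's: inequality~\eqref{ineq-main-opt} is exactly the paper's application of Corollary~\ref{cor:lowerbound} to the shift of $f$ by $S$, the mixed-partial bound is the paper's explicit sum over collections $J$ rewritten as an expectation over $R = \RSet(\vy_{S=0,T=0})$ followed by the same pointwise submodularity inequality, and the convexity argument via the chain rule and the vanishing pure second partials is identical. (The only blemish is that your list of the four corner values appears in the reverse order of the terms in your displayed formula; the submodularity step you then apply is nonetheless the correct one, so this is purely cosmetic.)
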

\begin{proof}
We begin by proving the first property. Notice that the probability of any element $u \in \cN$ to belong to $\RSet(\vy)$ is exactly $\marg_u(\vy)$. Thus, we get by Corollary~\ref{cor:lowerbound} that
\[
	F(\characteristic_S \vee \vy)
	=
	\bE[f(\RSet(\characteristic_S \vee \vy))]
	=
	\bE[f(S \cup \RSet(\vy))]
	\geq
	(1-\|\marg(\vy)\|_\infty) \cdot f(S)
	\enspace.
\]

For every two disjoint sets $S, T \subseteq \cN$ and a third set $A \subseteq \cN$, the submodularity of $f$ guarantees that
$
	f(S \cup T \cup A) + f(A) - f(S \cup A) - f(T \cup A)
	\leq
	0
$.
Thus, for such sets $S$ and $T$, the second order derivative of $F$ with respect to the variables $y_S$ and $y_T$ is
\begin{multline*}
	\frac{\partial^2 F}{\partial y_S\partial y_T}(\vy)
	=
	\sum_{J\subseteq 2^{\cN} \setminus \{S, T\}} \mspace{-27mu} \Big([f(S \cup T \cup (\cup_{A\in J}A)) + f(\cup_{A\in J}A) - f(S \cup (\cup_{A\in J}A)) \\- f(T \cup (\cup_{A\in J}A))] \cdot \prod_{A\in J}y_A\cdot \mspace{-18mu}\prod_{A\in 2^\cN \setminus (J \cup \{S, T\})} \mspace{-36mu} (1-y_A) \Big)
	\leq
	0
	\enspace.
\end{multline*}

It remains to prove the last part of the lemma. Consider two disjoints sets $S, T \subseteq \cN$. For values of $t$ for which $\vy + t \cdot (\characteristic_S - \characteristic_T) \in \extendedVectorSpace$, by using the chain rule twice we get that the second derivative of $F(\vy + t \cdot (\characteristic_S - \characteristic_T))$ with respect to $t$ is equal to 
\[
	\left.\frac{\partial^2 F(\vz)}{\partial^2 z_S}\right|_{\vz = \vy + t \cdot (\characteristic_S - \characteristic_T)} + \left.\frac{\partial^2 F(\vz)}{\partial^2 z_T}\right|_{\vz = \vy + t \cdot (\characteristic_S - \characteristic_T)} - 2 \cdot\left.\frac{\partial^2 F(\vz)}{\partial z_S \partial z_T}\right|_{\vz = \vy + t \cdot (\characteristic_S - \characteristic_T)}
	\enspace.
\]
We have already proved that the last term in this expression is non-positive, and the first two terms are $0$ by Lemma~\ref{lem:basic_properties_extension}. Thus, the second order derivative of $F(\vy + t \cdot (\characteristic_S - \characteristic_T))$ by $t$ is positive, which implies that $F$ is convex in the direction $\vd = \characteristic_S - \characteristic_T$.
\end{proof}

\section{Deterministic Measured Continuous Greedy}\label{sec:measured}

In this section we prove Theorem~\ref{thm:deterministicMeasured}, which we repeat here for convenience.
\thmDeterministicMeasured*

We begin the proof of Theorem~\ref{thm:deterministicMeasured} by presenting and analyzing, in Section~\ref{sec:split}, a simple greedy procedure termed {\Split} that may be of independent interest. Then, in Section~\ref{sec:measured2} we use this procedure (or more accurately, the accelerated version of it from Proposition~\ref{prop-partition-accelerated}) to get a deterministic version of the Measured Continuous Greedy algorithm of~\cite{feldman2011unified}, and we show that this deterministic version obeys all the properties stated in the first part of Theorem~\ref{thm:deterministicMeasured}. Finally, we prove the second part of Theorem~\ref{thm:deterministicMeasured} in Section \ref{sec:measured_decomposition}. 

\subsection{The \texorpdfstring{\Split}{\SplitText} Algorithm}\label{sec:split}

In this section, we present and analyze a simple algorithm termed {\Split} that may be of independent interest for other applications. In addition to the function $f$ and the matroid $\cM$, the algorithm {\Split} gets an additional integer parameter $\ell\geq 1$. It generates sets $T_1, T_2, \dotsc, T_{\ell}$ that have three properties: they are disjoint, their union is a base of $\cM$, and the sum of their values is relatively large. The {\Split} algorithm is formally stated as Algorithm~\ref{alg:greedyPartition}. It starts by initializing all the sets $T_1, T_2, \dotsc, T_{\ell}$ to be empty, and then iteratively adding elements to this set. In each iteration, the decision what element to add, and to which set, is done greedily among the options that preserve the disjointness of the sets $T_1, T_2, \dotsc, T_{\ell}$ and the independence of their union.

\begin{algorithm}
\caption{\Split$(\cM, f, \ell)$}\label{alg:greedyPartition}
Initialize: $T_1\gets\varnothing, T_2\gets\varnothing, \dotsc, T_{\ell}\gets\varnothing$.\\
Use $T$ to denote the union $\cup_{j=1}^{\ell}T_j$.\\
\While{$T$ is not a base of $\cM$\label{line:condition}}{
	Let $\cN' \gets \{u\in \cN\setminus T \mid T+u\in \cI\}$.\\\label{line:option_elements}
	Let $(u,j) \in \arg \max_{(u, j) \in \cN' \times [\ell]} f(u \mid T_j)$.\\
 $T_j \gets T_j+u$.
}
\Return $(T_1, \ldots, T_{\ell})$.
\end{algorithm}

We begin the analysis of {\Split} by making sure that it is well-defined.

\begin{observation} \label{lem:split_defined}
The set $\cN'$ is non-empty in all iterations of Algorithm~\ref{alg:greedyPartition}, and therefore, this algorithm is well-defined. Furthermore, the sets $T_1, T_2, \ldots, T_{\ell} \subseteq \cN$ remain disjoint throughout the execution of the algorithm, and their union $T$ remains independent.
\end{observation}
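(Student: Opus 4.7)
The plan is to establish all three claims simultaneously by induction on the iteration count of the while loop, keeping track of two invariants: (i) the sets $T_1,\ldots,T_{\ell}$ are pairwise disjoint, and (ii) their union $T$ is independent in $\cM$. Non-emptiness of $\cN'$ will then follow as a consequence of these invariants together with the loop condition.

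For the base case, before the first iteration every $T_j$ is empty, so disjointness is trivial and $T=\varnothing\in\cI$ by the definition of an independence system. For the inductive step, I would assume that invariants (i) and (ii) hold at the start of an iteration where the loop condition is triggered, i.e.\ $T$ is independent but is \emph{not} a base of $\cM$. A standard corollary of the matroid exchange axiom (recalled in Section~\ref{sec:preliminaries}) is that all bases have the same size $r_\cM(\cN)$, so $|T|<r_\cM(\cN)$. Taking any base $B$ of $\cM$ and applying the exchange axiom to the independent sets $T$ and $B$ yields an element $u\in B\setminus T\subseteq\cN\setminus T$ with $T+u\in\cI$, showing $u\in\cN'$ and hence $\cN'\neq\varnothing$. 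Thus the $\arg\max$ on the next line is well-defined and the algorithm can proceed.

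It remains to check that the invariants are preserved by the update. The chosen element $u$ lies in $\cN'\subseteq\cN\setminus T=\cN\setminus\bigcup_{j'=1}^{\ell}T_{j'}$, so $u$ does not belong to any existing $T_{j'}$; adding it to $T_j$ therefore keeps the family $T_1,\ldots,T_{\ell}$ pairwise disjoint, restoring (i). The updated union equals $T+u$, which lies in $\cI$ by the definition of $\cN'$, restoring (ii). This completes the induction, giving disjointness and independence throughout the run, along with non-emptiness of $\cN'$ in every executed iteration; in particular the algorithm is well-defined.

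I do not expect any genuine obstacle here: the only non-trivial ingredient is the exchange axiom, and the rest is a direct bookkeeping induction. The only subtlety worth flagging explicitly in the write-up is the step ``$T$ not a base $\Rightarrow$ $|T|<r_\cM(\cN)$,'' which uses the fact (recalled in Section~\ref{sec:preliminaries}) that in a matroid all maximal independent sets have the same cardinality, so non-maximality of $T$ is equivalent to $|T|<r_\cM(\cN)$.
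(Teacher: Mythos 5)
Your proposal is correct and follows essentially the same route as the paper's proof: an induction on iterations maintaining disjointness and independence of the union, with non-emptiness of $\cN'$ deduced from the fact that an independent set that is not a base can be extended (which the paper attributes to ``the matroid properties'' and you spell out via the exchange axiom). The extra detail you provide is fine but not a different argument.
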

\begin{proof}
The observation follows by a simple induction on the iterations of the algorithm. Clearly, all properties hold before the first iteration. In any given iteration, the set $\cN'$ is not empty by the matroid properties because $T=\cup_{j = 1}^\ell T_j$ is independent (by the induction hypothesis) and is not a base (by the condition of the loop). The other properties are maintained since the algorithm adds an element $u \in \cN'$ to single set, and by $\cN'$'s definition $u$ is not already contained in $T=\cup_{j = 1}^\ell T_j$ and does not violate the independence of $T$ when added. 
\end{proof}

Lemma~\ref{lem-partition} summerizes the properties of {\Split} that we use. However, before getting to this lemma, we observe that for $\ell=1$, {\Split} reduces to the standard greedy algorithm, and for $\ell=2$, {\Split} reduces to the (symmetric version) of the Split algorithm used in~\cite{BFG23} for the monotone version of the problem we consider. For these two cases, our analysis in Lemma~\ref{lem-partition} recovers the previously known guarantees. We also would like to remark that when $\ell$ is set to be the rank $r$ of the matroid $\cM$, {\Split} selects a maximum weight base, where the weight of an element $u$ is defined as $f(\{u\})$.

\begin{lemma}\label{lem-partition}
Algorithm~\ref{alg:greedyPartition} makes $O(nr\ell)$ value oracle queries and $O(nr)$ independence oracle queries. 
    It produces disjoint sets $T_1, T_2, \ldots, T_{\ell} \subseteq \cN$ such that their union $T= \cup_{j=1}^{\ell}T_j$ is a base of $\cM$, and
\[
    \sum_{j = 1}^{\ell} f(T_j \mid \varnothing)
    \geq
    \max\Big\{\Big(1 - \frac{1}{\ell}\Big) \cdot f(OPT) - \frac{1}{\ell} \sum_{j=1}^{\ell}f(T_j), 0\Big\} \enspace.
\]
Furthermore, if $f$ is monotone, then the last inequality improves to
\[
    \sum_{j = 1}^{\ell} f(T_j \mid \varnothing)
    \geq \max\Big\{f(OPT) - \frac{1}{\ell} \sum_{j=1}^{\ell}f(T_j), 0\Big\}\enspace.
\]
\end{lemma}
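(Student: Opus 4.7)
The plan is to combine the standard matroid-exchange argument for greedy with a disjointness-based averaging argument that handles the non-monotone slack. As is customary, and justified by the dummy elements from the preliminaries, I take $OPT$ to be a base of $\cM$, and I take $f(\varnothing)=0$ (the usual normalization). The query and output-shape claims are routine: the while loop runs for exactly $r_\cM(\cN)$ iterations (one element added each), each iteration uses $O(n)$ independence queries to build $\cN'$ and $O(n\ell)$ value queries to evaluate $f(u\mid T_j)$ over the candidate pairs (caching $f(T_j)$ across an iteration), while disjointness of the $T_j$'s and independence of $T$ are already provided by Observation~\ref{lem:split_defined}.

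For the main inequality I would first invoke Corollary~\ref{cor:perfect_matching_two_bases} to obtain a bijection $h\colon OPT \to T$ with $(T-h(o))+o\in\cI$ for every $o\in OPT$. Fix $o\in OPT$, let $t(o)$ be the iteration in which $u(o):=h(o)$ is added to $T_{j(o)}$, and write $T^{(t(o))}$ and $T_{j'}^{(t(o))}$ for the states at the start of that iteration. Since $T^{(t(o))}\subseteq T-h(o)$, downward closure of $\cI$ gives $T^{(t(o))}+o\in\cI$, so $o$ is a valid candidate at iteration $t(o)$ and the greedy selection rule together with submodularity yields, for every $j'\in[\ell]$,
\[
 f\bigl(u(o)\mid T_{j(o)}^{(t(o))}\bigr)\;\geq\;f\bigl(o\mid T_{j'}^{(t(o))}\bigr)\;\geq\;f(o\mid T_{j'}).
\]
Summing over $j'\in[\ell]$ and then over $o\in OPT$, and using that $h$ is a bijection so that the sum of the $u(o)$-gains over $o$ collapses to the total greedy gain $\sum_t f(u_t\mid T_{j_t}^{(t)})=\sum_{j}f(T_j\mid\varnothing)$, I then apply the standard submodular chain bound $\sum_{o\in OPT}f(o\mid T_{j'})\geq f(OPT\cup T_{j'})-f(T_{j'})$ on the right-hand side to obtain
\[
 \ell\sum_{j=1}^{\ell}f(T_j\mid\varnothing)\;\geq\;\sum_{j'=1}^{\ell}\bigl[f(OPT\cup T_{j'})-f(T_{j'})\bigr].
\]

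The two regimes of the lemma then come from two different lower bounds on $\sum_{j'}f(OPT\cup T_{j'})$. For monotone $f$ I use $f(OPT\cup T_{j'})\geq f(OPT)$ termwise, immediately yielding $\sum_j f(T_j\mid\varnothing)\geq f(OPT)-\tfrac{1}{\ell}\sum_j f(T_j)$. For general non-negative submodular $f$, the crux, and the step I expect to be the main obstacle, is to show $\sum_{j'=1}^{\ell}f(OPT\cup T_{j'})\geq(\ell-1)f(OPT)$. I would derive this by letting $J$ be uniform on $[\ell]$ and $A=T_J$; disjointness of the $T_j$'s ensures every element of $\cN$ lies in $A$ with probability at most $1/\ell$, so applying Corollary~\ref{cor:lowerbound} to the non-negative submodular function $g(S)=f(S\cup OPT)$ gives $\tfrac{1}{\ell}\sum_j f(T_j\cup OPT)=\bE[g(A)]\geq(1-\tfrac{1}{\ell})g(\varnothing)=(1-\tfrac{1}{\ell})f(OPT)$, which is exactly the desired bound. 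Plugging back into the master inequality yields $\sum_j f(T_j\mid\varnothing)\geq(1-\tfrac{1}{\ell})f(OPT)-\tfrac{1}{\ell}\sum_j f(T_j)$. The outer $\max\{\cdot,0\}$ is immediate from $f(\varnothing)=0$ and non-negativity, since then $\sum_j f(T_j\mid\varnothing)=\sum_j f(T_j)\geq 0$; the only bookkeeping subtlety is confirming that $o$ remains a valid option at iteration $t(o)$, which is handled above via $T^{(t(o))}\subseteq T-h(o)$ and Corollary~\ref{cor:perfect_matching_two_bases}.
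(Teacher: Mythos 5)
Your argument is, in substance, the paper's own proof reorganized: the same bijection from Corollary~\ref{cor:perfect_matching_two_bases}, the same greedy-exchange inequality averaged over the $\ell$ parts, and the same endgame (termwise monotonicity in the monotone case versus Corollary~\ref{cor:lowerbound} applied to $g(S)=f(S\cup OPT)$ with a uniformly random part $T_J$ in the general case). The only structural difference is bookkeeping: the paper telescopes over a shrinking set $OPT^i$, whereas you pass to the final sets $T_{j'}$ by submodularity and then apply the chain bound $\sum_{o\in OPT} f(o\mid T_{j'})\ge f(OPT\cup T_{j'})-f(T_{j'})$. These routes produce the identical master inequality.

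Two small points need repair, and both come down to the same missing observation. First, taking $f(\varnothing)=0$ is not a harmless normalization here: subtracting $f(\varnothing)$ can destroy non-negativity (which Corollary~\ref{cor:lowerbound} requires) and shifts $f(OPT)$ and the $f(T_j)$ appearing in the stated bound, so the outer $\max\{\cdot,0\}$ is not ``immediate.'' The correct justification is the one the paper gives: a dummy element with zero marginal is always a feasible candidate, so the greedy never selects a pair with negative marginal, and hence each $f(T_j\mid\varnothing)$ is a sum of non-negative increments. Second, in the chain $f(u(o)\mid T_{j(o)}^{(t(o))})\ge f(o\mid T_{j'}^{(t(o))})\ge f(o\mid T_{j'})$, the index $j'=j(o)$ in the case $o\in OPT\cap T$ (so $u(o)=o\in T_{j(o)}$) is not covered by submodularity, since $o$ belongs to the final $T_{j(o)}$; there $f(o\mid T_{j(o)})=0$, and what you actually need is that the selected marginal $f(o\mid T_{j(o)}^{(t(o))})$ is non-negative --- again the dummy-element observation (the paper's Case~2 handles exactly this by discarding the $j=j_i$ term and multiplying by $1-\nicefrac{1}{\ell}$). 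With these two one-line fixes the proof is complete.
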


\begin{proof}
In every iteration, Algorithm~\ref{alg:greedyPartition} adds to $T$ some element $u \in \cN'$ (the set $\cN'$ is non-empty by Lemma~\ref{lem:split_defined}). By the definition of $\cN'$, the element $u$ does not belong to $T$ before the addition, and therefore, the size of the set $T$ increases by $1$ in every iteration. 
Since $T$ remains an independent set of $\cM$ by Lemma~\ref{lem:split_defined}, we get that the set $T$ becomes a base of $\cM$ after $r$ iterations, which makes Algorithm~\ref{alg:greedyPartition} terminate after this number of iterations. To implement an iteration of Algorithm~\ref{alg:greedyPartition}, one needs to construct the set $\cN'$, and then find the pair $(u, j)$. Constructing the set $\cN'$ requires making $|\cN \setminus T| \leq n$ independence oracle queries, and finding the pair $(u, j)$ requires calculating $|\cN' \times [\ell]| = n\ell$ marginals, and thus, requires $O(n\ell)$ value oracle queries. Together with the above observation that Algorithm~\ref{alg:greedyPartition} makes only $r$ iterations, we get the query complexities stated in the lemma. 

To complete the proof of the lemma, it only remains to prove the lower bounds on the sum $\sum_{j = 1}^{\ell} f(T_j \mid \varnothing)$. Note that the existence of the dummy elements implies that Algorithm~\ref{alg:greedyPartition} never chooses a pair $(u_i, j_i)$ such that the marginal contribution $f(u_i \mid T_{j_i})$ is negative. Therefore, we immediately get that $\sum_{j = 1}^{\ell} f(T_j \mid \varnothing)$ is non-negative. To prove the other parts of the lower bounds on the sum $\sum_{j = 1}^{\ell} f(T_j \mid \varnothing)$, we need to define some additional notation. Specifically, for every $i \in [r]$ and $j \in [\ell]$, let $T_j^{i}$ denote the set $T_j$ after the $i$-th iteration of Algorithm~\ref{alg:greedyPartition}, and let $T_j$ denote the final value of this set. We also denote by $(u_i, j_i)$ the pair $(u, j)$ selected during iteration $i$ of Algorithm~\ref{alg:greedyPartition}. Since $T$ and $OPT$ are both bases of the matroid $\cM$, by Corollary~\ref{cor:perfect_matching_two_bases} there exists a bijective function $h \colon OPT \to T$ such that for every $u \in OPT$, $(T - h(u)) + u \in \cI$ and $h(u) = u$ for every $u \in T \cap OPT$.

We claim that the inequality
\begin{equation} \label{eq:iteration_greedy}
	f(u_i \mid T^{i-1}_{j_i})
	\geq 
     \frac{1}{\ell} \sum_{j = 1}^{\ell} \left[f(T_j \cup OPT^{i - 1}) - f(T_j \cup OPT^{i})\right]
\end{equation}
holds for every $i \in [r]$, where $OPT^i$ is defined as $OPT \setminus \{h^{-1}(u_j) \mid j \in [i]\}$. To see this, we need to consider two cases. If $u_i \not \in OPT$, then $h^{-1}(u_i) \not \in T$, and thus,
\begin{align*}
    f(u_i \mid T^{i-1}_{j_i}) & \geq \frac{1}{\ell} \cdot \sum_{j=1}^{\ell}f(h^{-1}(u_i) \mid T^{i - 1}_j)\\
    & \geq \frac{1}{\ell} \sum_{j=1}^{\ell} f(h^{-1}(u_i) \mid T_j \cup OPT^{i}) =
     \frac{1}{\ell} \sum_{j = 1}^{\ell} \left[f(T_j \cup OPT^{i - 1}) - f(T_j \cup OPT^{i})\right]
		\enspace,
\end{align*}
where the first inequality holds by the greedy choice of $(u_i, j_i)$ since $\cup_{j=1}^{\ell}T^{i-1}_{j} + h^{-1}(u_i) \subseteq T - u_i + h^{-1}(u_i) \in \cI$ by the definition of $h$, and the second inequality follows from the submodularity of $f$. Similarly, if $u_i \in OPT$, which implies $h^{-1}(u_i) = u_i \in T_{j_i}$. Then,
\begin{align*}
    f(u_i \mid T^{i-1}_{j_i}) & \geq \Big(1-\frac{1}{\ell}\Big)\cdot f(u_i \mid T^{i-1}_{j_i}) \geq \frac{1}{\ell} \sum_{j\in [\ell]- j_i} \mspace{-9mu} f(u_i \mid T^{i - 1}_j)\\
    & \geq \frac{1}{\ell} \sum_{j\in [\ell]- j_i} \mspace{-9mu}  f(u_i \mid T_j \cup OPT^{i}) = \frac{1}{\ell} \sum_{j=1}^{\ell} f(u_i \mid T_j \cup OPT^{i}) \\
    & = \frac{1}{\ell}\sum_{j=1}^{\ell} f(h^{-1}(u_i) \mid T_j \cup OPT^{i}) =
     \frac{1}{\ell} \sum_{j = 1}^{\ell} \left[f(T_j \cup OPT^{i - 1}) - f(T_j \cup OPT^{i})\right]
		\enspace.
\end{align*}
Here, the first inequality holds since the greedy choice of $(u_i, j_i)$ and the existence of the dummy elements imply together that $f(u_i \mid T^{i-1}_{j_i})$ is non-negative, and the other two inequalities follow from the arguments used in the previous case.

Adding up Inequality~\eqref{eq:iteration_greedy} for all $r$ iterations of Algorithm~\ref{alg:greedyPartition}, we get
\[
  \sum_{j=1}^{\ell}f(T_j \mid \varnothing) \geq \frac{1}{\ell} \sum_{j = 1}^{\ell} \left[f(T_j \cup OPT) - f(T_j \cup OPT^{r})\right]
	=
	\frac{1}{\ell} \sum_{j = 1}^{\ell} f(T_j \cup OPT) - \frac{1}{\ell} \sum_{j=1}^{\ell}f(T_j)
	\enspace.
\]
To get from the last inequality the lower bounds on $\sum_{j=1}^{\ell}f(T_j \mid \varnothing)$ that we are looking for, we need to lower bound the expression $\frac{1}{\ell} \sum_{j = 1}^{\ell} f(T_j \cup OPT)$. If $f$ is monotone, then this expression is clearly at least $f(OPT)$. Otherwise, let us define a random set $A$ that is equal to a uniformly random set out of $T_1, T_2, \dotsc, T_\ell$. Since the last sets are disjoint (by Lemma~\ref{lem:split_defined}), every element belongs to $A$ with probability at most $1/\ell$, and therefore, by Corollary~\ref{cor:lowerbound},
\[
  \frac{1}{\ell} \sum_{j = 1}^{\ell} f(T_j \cup OPT)
	=
	\bE[f(A \cup OPT)]
	\geq
	\Big(1 - \frac{1}{\ell}\Big) \cdot f(OPT)
	\enspace. \qedhere
\]
\end{proof}

By using the thresholding technique of Badanidiyuru and Vondr{\'{a}}k~\cite{babanidiyuru2014fast}, it is possible to get a faster version of Algorithm~\ref{alg:greedyPartition} (\Split) at the cost of making the algorithm more involved and slightly deteriorating its guarantees. The formal properties of this faster algorithm, which we term {\AcceleratedSplit}, are given by Proposition~\ref{prop-partition-accelerated}. The algorithm itself, and the proof of Proposition~\ref{prop-partition-accelerated} can be found in Appendix~\ref{app:split}. To avoid confusion, we note that Lemma~\ref{lem-partition} guarantees that $T$ is a base, while Proposition~\ref{prop-partition-accelerated} only guarantees that $T$ is independent in $\cM$. This discrepancy is due to Lemma~\ref{lem-partition} assuming the existence of dummy elements, while Proposition~\ref{prop-partition-accelerated} holding even without this assumption (however, the proof of Proposition~\ref{prop-partition-accelerated} does use this assumption for convenience).



\subsection{Deterministic Measured Continuous Greedy}\label{sec:measured2}

In this section, we prove the first part of Theorem~\ref{thm:deterministicMeasured} by analyzing a deterministic version of Measured Continuous Greedy given as Algorithm~\ref{alg:deterministicMCG}. The input parameters of Algorithm~\ref{alg:deterministicMCG} are as described in Theorem~\ref{thm:deterministicMeasured}. To simplify the presentation, Algorithm~\ref{alg:deterministicMCG} assumes that $1/\eps$ is an integer. If that is not the case, the $\eps$ should be replaced in the algorithm with $\eps'= \frac{1}{\lceil 1/\eps\rceil} \leq \eps$, which satisfies this property and is smaller than $\eps$ by at most a factor of $2$. 

\begin{algorithm}
\caption{\textsc{Deterministic Measured Continuous Greedy} $(\cM, f, \eps)$}\label{alg:deterministicMCG}
Let $\delta\gets \eps^{3}$. \\
Let $\vy^0 \gets \vzero$.\\
\For{$i = 1$ \KwTo $1/\delta$}
{
 Define $g\colon 2^{\cN}\to \nnR$ to be $g(S)=F(\characteristic_{S} \vee \vy^{i-1})$.\\
$(T_1, \ldots, T_{1/\eps})\gets {\AcceleratedSplit}(\cM, g, 1/\eps,\eps)$.\\
$\vy^{i} \gets \vy^{i-1} \psum \big(\delta\cdot \sum_{j=1}^{1/\eps}\characteristic_{T_j}\big)$. \\ 
}
\Return $\vy^{1/\delta}$.
\end{algorithm}

The following lemma lists some simple observations that follow from the algorithm's description and the properties of {\AcceleratedSplit} (Proposition~\ref{prop-partition-accelerated}).

\begin{lemma}\label{lem:propertiesMCG}
For every integer $0 \leq i \leq 1/\delta$, $\supp(\vy^{i}) \leq (1/\eps)\cdot i\leq 1/\eps^4$ and $\|\marg(\vy^{i})\|_\infty\leq 1-(1-\delta)^i$. Furthermore, the final output $\vy^{1/\delta}$ of Algorithm~\ref{alg:deterministicMCG} satisfies $\marg(\vy^{1/\delta})\in P(\cM)$.
\end{lemma}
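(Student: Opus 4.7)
The plan is to prove all three claims by induction on $i$, using the update rule $\vy^i = \vy^{i-1} \psum \vz^i$ where $\vz^i \triangleq \delta\cdot \sum_{j=1}^{1/\eps}\characteristic_{T_j^{(i)}}$, together with the fact (from Proposition~\ref{prop-partition-accelerated}) that the sets $T_1^{(i)}, \ldots, T_{1/\eps}^{(i)}$ produced in iteration $i$ are pairwise disjoint and their union $U^i \triangleq \bigcup_{j=1}^{1/\eps}T_j^{(i)}$ is independent in $\cM$.

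For the support bound, I note that $\vz^i$ has at most $1/\eps$ non-zero coordinates (one per set $T_j^{(i)}$), and since $(\vy^{i-1} \psum \vz^i)_S = 0$ iff both $\vy^{i-1}_S = 0$ and $\vz^i_S = 0$, the support can grow by at most $1/\eps$ in each iteration. Combined with $\supp(\vy^0)=0$ and $i \leq 1/\delta = 1/\eps^3$, this yields $\supp(\vy^i) \leq i/\eps \leq 1/\eps^4$. For the marginal bound, I invoke Observation~\ref{obs-marginal} to obtain $\marg(\vy^i) = \marg(\vy^{i-1}) \psum \marg(\vz^i)$. Because the sets $T_j^{(i)}$ are disjoint, every $u \in \cN$ lies in at most one of them, so $\marg_u(\vz^i) \in \{0,\delta\}$ and in particular $\marg_u(\vz^i) \leq \delta$. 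Hence
\[
1 - \marg_u(\vy^i) = (1 - \marg_u(\vy^{i-1}))(1 - \marg_u(\vz^i)) \geq (1 - \delta)\cdot(1 - \marg_u(\vy^{i-1})),
\]
which by induction gives $1 - \marg_u(\vy^i) \geq (1-\delta)^i$ and hence $\|\marg(\vy^i)\|_\infty \leq 1 - (1-\delta)^i$.

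For the matroid-polytope membership, I iterate the identity of Step~2 to obtain, coordinate-wise,
\[
\marg_u(\vy^{1/\delta}) = 1 - \prod_{k=1}^{1/\delta}\bigl(1 - \marg_u(\vz^k)\bigr) = 1 - (1-\delta)^{|\{k : u \in U^k\}|}.
\]
Using the elementary inequality $1 - (1-\delta)^m \leq m\delta$ for every non-negative integer $m$, this gives
\[
\marg(\vy^{1/\delta}) \leq \delta \sum_{k=1}^{1/\delta} \trueCharacteristic_{U^k}.
\]
The right-hand side is a convex combination of characteristic vectors of independent sets of $\cM$ (the weights $\delta$ sum to $\delta \cdot (1/\delta) = 1$), and hence lies in $P(\cM)$. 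Since $P(\cM)$ is down-closed, the same holds for $\marg(\vy^{1/\delta})$.

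The only subtle point is verifying the applicability of Observation~\ref{obs-marginal}, whose stated hypothesis requires $\vy^{i-1} + \vz^i \leq \vone$; one checks that for any coordinate $S$ selected $k \leq 1/\delta$ times we have $\vy^{i-1}_S = 1 - (1-\delta)^k$, and Bernoulli's inequality gives $(1-\delta)^k \geq 1 - k\delta \geq \delta$ as long as $k \leq 1/\delta - 1$, which holds throughout the execution. (Alternatively, one can bypass the hypothesis entirely by observing that the identity $\marg(\vy \psum \vz) = \marg(\vy) \psum \marg(\vz)$ follows from a one-line direct computation valid for any $\vy,\vz \in \extendedVectorSpace$.) All other steps are routine inductions on $i$.
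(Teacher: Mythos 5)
Your proof is correct and follows essentially the same route as the paper's: induction on $i$, bounding the support growth by $1/\eps$ per iteration, using Observation~\ref{obs-marginal} together with the disjointness of the $T_j$'s to control the marginals, and deducing polytope membership from $\marg(\vy^{1/\delta}) \leq \delta\sum_k \marg(\vz^k)$ plus down-closedness. Your explicit verification of the hypothesis $\vy^{i-1}+\vz^i\leq\vone$ of Observation~\ref{obs-marginal} (which the paper applies without comment) is a welcome extra bit of care, and your observation that the identity holds unconditionally is accurate.
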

\begin{proof}
We prove the first part of the lemma by induction on $i$. For $i = 0$, this part of the lemma holds since $\supp(\vy^{0}) = \supp(\vzero) = 0 = (1/\eps)\cdot 0$ and $\|\marg(\vy^{0})\|_\infty = \|\marg(\vzero)\|_\infty = 0 = 1-(1-\delta)^0$. Assume now that the first part of the lemma holds for some $i < 1/\delta$, and let us prove it for $i + 1$. Let $T_1, T_2, \dotsc, T_{1/\eps}$ denote these sets in iteration number $i + 1$ of Algorithm~\ref{alg:deterministicMCG}, and let $\vz^{i + 1} = \sum_{j=1}^{1/\eps}\characteristic_{T_j}$. By Proposition~\ref{prop-partition-accelerated}, $\cup_{j = 1}^{1/\eps} T_j$ is an independent set, and therefore, $\marg(\vz^{i+1})\in P(\cM)$. Furthermore, since the sets $T_1, T_2, \dotsc, T_{1/\eps}$ are also disjoint by Proposition~\ref{prop-partition-accelerated}, $\marg(\delta \cdot \vz^{i+1}) = \delta \cdot \marg(\vz^{i+1})$. Finally, we trivially have $\supp(\vz^{i + 1}) \leq 1/\eps$.

Since Algorithm~\ref{alg:deterministicMCG} sets $\vy^{i + 1}$ to be $\vy^{i} \psum (\delta\cdot \sum_{j=1}^{1/\eps}\characteristic_{T_j}) = \vy^{i} \psum (\delta\cdot \vz^{i + 1})$, we get, by the induction hypothesis,
\begin{align*}
	\supp(\vy^{i + 1})
	={} &
	\supp(\vy^{i} \psum (\delta\cdot \vz^{i + 1}))
	\leq
	\supp(\vy^{i}) + \supp(\delta\cdot \vz^{i + 1})\\
	={} &
	\supp(\vy^{i}) + \supp(\vz^{i + 1})
	\leq
	(1/\eps)\cdot i + (1/\eps)
	=
	(1/\eps)\cdot (i + 1)
	\enspace.
\end{align*}
Similarly,
\begin{align*}
    \|\marg(\vy^{i + 1})\|_\infty & = \|\marg(\vy^{i} \psum (\delta\cdot \vz^{i + 1}))\|_\infty = \|\marg(\vy^{i}) \psum \marg(\delta \cdot \vz^{i+1})\|_\infty\\
    & \leq
    \|\marg(\vy^{i}) \psum (\delta \cdot \vone)\|_\infty
    =
    \|\delta \cdot \vone + (\vone - \delta \cdot \vone) \hprod \marg(\vy^{i})\|_\infty\\
    & \leq
    \|\delta \cdot \vone\|_\infty + \|(1 - \delta) \cdot \marg(\vy^{i})\|_\infty
		=
		\delta+ (1-\delta)\cdot \|\marg(\vy^{i})\|_\infty\\
		&\leq
		\delta+ (1-\delta)\cdot [1-(1-\delta)^i]
		=
		1-(1-\delta)^{i + 1}
		\enspace,
\end{align*}
where the second equality holds by Observation~\ref{obs-marginal}, the first inequality holds since $\marg(\delta \cdot \vz^{i + 1}) = \delta \cdot \marg(\vz^{i + 1}) \leq \delta \cdot \vone$, and the last inequality follows from the induction hypothesis. This completes the proof by induction.

It remains to show that $\marg(\vy^{1/\delta}) \in P(\cM)$. Reusing the above definition of $\vz^i$, we get $\vy^{1/\delta}	=	\PSum_{i = 1}^{1/\delta} \delta \cdot \vz^i$, which, by Observation~\ref{obs-marginal}, implies
\[
	\marg(\vy^{1/\delta})
	=
	\PSum_{i = 1}^{1/\delta} \marg(\delta \cdot \vz^i)
	\leq
	\sum_{i = 1}^{1/\delta} \marg(\delta \cdot \vz^i)
	=
	\delta \cdot \sum_{i = 1}^{1/\delta} \marg(\vz^i)
	\enspace.
\]
The rightmost hand side of the last inequality is a vector in $P(\cM)$ because it is a convex combination of vectors in $P(\cM)$. Thus, by the down-closeness of $P(\cM)$, the leftmost hand side $\marg(\vy^{1/\delta})$ belongs to $P(\cM)$ as well.
\end{proof}

Using the observations from the last lemma, we can now calculate the number of oracle queries used by Algorithm~\ref{alg:deterministicMCG}.
\begin{lemma} \label{lem:deterministicMCG_complexity}
Algorithm~\ref{alg:deterministicMCG} uses $O_\eps(n \log r)$ value oracle queries to $f$ and independence oracle queries to $\cM$.
\end{lemma}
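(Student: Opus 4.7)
The proof is essentially a bookkeeping argument that chains together three ingredients already established in the excerpt. First, I would observe that Algorithm~\ref{alg:deterministicMCG} executes exactly $1/\delta = 1/\eps^3$ iterations of its outer loop, and that in each iteration the only non-trivial work is the single call to {\AcceleratedSplit}$(\cM, g, 1/\eps, \eps)$. By Proposition~\ref{prop-partition-accelerated} applied with $\ell = 1/\eps$, this call makes $O(\eps^{-2} n \log(r/\eps))$ value oracle queries to $g$ and $O(\eps^{-1} n \log(r/\eps))$ independence oracle queries to $\cM$.

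Next I would translate the cost of a value oracle query to $g$ into value oracle queries to $f$. Since $g(S) = F(\characteristic_S \vee \vy^{i-1})$ and $\characteristic_S$ is a standard basis vector of $\extendedVectorSpace$, the vector $\characteristic_S \vee \vy^{i-1}$ differs from $\vy^{i-1}$ only at coordinate $S$, which becomes integral (equal to $1$). Consequently, $\ff(\characteristic_S \vee \vy^{i-1}) \leq \ff(\vy^{i-1}) \leq \supp(\vy^{i-1}) \leq 1/\eps^4$, where the last bound is given by Lemma~\ref{lem:propertiesMCG}. By the observation preceding Lemma~\ref{lem:basic_properties_extension}, each evaluation of $F$ on a vector whose fractional support has size at most $1/\eps^4$ requires at most $2^{1/\eps^4}$ value oracle queries to $f$, a quantity that depends only on $\eps$.

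Multiplying the three bounds yields a total of
\[
    \frac{1}{\eps^3} \cdot O(\eps^{-2} n \log(r/\eps)) \cdot 2^{1/\eps^4} = O_\eps(n \log r)
\]
value oracle queries to $f$, and
\[
    \frac{1}{\eps^3} \cdot O(\eps^{-1} n \log(r/\eps)) = O_\eps(n \log r)
\]
independence oracle queries to $\cM$ (absorbing the $\log(1/\eps)$ factor arising from $\log(r/\eps) = \log r + \log(1/\eps)$ into the $O_\eps$ notation), which is precisely what the lemma asserts.

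There is no substantive obstacle; the only conceptual point worth verifying is that taking the coordinate-wise maximum with an integral basis vector cannot increase $\ff$, which is immediate from the definition. The entire argument is therefore a short calculation built on Proposition~\ref{prop-partition-accelerated}, Lemma~\ref{lem:propertiesMCG}, and the sparsity-based evaluation bound for $F$.
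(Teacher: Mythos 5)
Your proof is correct and follows essentially the same route as the paper's: count the $1/\eps^3$ invocations of {\AcceleratedSplit}, apply Proposition~\ref{prop-partition-accelerated} with $\ell = 1/\eps$, and convert each value query to $g$ into $O_\eps(1)$ queries to $f$ via the sparsity bound of Lemma~\ref{lem:propertiesMCG}. The only (harmless) difference is that you bound the evaluation cost by $2^{\ff(\characteristic_S \vee \vy^{i-1})}$ while the paper uses the slightly looser $2^{\supp(\vy^{i-1})}$; both give $O_\eps(1)$.
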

\begin{proof}
The only part of Algorithm~\ref{alg:deterministicMCG} that makes any oracle queries are the invocations of the procedure {\AcceleratedSplit}. Nevertheless, we note that the rest of the algorithm manipulates vectors whose support is of size $O_\eps(1)$ by Lemma~\ref{lem:propertiesMCG}, and therefore, can be efficiently implemented.

Algorithm~\ref{alg:deterministicMCG} passes the value of $1/\eps$ for the parameter $\ell$ of {\AcceleratedSplit}. Therefore, by Proposition~\ref{prop-partition-accelerated}, each invocation of {\AcceleratedSplit} involves $O(\eps^{-1} n\ell \log (r/\eps)) = O_\eps(n \log r)$ value oracle queries to the function $g$ and $O(\eps^{-1} n \log (r/\eps)) = O_\eps(n \log r)$ independence oracle queries to the matroid $\cM$. Recall now that in iteration $i$ of Algorithm~\ref{alg:deterministicMCG} the function $g(S)$ is defined as $F(\characteristic_{S} \vee \vy^{i-1})$. Thus, implementing a value oracle query to this function requires
\[
	2^{\supp(\vy^{i-1})}
	\leq
	2^{1/\eps^4}
	=
	O_\eps(1)
\]
value oracle queries to $f$ (the inequality holds by Lemma~\ref{lem:propertiesMCG}). Hence, all the value oracle queries to $g$ used by a single invocation of {\AcceleratedSplit} can be implemented using $O_\eps(n \log r)$ value oracle queries to $f$. The lemma now follows by multiplying the above-obtained bounds on the number of oracle queries required for {\AcceleratedSplit} by the number of times that {\AcceleratedSplit} is invoked by Algorithm~\ref{alg:deterministicMCG}, which is $\delta^{-1} = 1/\eps^3 = O_\eps(1)$.
\end{proof}

     





Our next goal is to analyze the quality of the fractional solution produced by Algorithm~\ref{alg:deterministicMCG}. Towards this goal, the next two lemmata lower bound the increase in the value of this solution in a single iteration.
\begin{lemma} \label{cor:value_of_local_search}
For every $i\in [1/\delta]$,
\begin{align*}
    \sum_{j = 1}^{1/\eps} \left(F(\characteristic_{T_j} \vee \vy^{i-1})- F( \vy^{i-1})\right) \geq{}& \left(1-3\eps\right)\cdot F(\characteristic_{OPT}\vee \vy^{i-1}) - \left(1-\eps\right)\cdot F( \vy^{i-1}) 
		\enspace.
\end{align*}
\end{lemma}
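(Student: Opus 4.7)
The plan is to invoke Proposition~\ref{prop-partition-accelerated} on the auxiliary function $g\colon 2^\cN \to \nnR$ defined by $g(S) = F(\characteristic_S \vee \vy^{i-1})$, which is exactly the function passed to \AcceleratedSplit\ in iteration $i$ of Algorithm~\ref{alg:deterministicMCG}. By Observation~\ref{obs-dubmodularg}, $g$ is non-negative and submodular, so the proposition (with $\ell = 1/\eps$ and error parameter $\eps$) applies. Since its ``$OPT$'' maximizes $g$ over $\cI$, its value is at least $g(OPT) = F(\characteristic_{OPT} \vee \vy^{i-1})$, so we obtain
\[
    \sum_{j=1}^{1/\eps} g(T_j \mid \varnothing) \geq \max\Big\{(1-2\eps)\, F(\characteristic_{OPT} \vee \vy^{i-1}) - \eps \sum_{j=1}^{1/\eps} g(T_j),\; 0\Big\}.
\]
Using the identity $\sum_j g(T_j) = \sum_j g(T_j \mid \varnothing) + (1/\eps) F(\vy^{i-1})$ to eliminate $\sum_j g(T_j)$ from the non-trivial branch and then solving for $\sum_j g(T_j \mid \varnothing)$ rearranges this to
\[
    \sum_{j=1}^{1/\eps} \bigl[F(\characteristic_{T_j} \vee \vy^{i-1}) - F(\vy^{i-1})\bigr] \geq \frac{1-2\eps}{1+\eps}\, F(\characteristic_{OPT} \vee \vy^{i-1}) - \frac{1}{1+\eps}\, F(\vy^{i-1}).
\]

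Comparing this inequality to the target $(1-3\eps)F(\characteristic_{OPT} \vee \vy^{i-1}) - (1-\eps) F(\vy^{i-1})$, the coefficient of $F(\characteristic_{OPT} \vee \vy^{i-1})$ causes no difficulty, since $\frac{1-2\eps}{1+\eps} = 1 - \frac{3\eps}{1+\eps} \geq 1-3\eps$. The main obstacle is the coefficient of $F(\vy^{i-1})$: because $\frac{1}{1+\eps} > 1-\eps$ for $\eps > 0$, our bound subtracts strictly more than the target allows, and a term-by-term comparison fails outright. I would resolve this by a case analysis on the sign of the target. If $(1-3\eps)F(\characteristic_{OPT} \vee \vy^{i-1}) - (1-\eps) F(\vy^{i-1}) \leq 0$ (the regime in which $F(\vy^{i-1})$ is large relative to $F(\characteristic_{OPT} \vee \vy^{i-1})$), the ``or $0$'' branch of Proposition~\ref{prop-partition-accelerated} already gives $\sum_j g(T_j \mid \varnothing) \geq 0$, which trivially dominates the non-positive target.

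In the complementary case, $F(\vy^{i-1}) < \frac{1-3\eps}{1-\eps}F(\characteristic_{OPT} \vee \vy^{i-1})$, and since $\frac{1-3\eps}{1-\eps} < 1$ for $\eps \in (0,1)$, this forces $F(\vy^{i-1}) < F(\characteristic_{OPT} \vee \vy^{i-1})$. Subtracting the target from the inequality we proved leaves exactly
\[
    \frac{3\eps^2}{1+\eps}\,F(\characteristic_{OPT} \vee \vy^{i-1}) - \frac{\eps^2}{1+\eps}\,F(\vy^{i-1}) = \frac{\eps^2}{1+\eps}\Bigl(3 F(\characteristic_{OPT} \vee \vy^{i-1}) - F(\vy^{i-1})\Bigr),
\]
which is non-negative because $F(\vy^{i-1}) \leq F(\characteristic_{OPT} \vee \vy^{i-1}) \leq 3 F(\characteristic_{OPT} \vee \vy^{i-1})$. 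This completes the plan; aside from the case split I expect no additional technical obstacles.
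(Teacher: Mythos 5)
Your proposal is correct and takes essentially the same route as the paper: both apply Proposition~\ref{prop-partition-accelerated} to $g(S)=F(\characteristic_S\vee\vy^{i-1})$ (justified by Observation~\ref{obs-dubmodularg}), eliminate $\sum_j g(T_j)$ via $g(T_j\mid\varnothing)=g(T_j)-g(\varnothing)$, and fall back on the max-with-zero branch and the non-negativity of $g$; the only difference is that the paper closes with the weighted-average trick $\sum_j g(T_j\mid\varnothing)\geq((1-\eps)+\eps(1-\eps))\sum_j g(T_j\mid\varnothing)$, which yields the coefficients $(1-3\eps)$ and $(1-\eps)$ directly and avoids your sign case split. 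One presentational quibble: your intermediate display with coefficients $\frac{1-2\eps}{1+\eps}$ and $\frac{1}{1+\eps}$ implicitly substitutes $g(OPT)$ for $g$'s own maximizer, which is only a valid weakening when $1-2\eps\geq 0$; this is harmless because the display is invoked only in your second case, where positivity of the target forces $\eps<1/3$, but it would be cleaner to state the display under that restriction.
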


\begin{proof}
By Observation~\ref{obs-dubmodularg}, $g(S)=F(\characteristic_{S} \vee \vy^{i-1})$ is a non-negative submodular function. Therefore, by plugging $\ell=\frac{1}{\eps}$ into Proposition~\ref{prop-partition-accelerated}, we get that {\AcceleratedSplit} produces disjoint sets $T_1, \ldots, T_{1/\eps}$ such that
\[
    \sum_{j = 1}^{1/\eps} g(T_j \mid \varnothing)
    \geq
    \max\Big\{(1-2\eps) \cdot g(OPT) - \eps \cdot \sum_{j = 1}^{1/\eps}g(T_j), 0\Big\}
		\enspace.
\]
(Technically, the term $g(OPT)$ in the last inequality should be replaced with $g(OPT_g)$, where $OPT_g$ is an independent set of $\cM$ maximizing $g$. However, this technicality can be ignored because $g(OPT_g) \geq g(OPT)$ by definition, and the right hand side of the above inequality evaluates to $0$ when $\eps \geq 1/2$.).

The last inequality implies that
 \begin{align*}
    \sum_{j = 1}^{1/\eps} g(T_{j} \mid \varnothing )
		\geq{}&
		((1 - \eps) + \eps(1 - \eps)) \cdot \sum_{j = 1}^{1/\eps} g(T_{j} \mid \varnothing)\\
 \geq{}& \left(1-\eps\right)\left[(1-2\eps) \cdot g(OPT) - \eps \cdot \sum_{j = 1}^{1/\eps}g(T_j)\right] + \eps (1-\eps) \cdot \sum_{j = 1}^{1/\eps} g(T_{j} \mid \varnothing )\\  \geq{} & \left(1-3\eps\right)\cdot g(OPT) - (1-\eps) \cdot g(\varnothing)
	\enspace,
\end{align*}
where the first inequality holds since  $\sum_{j = 1}^{1/\eps} g(T_{j} \mid \varnothing )\geq 0$ by Proposition~\ref{prop-partition-accelerated}, and the last inequality holds by the non-negativity of $g$. The lemma now follows by observing that the definition of $g$ implies
\[
	g(T_j \mid \varnothing)
	=
	g(T_j) - g(\varnothing)
	=
	F(\characteristic_{T_j}\vee \vy^{i-1}) - F(\characteristic_{\varnothing}\vee \vy^{i-1})
	=
	F(\characteristic_{T_j}\vee \vy^{i-1}) - F(\vy^{i-1})
	\enspace.
	\qedhere
\]
\end{proof}

The proof of the following lemma crucially uses our choice of $\delta=\eps^3$.

\begin{lemma}\label{lem:recursive}
  For any $i\in[1/\delta]$,
  \[\frac{1}{\delta}\cdot\left[F( \vy^{i})-F( \vy^{i-1})\right]\geq \left(1-4\eps\right) \cdot F( \characteristic_{OPT} \vee \vy^{i-1}) - F( \vy^{i-1})\] 

\end{lemma}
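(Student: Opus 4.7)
The plan is to expand $F(\vy^i) = F(\vy^{i-1} \psum (\delta\cdot \sum_j \characteristic_{T_j}))$ by equation~\eqref{ineq-psum}, keep only the two lowest-order terms, and then feed this into Lemma~\ref{cor:value_of_local_search} together with Bernoulli's inequality. Since the $T_j$'s are disjoint by Proposition~\ref{prop-partition-accelerated}, the non-empty ones are all distinct (empty $T_j$'s contribute only to $y_\varnothing$, which does not affect $F$), so $\vz := \delta\sum_j \characteristic_{T_j}$ has $z_S = \delta$ for $S \in \{T_1,\dotsc,T_{1/\eps}\}$ and $z_S = 0$ otherwise. Equation~\eqref{ineq-psum} then collapses to
\[
F(\vy^i) = \sum_{J' \subseteq [1/\eps]} \delta^{|J'|}(1-\delta)^{1/\eps - |J'|} \cdot F\Bigl(\vy^{i-1} \vee \sum_{j \in J'} \characteristic_{T_j}\Bigr).
\]
A short probabilistic check identifies $F(\vy^{i-1} \vee \sum_{j \in J'} \characteristic_{T_j}) = g_{\vy^{i-1}}(\cup_{j \in J'} T_j)$: the two random sets $\RSet(\vy^{i-1} \vee \sum_{j \in J'} \characteristic_{T_j})$ and $\cup_{j \in J'} T_j \cup \RSet(\vy^{i-1})$ have the same distribution, because any random contribution of the coordinates $T_j$ for $j \in J'$ inside $\RSet(\vy^{i-1})$ is a subset of $T_j$ and hence absorbed by the deterministic inclusion of $\cup_{j \in J'} T_j$.

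\textbf{Main estimate.} Writing $g := g_{\vy^{i-1}}$ and using the non-negativity of $g$ (Observation~\ref{obs-dubmodularg}) to discard all $|J'| \geq 2$ terms gives
\[
F(\vy^i) \geq (1-\delta)^{1/\eps}\, g(\varnothing) + \delta\,(1-\delta)^{1/\eps - 1} \sum_{j=1}^{1/\eps} g(T_j).
\]
Now two ingredients combine: (i) Bernoulli's inequality with $\delta = \eps^3$ gives $(1-\delta)^{1/\eps} \geq 1 - \delta/\eps = 1-\eps^2$, and likewise $(1-\delta)^{1/\eps-1} \geq 1 - \eps^2$; (ii) Lemma~\ref{cor:value_of_local_search}, once rearranged, gives $\sum_j g(T_j) \geq (1-3\eps) g(OPT) + (1/\eps - 1 + \eps) g(\varnothing)$. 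Plugging these in and using $g(\varnothing) \geq 0$, a direct algebraic simplification yields
\[
F(\vy^i) \geq (1-\delta)\,g(\varnothing) + \delta\,(1-4\eps)\,g(OPT).
\]
Subtracting $F(\vy^{i-1}) = g(\varnothing)$, dividing by $\delta$, and recalling $g(OPT) = F(\characteristic_{OPT}\vee \vy^{i-1})$ gives the desired inequality.

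\textbf{Main obstacle.} The main work is the final algebraic verification, which splits into two coefficient checks: $\delta(1-\delta)^{1/\eps-1}(1-3\eps) \geq \delta(1-4\eps)$ for the $g(OPT)$-coefficient, and $(1-\delta)^{1/\eps} + \delta(1-\delta)^{1/\eps-1}(1/\eps - 1 + \eps) \geq 1 - \delta$ for the $g(\varnothing)$-coefficient. Both are polynomial inequalities in $\eps$ that hold precisely because the Bernoulli truncation error $\delta/\eps = \eps^2$ is dominated by the $\eps$-slack between the $(1-3\eps)$ coming from Lemma~\ref{cor:value_of_local_search} and the $(1-4\eps)$ appearing in the target; this is the sense in which the choice $\delta = \eps^3$ is crucial, as remarked in the paper right before the lemma.
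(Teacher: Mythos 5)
Your proposal is correct and follows essentially the same route as the paper's own proof: expand $F(\vy^i)$ via Equality~\eqref{ineq-psum}, discard the $|J|\geq 2$ terms by non-negativity, apply Bernoulli's inequality to the $(1-\delta)^{1/\eps}$ factors, plug in Lemma~\ref{cor:value_of_local_search}, and verify the coefficients using $\delta=\eps^3$. The only cosmetic differences are your explicit rewriting in terms of $g_{\vy^{i-1}}$ and your care about repeated empty coordinates, neither of which changes the argument.
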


\begin{proof}
By the non-negativity of $f$,
\begin{align*}
	F(\vy^i)
	={} & F\bigg(\vy^{i-1} \psum \bigg(\delta \cdot \sum_{j=1}^{1/\eps}\characteristic_{T_j}\bigg)\bigg)
	=
	\sum_{J \subseteq \{T_j \mid j \in [1/\eps]\}} \mspace{-27mu} F\Big(\vy^{i-1} \vee \sum_{S \in J} \characteristic_S\Big) \cdot \delta^{|J|}(1 - \delta)^{\ell - |J|}\\
 \geq{} & 
	(1-\delta)^{1/\eps} \cdot F(\vy^{i-1}) + \delta(1-\delta)^{1/\eps-1} \cdot \sum_{j = 1}^{1/\eps} F(\characteristic_{T_{j}}\vee \vy^{i-1})\\
 \geq{}& \left(1-\frac{\delta}{\eps}\right) \cdot \bigg[F(\vy^{i-1}) + \delta\cdot \sum_{j = 1}^{1/\eps} F(\characteristic_{T_{j}}\vee \vy^{i-1})\bigg]
	\enspace,
\end{align*}
where the second equality is obtained by plugging $\vz=\delta\cdot \sum_{j=1}^{1/\eps}\characteristic_{T_j}$ into Equality~\eqref{ineq-psum} of Lemma~\ref{lem:basic_properties_extension}. Plugging the bound from Lemma~\ref{cor:value_of_local_search} into the last inequality yields
\begin{align} \label{eq:iteration_increase}
F(\vy^i) \geq{}&  \left(1-\frac{\delta}{\eps}\right) \cdot \left[F(\vy^{i-1}) + \delta\cdot \left(\left(1-3\eps\right)\cdot F(\characteristic_{OPT}\vee \vy^{i-1}) + (1/\eps - 1+\eps)\cdot F( \vy^{i-1})\right)\right]\\\nonumber
={}& \left(1-\frac{\delta}{\eps}\right)\left(1 +\frac{\delta}{\eps} -\delta +\eps \delta\right)\cdot F( \vy^{i-1}) + \delta \left(1-\frac{\delta}{\eps}\right) \left(1-3\eps\right)\cdot F(\characteristic_{OPT}\vee \vy^{i-1})\\\nonumber
\geq{} & (1-\delta) \cdot F( \vy^{i-1}) + \delta(1-4\eps) \cdot F(\characteristic_{OPT}\vee \vy^{i-1}) \enspace.
\end{align}
To see why the last inequality holds, notice that our choice of $\delta = \eps^3$ implies that $\eps \delta = \delta^2 / \eps^2$, and therefore,
\[
	\left(1-\frac{\delta}{\eps}\right)\left(1 +\frac{\delta}{\eps} -\delta +\eps \delta\right)
	=
	1 -\delta +\eps \delta - \frac{\delta^2}{\eps^2} + \frac{\delta^2}{\eps} + \delta^2
	\geq
	1 - \delta
	\enspace.
\]
The lemma now follows by rearranging Inequality~\eqref{eq:iteration_increase}.
\end{proof}

We are now ready to prove a lower bound on the value of the fractional solution returned by Algorithm~\ref{alg:deterministicMCG}.
\begin{lemma} \label{lem:MCG_value}
For every integer $0 \leq i \leq \delta^{-1}$, $F(\vy^i)\geq  \delta i \cdot (1-\delta)^{i-1} \cdot (1-4\eps) \cdot f(OPT)$, and in particular, for $i = \delta^{-1}$, \[F(\vy^{1/\delta})\geq (1-\delta)^{1/\delta-1} \cdot (1-4\eps) \cdot f(OPT) \geq (\nicefrac{1}{e} - O(\eps)) \cdot f(OPT)\enspace.\] When $f$ is monotone, the above guarantee improves to $F(\vy^i)\geq (1- (1-\delta)^{i}) \cdot (1-4\eps) \cdot f(OPT)$, which implies, for $i = \delta^{-1}$, $F(\vy^{1/\delta})\geq (1-\nicefrac{1}{e}- O(\eps))\cdot f(OPT)$.
\end{lemma}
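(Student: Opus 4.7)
The plan is a straightforward induction on $i$, using Lemma~\ref{lem:recursive} as the recurrence and Lemma~\ref{lem:extendedF} combined with Lemma~\ref{lem:propertiesMCG} to lower bound the ``OPT term'' that appears in the recurrence. Concretely, rearranging the inequality of Lemma~\ref{lem:recursive} gives
\[
    F(\vy^i) \geq (1-\delta)\cdot F(\vy^{i-1}) + \delta(1-4\eps)\cdot F(\characteristic_{OPT}\vee \vy^{i-1})\enspace.
\]
For the general (non-monotone) case, Inequality~\eqref{ineq-main-opt} of Lemma~\ref{lem:extendedF} together with the bound $\|\marg(\vy^{i-1})\|_\infty \leq 1 - (1-\delta)^{i-1}$ from Lemma~\ref{lem:propertiesMCG} yields $F(\characteristic_{OPT}\vee \vy^{i-1}) \geq (1-\delta)^{i-1}\cdot f(OPT)$. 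For the monotone case, the same quantity is at least $f(OPT)$ directly by monotonicity, since $\characteristic_{OPT}\vee \vy^{i-1} \geq \characteristic_{OPT}$ and $F$ extends $f$.

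For the general case, the induction hypothesis $F(\vy^{i-1})\geq \delta(i-1)(1-\delta)^{i-2}(1-4\eps)\cdot f(OPT)$ combined with the recurrence gives
\[
    F(\vy^i) \geq (1-\delta)\cdot \delta(i-1)(1-\delta)^{i-2}(1-4\eps)\cdot f(OPT) + \delta(1-4\eps)(1-\delta)^{i-1}\cdot f(OPT) = \delta i (1-\delta)^{i-1}(1-4\eps)\cdot f(OPT)\enspace,
\]
completing the inductive step (the base case $i = 0$ holds trivially since the right-hand side vanishes and $F(\vzero)\geq 0$). For the monotone case, the analogous step is
\[
    F(\vy^i) \geq (1-\delta)(1-(1-\delta)^{i-1})(1-4\eps)\cdot f(OPT) + \delta(1-4\eps)\cdot f(OPT) = (1-(1-\delta)^i)(1-4\eps)\cdot f(OPT)\enspace.
\]

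Finally, to derive the ``in particular'' statements, I would plug in $i = 1/\delta$ and use the standard estimate $(1-\delta)^{1/\delta} \geq \nicefrac{1}{e} - O(\delta)$. In the general case, $(1-\delta)^{1/\delta - 1}(1-4\eps) \geq \nicefrac{1}{e} - O(\eps)$ since $\delta = \eps^3$, and similarly $1 - (1-\delta)^{1/\delta} \geq 1 - \nicefrac{1}{e} - O(\eps)$ in the monotone case, so multiplying by $(1-4\eps)$ preserves the bound up to an $O(\eps)$ error. There is no real obstacle here — the entire argument is a clean induction, and the only tiny subtlety is remembering to invoke Inequality~\eqref{ineq-main-opt} rather than monotonicity in the non-monotone case, which is precisely where the factor $(1-\delta)^{i-1}$ (and ultimately the $\nicefrac{1}{e}$) enters.
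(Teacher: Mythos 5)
Your proposal is correct and follows essentially the same route as the paper: both derive the recurrence from Lemma~\ref{lem:recursive}, lower bound $F(\characteristic_{OPT}\vee\vy^{i-1})$ via Inequality~\eqref{ineq-main-opt} of Lemma~\ref{lem:extendedF} together with the $\|\marg(\vy^{i-1})\|_\infty$ bound of Lemma~\ref{lem:propertiesMCG} (or via monotonicity in the monotone case), and then solve the recurrence. The only difference is that you carry out the inductive verification explicitly, whereas the paper delegates it to a citation; your algebra checks out.
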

\begin{proof}
By plugging the guarantee of Lemma~\ref{lem:propertiesMCG} that $\|\marg(\vy^{i})\|_\infty \leq 1-(1-\delta)^i$ into Inequality~\eqref{ineq-main-opt} of Lemma~\ref{lem:extendedF}, we get that $F(\characteristic_{OPT}\vee \vy^{i-1})\geq (1-\delta)^{i-1} \cdot f(OPT)$. Plugging now this bound into Lemma~\ref{lem:recursive} yields that the recursive relation
   \[\frac{1}{\delta}\left[F( \vy^{i})-F( \vy^{i-1})\right]\geq (1-\delta)^{i-1} \cdot \left(1-4\eps\right) \cdot f(OPT) - F( \vy^{i-1})\]
	holds for any $i\in[1/\delta]$. When $f$ is monotone, we get the improved bound of $F(\characteristic_{OPT}\vee \vy^{i-1})\geq f(OPT)$, which by Lemma~\ref{lem:recursive} implies the improved recursive relation of
    \[\frac{1}{\delta}\left[F( \vy^{i})-F( \vy^{i-1})\right]\geq \left(1-4\eps\right) \cdot f(OPT) - F( \vy^{i-1}) \enspace.\]
One can verify that the guarantees stated in the lemma are the solutions for the above recursive relations (and the limit condition that $F(\vy^0) \geq 0$ by the non-negativity of $f$). We refer the reader to~\cite{buchbinder2014submodular} for detailed analyses of similar recursive relations.
\end{proof}

The first part of Theorem~\ref{thm:deterministicMeasured} follows from Lemmata~\ref{lem:propertiesMCG}, \ref{lem:deterministicMCG_complexity} and~\ref{lem:MCG_value}. Notice that to get the exact statement of the theorem, it necessary to scale down the value of the parameter $\eps$ by a constant factor before executing Algorithm~\ref{alg:deterministicMCG}.

\subsection{Randomly Decomposing the Output of the Algorithm} \label{sec:measured_decomposition}

In this section, we prove the second part of Theorem~\ref{thm:deterministicMeasured}. In other words, we show that without making any additional queries, Algorithm~\ref{alg:deterministicMCG} can also produce a set of \emph{random} independent sets $S_1, \ldots, S_{1/\eps^3}$ of the matroid such that $\bE[\bar{F}(\vx)] \geq F(\vy)$, where $\vx = \eps^3 \cdot \sum_{i = 1}^{1/\eps^3}\trueCharacteristic_{S_i}$ is a convex combination of the characteristic vectors of these independent sets (we remind the reader that $\bar{F}$ is the standard multilinear extension). 

Recall that Algorithm~\ref{alg:deterministicMCG} sets $\delta=\eps^3$, and its output vector $\vy$ obeys
\[
	\vy
	=
	\PSum_{i = 1}^{1/\delta} \bigg(\delta\cdot \sum_{j=1}^{1/\eps}\characteristic_{T^i_j}\bigg)
	\enspace,
\]
where, for every $i \in [1/\delta]$, the sets $T^i_1, T^i_2, \dotsc, T^i_{1/\eps}$ are disjoint, and their union is independent in $\cM$. Let $T_i$ be the union of these sets. In other words, for every $i \in [1/\delta]$, $T_i \triangleq \cup_{j = 1}^{1/\eps} T^i_j$, and thus, $T_i$ is independent in $\cM$. By Observation~\ref{obs-marginal},
\[
	\marg(\vy)
	=
	\PSum_{i = 1}^{1/\delta} \marg\bigg(\delta\cdot \sum_{j=1}^{1/\eps}\characteristic_{T^i_j}\bigg)
	=
	\PSum_{i = 1}^{1/\delta} (\delta \cdot \trueCharacteristic_{T_i})
	=
	\delta \cdot \sum_{i = 1}^{1/\delta} \bigg[\trueCharacteristic_{T_i} \cdot \bigg(\vone - \PSum_{i' = 1}^{i - 1} (\delta \cdot \trueCharacteristic_{T_{i'}})\bigg)\bigg]
	\enspace,
\]
where the second equality follows from the disjointness of the sets $T^i_1, T^i_2, \dotsc, T^i_{1/\eps}$ (for any given $i \in [1/\delta]$). 
Next, define $S_i$ to be randomly chosen set by including every element of $T_i$ with probability $\big(\vone - \PSum_{i' = 1}^{i - 1} (\delta \cdot \trueCharacteristic_{T_{i'}})\big)_u\leq 1$. Clearly, $S_i\subseteq T_i\in \cI$, and hence, $S_i\in \cI$.

Let us now define $\vx = \delta \cdot \sum_{i = 1}^{1/\delta}\trueCharacteristic_{S_i}$. Since $\delta = \eps^3$, this vector $\vx$ has the structure promised by Theorem~\ref{thm:deterministicMeasured}. By also defining $\bar{\RSet}(\vx)$ to be a subset of $\cN$ that contains every element $u \in \cN$ with probability $\vx_u$, independently, we get
\[\bE[\bar{F}(\vx)]= \bE[f(\bar{\RSet}(\vx)]=\bar{F}(\marg(\vy)) \geq F(\vy) \enspace.\]
The first inequality follows by the definition of $\bar{F}$, and the last inequality holds by Lemma~\ref{lem:marg_prop}. To see why the second inequality follows, we notice that, even though $\vx$ is constructed based on random sets $S_1, S_2, \dotsc, S_{1/\delta}$, the value of $x_u$ is determined independently of the value of every other coordinate of $S$. This proves that $\bar{\RSet}(\vx)$ and $\bar{\RSet}{(\marg(\vy))}$ have the same distribution since, for every element $u \in \cN$,
\begin{align*}
	\Pr[u \in \bar{\RSet}(\vx)]
    ={} &
    \sum_{\substack{\vx' \in [0, 1]^\cN\\\Pr[\vx = \vx'] > 0}} \mspace{-18mu} \Pr[u \in \bar{\RSet}(\vx')] \cdot \Pr[\vx = \vx']\\
    ={} &
    \sum_{\substack{\vx' \in [0, 1]^\cN\\\Pr[\vx = \vx'] > 0}} \mspace{-18mu} x'_u \cdot \Pr[\vx = \vx']
	=
	\bE[x_u]
	=
	\bE\bigg[\delta \cdot \sum_{i = 1}^{1/\delta} (\trueCharacteristic_{S_i})_u\bigg]
	=
	\delta \cdot \sum_{i = 1}^{1/\delta} \bE[(\trueCharacteristic_{S_i})_u]\\
	={} &
	\delta \cdot \sum_{i = 1}^{1/\delta} \Pr[u \in S_i]
	=
	\delta \cdot \sum_{i = 1}^{1/\delta} \bigg[\trueCharacteristic_{T_i} \cdot \bigg(\vone - \PSum_{i' = 1}^{i - 1} (\delta \cdot \trueCharacteristic_{T_{i'}})\bigg)\bigg]_u
    =
    \marg_u(\vy)
	\enspace.
\end{align*}

\section{Deterministically Rounding the Extended Multilinear Extension}\label{sec:pipage_round}

In this section, we design a \emph{deterministic} algorithm for rounding the extended multilinear extension subject to a matroid constraint. In other words, given a matroid $\cM=(\cN, \cI)$ and a vector $\vy \in \extendedVectorSpace$ such that $\marg(\vy)\in P(\cM)$, our rounding algorithm returns a base $T$ of $\cM$ such that $f(T) \geq F(\vy)$. Our rounding algorithm is a variation of the Pipage Rounding algorithm of~\cite{calinescu2011maximzing} with modifications designed to adapt this algorithm to the extended multilinear extension, and prevent the number of fractional coordinates from (significantly) increasing during the intermediate steps of the rounding. For simplicity, we assume in this section that $\marg(\vy)$ belongs to the base polytope $B(\cP)$, which is a subset of the matroid polytope $P(\cM)$. One can guarantee that this is the case by increasing the coordinate $y_D$, corresponding to the set $D$ of $r$ dummy elements, by $1 - \|\marg(\vy)\|_\infty/r$, which does not affect $F(\vy)$, and increases $\ff(\vy)$ by at most $1$.

We present our rounding algorithm below as Algorithm~\ref{alg:deterministic-rounding} in Section~\ref{ssc:main_rounding}. However, first, we need to present in Section~\ref{ssc:subroutines} two subroutines, termed {\Relax} and {\HitConstraint}, used by our algorithm.

\subsection{The subroutines \texorpdfstring{\Relax}{\RelaxText} and \texorpdfstring{\HitConstraint}{\HitConstraintText}} \label{ssc:subroutines}

The subroutines {\Relax} and {\HitConstraint} manipulate the vector $\vy$, but ignore the objective function (and thus, do not make any value oracle queries to the extended multilinear extension $F$). The first subroutine (\Relax) appears as Algorithm~\ref{alg:relax}. This subroutine gets a vector $\vy\in \extendedVectorSpace$ and an element $u\in \cN$, and produces a new vector $\vz \in \extendedVectorSpace$ that is intuitively obtained from $\vy$ by making $u$ scholastically independent. In other words, $\marg(\vy) = \marg(\vz)$, and $\RSet(\vy) - u$ and $\RSet(\vz) - u$ have the same distribution, but the membership of $u$ in $\RSet(\vz)$ is guaranteed to be independent of the membership of any other element of $\cN$ in $\RSet(\vz)$. The properties of the subroutine {\Relax} that we use are formally summarized by Lemma~\ref{lem:relax}.

\begin{algorithm}
\DontPrintSemicolon
\caption{\Relax$(\vy,u)$}\label{alg:relax}
Update $y_{\{u\}} \gets \marg_u(\vy) \triangleq 1- \prod_{S\subseteq 2^\cN \mid u\in S}(1-y_S)$.\label{line:singleton_update}\\
\For{every non-empty set $S\subseteq \cN - u$}
{
	Update $y_S \gets 1- (1-y_{S+u})(1-y_S)$.\label{line:smaller_set_update}\\
	Update $y_{S+u}\gets 0$.\label{line:larger_set_update}
}
\Return $\vy$.
\end{algorithm}


\begin{lemma}\label{lem:relax}
    Let $\vy\in \extendedVectorSpace$, $u\in \cN$ and $\vz={\Relax}(\vy,u)\in \extendedVectorSpace$. Computing $\vz$ requires $O(\ff(\vy))$ time, and the new vector satisfies.
    \begin{itemize}
    \item $\ff(\vz) \leq 1+ \ff(\vy)$.
    \item $\marg(\vy)=\marg(\vz)$, and $y_S=0$ for all sets $S$ that contain $u$, but are not the singleton set $\{u\}$.
    \item $F(\vz)\geq F(\vy)$.
    \end{itemize}
\end{lemma}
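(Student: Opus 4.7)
\textbf{Proof plan for Lemma~\ref{lem:relax}.} The first three claims are short bookkeeping; the real content is $F(\vz) \geq F(\vy)$, which combines a coupling between the randomness behind $\vy$ and $\vz$ with submodularity of $f$.

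\emph{Complexity, vanishing coordinates on $u$, and the fractionality bound.} Representing $\vy$ sparsely, Line~\ref{line:singleton_update} and the loop only need to touch pairs with at least one nonzero coordinate, giving the claimed time. The claim $z_S = 0$ for every $S \ni u$ with $S \neq \{u\}$ is immediate from Line~\ref{line:larger_set_update}. For $\ff(\vz) \leq 1 + \ff(\vy)$, the only coordinate that can be newly made fractional is $y_{\{u\}}$, contributing at most one to the count. For every pair $(y_S, y_{S+u})$ with $S \subseteq \cN - u$ nonempty, a short case analysis on the three possibilities ($0$, $1$, or fractional) for each of the two old values shows that the new value $1 - (1-y_S)(1-y_{S+u})$ equals $1$ whenever either old value is $1$; combined with $z_{S+u}$ being set to $0$, the number of fractional coordinates contributed by the pair can only weakly decrease.

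\emph{Marginal preservation.} For $v = u$, the only nonzero $z_S$ with $u \in S$ is $z_{\{u\}} = \marg_u(\vy)$, so $\marg_u(\vz) = \marg_u(\vy)$. For $v \neq u$, I would partition $\{S \ni v\}$ into pairs $(T, T+u)$ with $v \in T \subseteq \cN - u$ nonempty (neither $\varnothing$ nor $\{u\}$ contains $v$). Substituting $z_T = 1 - (1-y_T)(1-y_{T+u})$ and $z_{T+u} = 0$, each pair contributes the factor $(1-y_T)(1-y_{T+u})$ to $\prod_{S \ni v}(1-z_S)$, matching the corresponding factor of $\prod_{S \ni v}(1-y_S)$; hence $\marg_v(\vz) = \marg_v(\vy)$.

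\emph{Monotonicity of $F$ (the main step).} I would work with the probabilistic perspective. Decompose the random draw behind $\vy$ by letting $R_{-u}(\vy) = \bigcup_{S \not\ni u} \RSet(\vy, S)$, $R_u(\vy) = \bigl(\bigcup_{S \ni u} \RSet(\vy, S)\bigr) \setminus \{u\}$, and $U(\vy)$ the indicator that $u \in \RSet(\vy)$. Three facts drive the argument: $R_{-u}(\vy)$ is independent of $(R_u(\vy), U(\vy))$; $U(\vy) = 0$ forces $R_u(\vy) = \varnothing$; and $\Pr[U(\vy) = 1] = \marg_u(\vy) =: p$. A direct check using $z_S = 1 - (1-y_S)(1-y_{S+u})$ shows that $R_{-u}(\vz)$ is distributed as $R_{-u}(\vy) \cup R_u(\vy)$ and is independent of the Bernoulli$(p)$ coin deciding whether $u \in \RSet(\vz)$ (for each nonempty $S \subseteq \cN - u$, the ``$S$ is included'' indicators have matching probability $z_S$ on both sides and are independent across $S$). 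Writing $E_i = \bE[f(R_{-u}(\vy) \cup R_u(\vy)) \mid U(\vy) = i]$ and $E_i^+ = \bE[f(R_{-u}(\vy) \cup R_u(\vy) \cup \{u\}) \mid U(\vy) = i]$, conditioning $F(\vy)$ on $U(\vy)$ and conditioning $F(\vz)$ on both $U(\vy)$ and the $u$-coin for $\vz$ gives, after a short expansion,
\[
F(\vz) - F(\vy) = p(1-p) \cdot \bigl[(E_0^+ - E_0) - (E_1^+ - E_1)\bigr] \enspace.
\]
Since $R_{-u}(\vy) \subseteq R_{-u}(\vy) \cup R_u(\vy)$ on every realization, submodularity yields $f(u \mid R_{-u}(\vy)) \geq f(u \mid R_{-u}(\vy) \cup R_u(\vy))$; taking expectations and exploiting the independence of $R_{-u}(\vy)$ from $U(\vy)$ delivers $E_0^+ - E_0 \geq E_1^+ - E_1$, closing the proof. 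The main obstacle will be the careful conditional-expectation bookkeeping in this final paragraph; everything else amounts to straightforward arithmetic.
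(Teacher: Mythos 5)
Your proposal is correct, and while the bookkeeping parts (fractionality count, marginal preservation) coincide with the paper's argument, your proof of the key inequality $F(\vz)\geq F(\vy)$ takes a genuinely different route. The paper treats each loop iteration separately, reinterprets it as a continuous flow governed by differential equations on the three coordinates $y_{\{u\}}, y_S, y_{S+u}$, and shows via the first-derivative identity of Lemma~\ref{lem:basic_properties_extension} that $\frac{d}{dt}F(\vy(t)) = F(\characteristic_u\vee \vy(t)) + F(\characteristic_S \vee \vy(t)) - F(\characteristic_{S+u}\vee \vy(t)) - F(\vy(t))\geq 0$ by submodularity. You instead analyze the entire transformation in one shot through a probabilistic coupling: the independent indicators $X_T=\mathbb{1}[\RSet(\vy,T)=T \text{ or } \RSet(\vy,T+u)=T+u]$ have success probabilities exactly $z_T$, so $R_{-u}(\vz)$ can be realized as $R_{-u}(\vy)\cup R_u(\vy)$ on the same probability space, with a fresh Bernoulli$(p)$ coin for $u$; this yields the exact identity $F(\vz)-F(\vy)=p(1-p)\bigl[(E_0^+-E_0)-(E_1^+-E_1)\bigr]$, which is nonnegative because $U(\vy)=0$ forces $R_u(\vy)=\varnothing$, $R_{-u}(\vy)$ is independent of $U(\vy)$, and submodularity gives $f(u\mid R_{-u})\geq f(u\mid R_{-u}\cup R_u)$ pointwise. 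Your version is more elementary (no ODE reinterpretation, no derivative formulas) and even gives an exact expression for the gain; the paper's version is more local and reuses machinery from Lemma~\ref{lem:basic_properties_extension} that is needed elsewhere, mirroring standard multilinear-extension arguments. One cosmetic remark: your statement that ``the only coordinate that can be newly made fractional is $y_{\{u\}}$'' is not literally true (a coordinate $y_S$ with $y_S=0$ and $y_{S+u}$ fractional becomes fractional), but your subsequent per-pair accounting correctly absorbs this, since $y_{S+u}$ simultaneously becomes integral.
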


\begin{proof}
For every non-empty set $S \subseteq \cN - u$, Line~\ref{line:smaller_set_update} of Algorithm~\ref{alg:relax} can set $y_S$ to be fractional only if either $y_S$ or $y_{S + u}$ was fractional before this line. Since Line~\ref{line:larger_set_update} makes $y_{S + u}$ non-fractional, every single iteration of the loop of Algorithm~\ref{alg:relax} can only decrease the number of fractional coordinates in $\vy$. Outside this loop, the algorithm only changes a single coordinate of $\vy$ (the coordinate corresponding to the singleton $\{u\}$), and thus, we get $\ff(\vz) \leq 1+ \ff(\vy)$. 

Let us consider now $\marg(\vz)$. Since $\vz$ is $0$ in all the coordinates corresponding to sets that contain $u$, except for the singleton set $\{u\}$ which is assigned $\marg_u(\vy)$ by Algorithm~\ref{alg:relax}, we immediately get $\marg_u(\vz) = z_{\{u\}} = \marg_u(\vy)$. For every element $v \in \cN - u$,
\begin{align*}
	\marg_v(\vz)
	={} &
	1 - \prod_{S \subseteq 2^\cN \mid v \in S} \mspace{-18mu} (1 - z_S)
	=
	1 - \prod_{S \subseteq 2^\cN \mid v \in S, u \not \in S} \mspace{-27mu} (1 - z_S)(1 - z_{S + u})\\
	={} &
	1 - \prod_{S \subseteq 2^\cN \mid v \in S, u \not \in S} \mspace{-27mu} (1-y_{S+u})(1-y_S)
	=
	1 - \prod_{S \subseteq 2^\cN \mid v \in S} \mspace{-18mu} (1 - y_S)
	=
	\marg_v(\vy)
	\enspace,
\end{align*}
where the third equality holds since Algorithm~\ref{alg:relax} assigns $z_S \gets 0$ and $z_{S + u} \gets 1 - (1-y_{S+u})(1-y_S)$ for every non-empty set $S$ that does not include $u$. This completes the proof of the first half of the second property of Lemma~\ref{lem:relax}. Since the second half of this property is immediate given Algorithm~\ref{alg:relax}, the rest of this proof is devoted to proving the last property of Lemma~\ref{lem:relax}.

An alternative description of Algorithm~\ref{alg:relax} replaces the current Line~\ref{line:singleton_update} with a new line inside the loop that updates $y_{\{u\}}$ to be $1 - (1 - y_{\{u\}})(1 - y_{S + u})$. It can be verified that this leads to the same final value for $y_{\{u\}}$ at the algorithm's termiantion. It is also convenient to think of every iteration of the algorithm as a continuous process from time $t = 0$ to time $t = -\ln(1-y_{S+u})$ defined using the differential equations
\[
	\frac{dy_{\{u\}}(t)}{dt}=1-y_{\{u\}}(t) \enspace, \quad \frac{dy_S(t)}{dt}=1-y_S(t) \quad \text{and} \quad \frac{dy_{S+u}(t)}{dt}= - (1 - y_{S+u}(t))
\]
(the other coordinates of $\vy$ remain unchanged throughout the process). The solutions for these differential equations are $y_{S}(t) =1-(1-y_{S}(0))e^{-t}$, $y_{S+u}(t) =1-(1-y_{S+u}(0))e^{t}$ and $y_{\{u\}}= 1-(1-y_{\{u\}}(0))e^{-t}$, and by plugging $t = 0$ and $t = -\ln(1-y_{S+u})$ into these solutions one can verify that the continuous process is indeed equivalent to the discrete description of an iteration.

Let us now denote by $\vy(t)$ the vector $\vy$ at time $t \in [0, -\ln(1-y_{S+u})]$ within an arbitrary iteration of Algorithm~\ref{alg:relax}. To prove the last property of the lemma, it is suffices to show that the derivative of $\vy(t)$ with respect to $t$ is non-negative because this will imply that the value of $F(\vy)$ never decreases during the execution of the modified algorithm. By the chain rule,
\begin{align*}
	\frac{dF(\vy(t))}{dt}
	={} &
	\frac{d y_{\{u\}}(t)}{dt} \cdot \left. \frac{\partial F(\vy)}{y_{\{u\}}} \right|_{\vy = \vy(t)} \mspace{-9mu}+ \frac{d y_S(t)}{dt} \cdot \left. \frac{\partial F(\vy)}{y_S} \right|_{\vy = \vy(t)} \mspace{-9mu}+ \frac{d y_{S + u}(t)}{dt} \cdot \left. \frac{\partial F(\vy)}{y_{S + u}} \right|_{\vy = \vy(t)}\\
	={} &
	(1 - y_{\{u\}}(t)) \cdot \left. \frac{\partial F(\vy)}{y_{\{u\}}} \right|_{\vy = \vy(t)} \mspace{-9mu}- (1 - y_S(t)) \cdot \left. \frac{\partial F(\vy)}{y_S} \right|_{\vy = \vy(t)} \mspace{-9mu}- (1 - y_{S+u}(t)) \cdot \left. \frac{\partial F(\vy)}{y_{S + u}} \right|_{\vy = \vy(t)}\\
	={} &
	[F(\characteristic_u\vee \vy(t)) - F(\vy(t))] + [F(\characteristic_S \vee \vy(t)) - F(\vy(t))] - [F(\characteristic_{S+u}\vee \vy(t)) - F(\vy(t))]\\
	={} &
	F(\characteristic_u\vee \vy(t)) + F(\characteristic_S \vee \vy(t)) - F(\characteristic_{S+u}\vee \vy(t)) - F(\vy(t))
	\geq
	0
	\enspace,
\end{align*}
where the penultimate inequality holds by Equality~\eqref{eq-derivative} of Lemma~\ref{lem:basic_properties_extension}. To see why the inequality holds as well, note that if we denote by $T$ a random subset of $\cN$ distributed like $\RSet(\vy(t))$, then by the linearity of expectation and the submodularity of $f$,
\begin{multline*}
	F(\characteristic_u\vee \vy(t)) + F(\characteristic_S \vee \vy(t)) - F(\characteristic_{S+u}\vee \vy(t)) - F(\vy(t))\\
	=
	\bE[f(T + u) + f(T \cup S) - f(T \cup (S + u)) - f(T)]
	\geq
	0
	\enspace.
	\tag*{\qedhere}
\end{multline*}

\end{proof}

Given a vector $\vy \in \extendedVectorSpace$, we say that an element $u \in \cN$ is \emph{$\vy$-relaxed} if $y_S = 0$ for every set $S \neq \{u\}$ that includes $u$. By Lemma~\ref{lem:relax}, $\Relax(\vy, u)$ always produces a vector $\vz$ such that $u$ is $\vz$-relaxed, which is the motivation for using this terminology. One can verify that {\Relax} also preserves the property of being relaxed in the sense that every $\vy$-relaxed element is also $\vz$-relaxed. One consequence of this observation is Lemma~\ref{lem:marg_prop} (which appeared without a proof in Section~\ref{sec:extended_multilinear}).
\lemMargProp*
\begin{proof}
Consider the vector $\vz$ obtained by sequentially applying $\Relax$ to the vector $\vy$ with every one of the elements of $\cN$. In other words, if we denote the elements of $\cN$ by $u_1, u_2, \dotsc, u_n$ in an arbitrary order, then $\vz = \Relax(\dotsc, \Relax(\Relax(\vy, u_1), u_2), \dotsc, u_n)$. By Lemma~\ref{lem:relax}, $\marg(\vz) = \marg(\vy)$ and $F(\vz) \geq F(\vy)$. We also know by the above discussion that all the elements of $\cN$ must be $\vz$-relaxed, and thus, $\RSet(\vz)$ contains every element $u \in \cN$ with probability $\marg_u(\vz)$, independently, which implies $F(\vz) = \bar{F}(\marg(\vz))$. Combining the above observations now yields
\[
	\bar{F}(\marg(\vy))
	=
	\bar{F}(\marg(\vz))
	=
	F(\vz)
	\geq
	F(\vy)
	\enspace.
	\qedhere
\]
\end{proof}

As mentioned above, our algorithm uses two subroutines. The second subroutine, named {\HitConstraint} is given as Algorithm~\ref{alg:hit-constraint}. This subroutine is essentially an adaptation to $2^n$ dimensional vectors of a procedure with the same name introduced in~\cite{calinescu2011maximzing}. 
Given a vector $\vy \in \extendedVectorSpace$ and two distinct $\vy$-relaxed elements $u,v \subseteq \cN$, the subroutine {\HitConstraint} returns a vector $\vz$ that is obtained from $\vy$ by making the maximal possible ``movement" in the direction $\vd=\characteristic_{\{u\}}-\characteristic_{\{v\}}$. {\HitConstraint} also returns a set $A'$ indicating the constraint that became tight, and thus, prevents further movement in this direction. Since we are interested in using {\HitConstraint} also in context of minors of $\cM$, we assume that the matroid $\cM'$ that it gets is an arbitrary matroid over a ground set $\cN'$ that is a subset of $\cN$. The properties of {\HitConstraint} that we use are given by Lemma~\ref{lem:hit-constraint}.

\begin{algorithm}
\DontPrintSemicolon
\caption{\textsc{Hit-Constraint}$(\vy, \cM' = (\cN', \cI'),u,v)$}\label{alg:hit-constraint}
Let $\vx \gets \marg(\vy)$.\\
Let $\delta \gets \min_{\{u\} \subseteq A\subseteq\cN' - v}(r_{\cM'}(A)-\vx(A))$, $\vx(A) \triangleq \sum_{w \in A} x_w$.\label{line:minimum}\\
\lIf{$y_{\{v\}}<\delta$}{Update $y_{\{u\}}\gets y_{\{u\}}+y_{\{v\}}, y_{\{v\}}\gets 0, A'\gets\{v\}$.}
\Else{
	Update $y_{\{u\}}\gets y_{\{u\}}+\delta, y_{\{v\}}\gets y_{\{v\}}-\delta$.\\
	Let $A'$ be the set $A$ for which the minimum was attained on Line~\ref{line:minimum}.
}
\Return $(\vy, A')$
\end{algorithm}

\begin{lemma}\label{lem:hit-constraint}
    Let $\vy\in \extendedVectorSpace$ be a vector such that $\marg(\vy)|_{\cN'} \in B(\cM')$, and let $u, v$ be two distinct $\vy$-relaxed elements of $\cN'$. Then, the output $(\vz,A')$ of ${\HitConstraint}(\vy,\cM', u, v)$ obeys the following.
\begin{itemize}
	\item Every $\vy$-relaxed element is also $\vz$-relaxed.
	\item $\marg(\vz)|_{\cN \setminus \{u, v\}} = \marg(\vy)|_{\cN \setminus \{u, v\}}$.
	\item $\marg(\vz)|_{\cN'}\in B(\cM')$.
	\item Either the set $A'$ obeys $u \in A'$, $v \not \in A'$, and $\sum_{w \in A'} \marg_w(\vz) = r_{\cM'}(A')$, or $A'=\{v\}$ and $z_{\{v\}}=0$.
\end{itemize}
Additionally, computing $(\vz,A')$ can be done using $O(n^3 \log^2 n)$ rank oracle queries to $\cM'$.
\end{lemma}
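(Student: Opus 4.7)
\medskip
\noindent\textbf{Proof proposal.} My plan is to verify the four bulleted properties essentially by bookkeeping, leveraging the fact that \HitConstraint{} modifies at most two singleton coordinates ($y_{\{u\}}$ and $y_{\{v\}}$), and then to handle the query complexity by recognizing the minimization on Line~\ref{line:minimum} as a constrained submodular function minimization.

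For the first property, observe that for any $\vy$-relaxed element $w \in \cN'$, the definition requires $y_S=0$ for every $S \neq \{w\}$ containing $w$. The algorithm touches only $y_{\{u\}}$ and $y_{\{v\}}$, and these are singletons, so no coordinate $y_S$ with $S \neq \{w\}$ and $w \in S$ is modified (this is automatic when $w \notin \{u,v\}$, and when $w \in \{u,v\}$ the only singleton containing $w$ is $\{w\}$ itself). Hence $w$ remains $\vz$-relaxed. For the second property, since $u,v$ are $\vy$-relaxed, every coordinate $y_S$ with $u \in S$ or $v \in S$ other than $y_{\{u\}}, y_{\{v\}}$ already equals $0$, so $\marg_w(\vy)$ for $w \notin \{u,v\}$ depends only on coordinates not modified by the algorithm, yielding $\marg_w(\vy)=\marg_w(\vz)$.

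For the third property, I would first note that because $u$ is $\vy$-relaxed, $\marg_u(\vy)=y_{\{u\}}$ and $\marg_u(\vz)=z_{\{u\}}$, and similarly for $v$. An inspection of both branches shows $z_{\{u\}}+z_{\{v\}}=y_{\{u\}}+y_{\{v\}}$, so $\marg_u(\vz)+\marg_v(\vz)=\marg_u(\vy)+\marg_v(\vy)$. The total mass on $\cN'$ is therefore preserved, which takes care of the equality constraint $\sum_{w\in \cN'}\marg_w = r_{\cM'}(\cN')$ defining $B(\cM')$. For the rank inequalities, I split on whether a set $A\subseteq \cN'$ contains $u$, $v$, both, or neither: if $A$ contains both or neither, nothing changes; if $A$ contains $v$ but not $u$, then $\marg_v$ only decreases, so the inequality is preserved; and if $A$ contains $u$ but not $v$, then $\marg_u$ increases by at most $\delta \leq r_{\cM'}(A)-\vx(A)$ by definition of $\delta$ on Line~\ref{line:minimum}, so the inequality is preserved. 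For the fourth property, in the branch $y_{\{v\}}<\delta$ we directly get $A'=\{v\}$ and $z_{\{v\}}=0$; in the other branch, $A'$ is a minimizer containing $u$ and excluding $v$, and a direct computation $\sum_{w\in A'}\marg_w(\vz)=\vx(A')+\delta = r_{\cM'}(A')$ gives the tight constraint.

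The step I expect to require the most care is the query complexity claim. The quantity $\delta$ on Line~\ref{line:minimum} is the minimum of $g(A)\triangleq r_{\cM'}(A)-\vx(A)$ subject to $u\in A \subseteq \cN'-v$. Since $r_{\cM'}$ is submodular and $\vx$ is modular, $g$ is submodular, so this is a constrained SFM problem that reduces to an unconstrained SFM on the ground set $\cN'-u-v$ by redefining $g'(B)\triangleq g(B\cup\{u\})$. Evaluating $g'$ at a set takes $O(1)$ rank queries (after precomputing the coordinates of $\vx$, which is a one-time cost). Invoking a state-of-the-art strongly polynomial submodular minimization algorithm whose complexity is $O(n^3\log^2 n)$ evaluation-oracle calls then yields the stated query bound; the minimizing set it returns is the desired $A'$.
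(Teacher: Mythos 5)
Your proposal is correct and follows essentially the same route as the paper's proof: the same coordinate-bookkeeping for the first two properties, the same four-way case split on $A\cap\{u,v\}$ for membership in $B(\cM')$, and the same reduction of Line~\ref{line:minimum} to submodular function minimization of $A\mapsto r_{\cM'}(A+u)-\vx(A+u)$ over subsets of $\cN'\setminus\{u,v\}$, solved with an $O(n^3\log^2 n)$-evaluation SFM algorithm. No gaps.
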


\begin{proof}
The first two parts of the lemma hold since $\vz$ and $\vy$ are identical in all coordinates except (maybe) the ones corresponding to the singleton sets $\{u\}$ and $\{v\}$. Our next goal is to prove that $\marg(\vz)|_{\cN'} \in B(\cM')$. To do that, we need to show that for every set $A \subseteq \cN'$ we have $\sum_{w \in A} \marg_w(\vz) \leq r_{\cM'}(A)$, and that this inequality holds as an equality for $A = \cN'$. If $A$ includes both elements $u$ and $v$, or neither of them, then $\sum_{w \in A} \marg_w(\vz) = \sum_{w \in A} \marg_w(\vy)$ because Algorithm~\ref{alg:hit-constraint} always increases $y_{\{u\}}$ by the same amount it decreases $y_{\{v\}}$. Thus, the inequality or equality that we need to prove for $A$ follows from the observations that $\marg(\vy)|_{\cN'} \in B(\cM')$. Similarly, if $A$ includes $v$, but does not include $v$, then $\sum_{w \in A} \marg_w(\vz) \leq \sum_{w \in A} \marg_w(\vy)$, and the inequality we need to prove follows since $\marg(\vy)|_{\cN'} \in B(\cM')$ (notice that $A \neq \cN'$ in this case since $u \not \in A$). It remains to consider the case that $A$ includes $u$, but does not include $v$. By the choice of $\delta$ by Algorithm~\ref{alg:hit-constraint},
\begin{equation} \label{eq:minimum_use}
	\delta \leq r_{\cM'}(A) - \vx(A) = r_{\cM'}(A) - \sum_{w \in A} \marg_w(\vy)
	\enspace.
\end{equation}
Thus, $\sum_{w \in A} \marg_w(\vz) \leq \sum_{w \in A} \marg_w(\vy) + \delta \leq r_{\cM'}(A)$.

If $A' = \{v\}$, then the equality $z_v = 0$ is immediate from Algorithm~\ref{alg:hit-constraint}. Otherwise, $A'$ must be the set on which the minimum on Line~\ref{line:minimum} is attained, which implies that it includes $u$, does not include $v$, and also obeys Inequality~\eqref{eq:minimum_use} as an equality. Thus, $\sum_{w \in A'} \marg_w(\vz) = \sum_{w \in A'} \marg_w(\vy) + \delta = r_{\cM'}(A')$, where the first equality holds since the algorithm chose to return $A'$ instead of $\{v\}$.

To complete the proof of the lemma, it only remains to bound the number of rank oracle queries to $\cM'$ required for implementing Algorithm~\ref{alg:hit-constraint}. The only part of the algorithm that requires any oracle queries is Line~\ref{line:minimum}, which requires us to find the minimum of the function $g\colon 2^{\cN' \setminus \{u, v\}} \to \bR$ defined by $g(S) = r_{\cM'}(S + u) - \vx(S + u)$. Since $\vx(S)$ is a linear function, $g(S)$ inherits the submodularity of the function $r_{\cM'}$. Thus, finding the minimum of $g$ can be done using a single minimization of a submodular function, and the state-of-the-art algorithm for this task (due to Lee et al.~\cite{lee2015faster}) requires $O(n^3 \log^2 n)$ evaluations $g$. Each such evaluation can be done by a single rank oracle query to $\cM$.
\end{proof}

\subsection{The Main Rounding Algorithm} \label{ssc:main_rounding}

We are now ready to present our final rounding algorithm, which appears as Algorithm~\ref{alg:deterministic-rounding}. When this algorithm is invoked, the matroid $\cM'$ should be set to $\cM$. However, in recursive calls $\cM'$ can be a minor of $\cM$, and thus, $\cN'$ is a subset of $\cN$. In contrast, $\vy$ remains a vector in $\extendedVectorSpace$ throughout the execution of Algorithm~\ref{alg:deterministic-rounding}. 
Additionally, at any time during the execution of the algorithm, we use $\Relaxed$ to denote the set of elements $u\in \cN$ on which the algorithm has already invoked the subroutine $\Relax$ (on Line~\ref{line-relax}). 

\begin{algorithm}
\DontPrintSemicolon
\caption{\DeterministicPipage$(\cM' = (\cN', \cI'), F, \vy  )$}\label{alg:deterministic-rounding}
\While{ $\marg(\vy)$ is not integral\label{line-while}}
{ 
Let $S= \Relaxed \cap \{u \in \cN' \mid y_{\{u\}}\in(0,1)\}$.\label{line:S_init}\\
\While{$|S|<2$ \label{line-while2}}{Let $u\in \cN'\setminus \Relaxed$ an element such that $\marg_u(\vy)\in(0,1)$. \label{line:S_pick}\\
$\vy \gets \Relax(\vy,u)$. \label{line-relax}}
\BlankLine
Let $u,v\in S$ be two distinct elements.  \tcp*{We show later $|S| = 2$.\ Hence, $S=\{u,v\}$.}
$(\vy^+, A^{+})\gets \HitConstraint(\vy, \cM', u,v)$.\\
$(\vy^-, A^{-})\gets \HitConstraint(\vy, \cM', v,u)$.\\
\lIf{$F(\vy^+)\geq F(\vy^-)$}{$\vy\gets \vy^+$, $C=A^+$.\label{line:assign_plus}} 
\lElse{$\vy\gets \vy^-$, $C=A^-$.\label{line:assign_minus}}

\BlankLine

\If{$y_{\{u\}}, y_{\{v\}} \in (0, 1)$\label{line:fractional}}{
 \lIf{$|C|\leq |\cN'|/2$}{$\vy\gets\DeterministicPipage(\cM'|_{C}, F, \vy)$.} \label{line-rec1}
 \lElse{
 $\vy\gets\DeterministicPipage(\cM'/C, F, \vy)$.} \label{line-rec2}
 }
}
\Return $\vy$. 
\end{algorithm}

We begin the analysis of Algorithm~\ref{alg:deterministic-rounding} with following observation, which justifies the name of the set $\Relaxed$.
\begin{observation} \label{obs:relaxed_set}
At every point during the execution of Algorithm~\ref{alg:deterministic-rounding}, all the elements of $\Relaxed$ are $\vy$-relaxed.
\end{observation}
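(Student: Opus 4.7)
The plan is to prove the observation by induction on the sequence of operations performed during the algorithm's execution, including those inside recursive calls. The base case is trivial, since $\Relaxed = \varnothing$ initially and the claim holds vacuously. For the inductive step, I note that only two kinds of events can alter either $\vy$ or $\Relaxed$: (i) the assignment $\vy \gets \Relax(\vy, u)$ on Line~\ref{line-relax}, which simultaneously adds $u$ to $\Relaxed$; and (ii) the assignments $\vy \gets \vy^+$ or $\vy \gets \vy^-$ on Lines~\ref{line:assign_plus}--\ref{line:assign_minus}, which update $\vy$ to the output of a preceding $\HitConstraint$ call. I will argue that the invariant is preserved across each event.

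Case (ii) is immediate: by the first bullet of Lemma~\ref{lem:hit-constraint}, every $\vy$-relaxed element is also $\vy^{\pm}$-relaxed, and $\Relaxed$ itself is unchanged, so all elements of $\Relaxed$ remain relaxed. (The preconditions of Lemma~\ref{lem:hit-constraint} are satisfied because the inner \textbf{while} loop of Algorithm~\ref{alg:deterministic-rounding} guarantees that $u,v \in S \subseteq \Relaxed$, which by the inductive hypothesis means both $u$ and $v$ are $\vy$-relaxed.)

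Case (i) requires a little more care. The newly added element $u$ is $\vy$-relaxed after the call by the second bullet of Lemma~\ref{lem:relax}, which explicitly yields $y_S = 0$ for every set $S \neq \{u\}$ containing $u$. For any previously relaxed element $w \in \Relaxed$ with $w \neq u$, I will verify directly from Algorithm~\ref{alg:relax} that no coordinate $y_T$ with $w \in T$ and $T \neq \{w\}$ can become non-zero. The only coordinates modified by {\Relax} are $y_{\{u\}}$ (which does not contain $w$, so irrelevant) and, for each non-empty $S \subseteq \cN - u$, the pair $y_S$ and $y_{S+u}$. The coordinate $y_{S+u}$ is set to $0$, which is harmless. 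For $y_S$, it is set to $1 - (1-y_{S+u})(1-y_S)$; if $w \in S$ and $S \neq \{w\}$, then both $S$ and $S+u$ are sets containing $w$ that differ from $\{w\}$, so by the inductive hypothesis both $y_S$ and $y_{S+u}$ equal $0$ before the update, and the new value remains $0$. If instead $S = \{w\}$, then the coordinate being modified is $y_{\{w\}}$, which plays no role in whether $w$ is relaxed. This exhausts the cases and closes the induction; since recursive invocations only perform the same two kinds of operations, the invariant propagates across them as well.
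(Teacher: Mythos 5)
Your proof is correct and takes essentially the same route as the paper's: both reduce the observation to the fact that {\Relax} and {\HitConstraint} preserve the property of being relaxed, the latter via the first bullet of Lemma~\ref{lem:hit-constraint}. The only difference is that you explicitly verify the preservation property for {\Relax} coordinate by coordinate, a step the paper leaves to the reader (``one can verify'').
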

\begin{proof}
This observation follows from the definition of the set $\Relaxed$ and the fact that both {\Relax} and {\HitConstraint} preserve relaxation in the following sense. If $\vy$ and $\vz$ are the input and output vectors, respectively, of one of these subroutines and $u \in \cN$ is $\vy$-relaxed, then $u$ is also $\vz$-relaxed.
\end{proof}

Our next goal is to prove Lemma~\ref{lem:invariant_rounding}, which is a technical lemma showing that Algorithm~\ref{alg:deterministic-rounding} is well-defined, and always maintains a vector in $B(\cM)$. To prove this lemma, we first state the following lemma, which is well-known, but we prove for completeness. Notice that this lemma applies to arbitrary matroids, and we invoke in our proofs for minors of the original matroid $\cM$.
\begin{lemma} \label{lem:decomposition}
Given a matroid $\cM = (\cN, \cI)$ and a vector $\vx \in [0, 1]^\cN$. For every set $C$, we have both $\vx \in B(\cM)$ and $\sum_{w \in C} x_u = r_\cM(C)$ if and only if both inclusions $\vx|_C \in B(\cM|_C)$ and $\vx|_{\cN \setminus C} \in B(\cM / C)$ hold.
\end{lemma}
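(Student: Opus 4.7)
The plan is to verify both directions of the ``if and only if'' by directly checking the defining inequalities and equalities of the three base polytopes $B(\cM)$, $B(\cM|_C)$, and $B(\cM/C)$. The only nontrivial ingredient is submodularity of the rank function $r_\cM$; everything else is a bookkeeping exercise using the identities $r_{\cM|_C}(S) = r_\cM(S)$ for $S \subseteq C$ and $r_{\cM/C}(S) = r_\cM(S \cup C) - r_\cM(C)$ for $S \subseteq \cN \setminus C$.

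For the forward direction, assume $\vx \in B(\cM)$ and $\sum_{w \in C} x_w = r_\cM(C)$. The inclusion $\vx|_C \in B(\cM|_C)$ is immediate: the inequalities $\sum_{w \in S} x_w \leq r_{\cM|_C}(S) = r_\cM(S)$ for $S \subseteq C$ are inherited from $\vx \in B(\cM)$, and the spanning equality at $C$ is exactly the hypothesis $\sum_{w \in C} x_w = r_\cM(C) = r_{\cM|_C}(C)$. For $\vx|_{\cN \setminus C} \in B(\cM/C)$, any $S \subseteq \cN \setminus C$ satisfies
\[
    \sum_{w \in S} x_w = \sum_{w \in S \cup C} x_w - \sum_{w \in C} x_w \leq r_\cM(S \cup C) - r_\cM(C) = r_{\cM/C}(S),
\]
and the spanning equality at $\cN \setminus C$ follows from $\sum_{w \in \cN} x_w = r_\cM(\cN)$ combined with the tightness at $C$, giving $\sum_{w \in \cN \setminus C} x_w = r_\cM(\cN) - r_\cM(C) = r_{\cM/C}(\cN \setminus C)$.

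For the reverse direction, assume $\vx|_C \in B(\cM|_C)$ and $\vx|_{\cN \setminus C} \in B(\cM/C)$. The equality $\sum_{w \in C} x_w = r_\cM(C)$ is precisely the spanning equality for $B(\cM|_C)$. To verify $\vx \in B(\cM)$, I would take an arbitrary $S \subseteq \cN$, split it as $S = (S \cap C) \sqcup (S \setminus C)$, apply the two hypothesized inclusions to the pieces, and then close with submodularity of $r_\cM$ applied to the pair $(S, C)$:
\[
    \sum_{w \in S} x_w \leq r_\cM(S \cap C) + \bigl[r_\cM(S \cup C) - r_\cM(C)\bigr] \leq r_\cM(S),
\]
where the final step uses $r_\cM(S) + r_\cM(C) \geq r_\cM(S \cap C) + r_\cM(S \cup C)$. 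The global spanning equality $\sum_{w \in \cN} x_w = r_\cM(\cN)$ follows by summing the two spanning equalities at $C$ and $\cN \setminus C$.

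The main ``obstacle'' here is really just notational — keeping straight which rank function belongs to which minor — rather than mathematical. The only genuine use of matroid structure is submodularity of the rank function in the reverse direction; without it, the pointwise bounds on $S \cap C$ and $S \cup C$ would not combine into the needed bound on $S$.
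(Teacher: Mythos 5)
Your proof is correct and follows essentially the same route as the paper's: the forward direction by direct substitution using $r_{\cM/C}(S) = r_\cM(S\cup C) - r_\cM(C)$, and the reverse direction by splitting $S$ into $S\cap C$ and $S\setminus C$ and closing with submodularity of $r_\cM$. (You even verify the spanning equality for the contraction in the forward direction, which the paper leaves implicit.)
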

\begin{proof}
Assume first that $\vx \in B(\cM)$ and $\sum_{w \in C} x_u = r_\cM(C)$ both hold. This assumption immediately implies that $\vx|_C \in B(\cM|_C)$. To see that we also have $\vx|_{\cN \setminus C} \in B(\cM / C)$, notice that for every $A \subseteq \cN \setminus C$,
\[
	\sum_{w \in A} x_w
	=
	\sum_{w \in A \cup C} x_w - r_\cM(C)
	\leq
	r_\cM(A \cup C) - r_\cM(C)
	=
	r_\cM(A \mid C)
	=
	r_{\cM / C}(A)
	\enspace.
\]

Assume now that $\vx|_C \in B(\cM|_C)$ and $\vx|_{\cN \setminus C} \in B(\cM / C)$. The equality $\sum_{w \in C} x_u = r_\cM(C)$ is an immediate corollary of the membership of $\vx|_C$ in $B(\cM|_C)$. Additionally, we have
\begin{align*}
	\sum_{w \in \cN} x_w
	={} &
	\sum_{w \in C} x_w + \sum_{w \in \cN \setminus C} x_w
	=
	r_{\cM|_C}(C) + r_{\cM / C}(\cN \setminus C)\\
	={} &
	r_\cM(C) + r_\cM(\cN \setminus C \mid C)
	=
	r_\cM(\cN)
	\enspace,
\end{align*}
and for every set $A \subseteq \cN$, by the submodularity of $r_\cM$,
\begin{align*}
	\sum_{w \in A} x_w
	={} &
	\sum_{w \in A \cap C} x_w + \sum_{w \in A \setminus C} x_w
	\leq
	r_{\cM|_C}(A \cap C) + r_{\cM / C}(A \setminus C)\\
	={} &
	r_\cM(A \cap C) + r_\cM(A \cup C) - r_\cM(C)
	\leq
	r_\cM(A)
	\enspace.
\end{align*}
Thus, $\vx \in B(\cM)$.
\end{proof}

\begin{lemma} \label{lem:invariant_rounding}
Throughout its execution and recursive calls, Algorithm~\ref{alg:deterministic-rounding} maintains the invariant that the vector $\marg(\vy)|_{\cN'}$ belongs to $B(\cM')$. Notice that this invariant guarantees, in particular, that $\marg(\vy)|_{\cN'}$ cannot have exactly one fractional coordinate. Therefore, if $|S|<2$, then Line~\ref{line:S_pick} of the algorithm is always able to find $u\in \cN'\setminus \Relaxed$ such that $\marg_{u}(\vy)\in(0,1)$.
\end{lemma}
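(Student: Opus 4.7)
The plan is to prove the invariant by induction on the execution trace of the algorithm, treating each operation that modifies $\vy$ or the current matroid $\cM'$ as one step. The base case is the initial invocation, where $\cM' = \cM$ and the precondition of the algorithm guarantees $\marg(\vy) \in B(\cM)$; here we use the reduction discussed at the beginning of Section~\ref{sec:pipage_round} to ensure that the input vector actually lies in $B(\cM)$ rather than only in $P(\cM)$. There are then three kinds of inductive steps to verify.

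First, the call $\vy \gets \Relax(\vy, u)$ on Line~\ref{line-relax} leaves $\marg(\vy)$ unchanged by Lemma~\ref{lem:relax}, so the invariant is trivially preserved. Second, the assignments on Lines~\ref{line:assign_plus} and~\ref{line:assign_minus} replace $\vy$ by the output of {\HitConstraint}, which is invoked with two distinct $\vy$-relaxed elements of $\cN'$ (such elements exist in $S$ by the inner while loop, using Observation~\ref{obs:relaxed_set}); Lemma~\ref{lem:hit-constraint} directly guarantees $\marg(\vy^{\pm})|_{\cN'} \in B(\cM')$, preserving the invariant. Third, a recursive call is made only when the branch of Line~\ref{line:fractional} is taken, which forces $y_{\{u\}}, y_{\{v\}} \in (0,1)$; thus the ``$A' = \{v\}$ and $z_{\{v\}} = 0$'' case of Lemma~\ref{lem:hit-constraint} cannot occur, so the set $C$ returned must satisfy $u \in C$, $v \notin C$, and $\sum_{w \in C} \marg_w(\vy) = r_{\cM'}(C)$. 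Combining this tight equality with the invariant $\marg(\vy)|_{\cN'} \in B(\cM')$, Lemma~\ref{lem:decomposition} yields both $\marg(\vy)|_C \in B(\cM'|_C)$ and $\marg(\vy)|_{\cN' \setminus C} \in B(\cM'/C)$, which is exactly the precondition needed for the recursive call in either branch.

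It remains to argue that Line~\ref{line:S_pick} always succeeds whenever it is reached. Since the outer while loop guarantees $\marg(\vy)$ is not integral and we are in a recursive call where $\marg(\vy)|_{\cN'} \in B(\cM')$, the coordinates of $\marg(\vy)|_{\cN'}$ sum to the integer $r_{\cM'}(\cN')$; hence if any coordinate is fractional, at least two coordinates must be fractional. By Observation~\ref{obs:relaxed_set}, every $u \in \Relaxed$ is $\vy$-relaxed, so $y_{\{u\}} = \marg_u(\vy)$, and therefore the set $S$ defined on Line~\ref{line:S_init} is exactly the set of $\Relaxed \cap \cN'$ elements whose $\marg$-value is fractional. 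If $|S| < 2$, the count of fractional coordinates in $\marg(\vy)|_{\cN'}$ contributed by $\Relaxed$ is at most one, so at least one fractional coordinate must come from $\cN' \setminus \Relaxed$, giving us the element that Line~\ref{line:S_pick} needs.

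The main obstacle I anticipate is the recursive step: we need the tight equality $\sum_{w \in C}\marg_w(\vy) = r_{\cM'}(C)$ to invoke Lemma~\ref{lem:decomposition}, and this crucially relies on the fact that the branch of Line~\ref{line:fractional} is only taken when both $y_{\{u\}}$ and $y_{\{v\}}$ remained strictly fractional after the chosen {\HitConstraint} step, ruling out the degenerate output $A' = \{v\}$ of Lemma~\ref{lem:hit-constraint}. Everything else is a fairly routine inductive bookkeeping argument on top of the properties already established for the two subroutines.
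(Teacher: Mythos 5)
Your argument is correct as far as it goes and follows the same route as the paper (induction over the execution trace, using Lemma~\ref{lem:relax} for the {\Relax} steps, Lemma~\ref{lem:hit-constraint} for the {\HitConstraint} assignments, and Lemma~\ref{lem:decomposition} together with the tightness of the constraint $C$ to enter a recursive call). Your treatment of why Line~\ref{line:S_pick} succeeds is also fine, and in fact spells out a detail the paper leaves implicit, namely that for relaxed elements $y_{\{u\}} = \marg_u(\vy)$, so $S$ is precisely the set of relaxed elements of $\cN'$ with fractional marginals.

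However, there is a missing case. By your own framing --- ``each operation that modifies $\vy$ or the current matroid $\cM'$ is one step'' --- the \emph{return} from a recursive call is such a step: the current matroid switches back from $\cM'|_C$ (or $\cM'/C$) to the parent's $\cM'$, while $\vy$ has been modified by the child. You verify the precondition going \emph{into} the recursion but never re-establish the invariant for the parent's matroid coming \emph{out} of it, and the parent does continue to iterate afterwards (its outer while loop may run again, invoking {\HitConstraint} under the assumption that $\marg(\vy)|_{\cN'} \in B(\cM')$). Closing this requires two observations: (i) the child maintains its own invariant by induction, so at its termination $\marg(\vy)$ restricted to the child's ground set lies in the base polytope of the child's matroid; and (ii) {\Relax} and {\HitConstraint} inside the child never change $\marg_w(\vy)$ for $w$ outside the child's ground set, so the restriction of $\marg(\vy)$ to the complementary part is unchanged and still lies in the base polytope of the complementary minor. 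One then applies the \emph{converse} direction of Lemma~\ref{lem:decomposition} (the ``if'' direction you did not use) to conclude $\marg(\vy)|_{\cN'} \in B(\cM')$ for the parent. Without this step the induction is incomplete.
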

\begin{proof}
Since the input vector for Algorithm~\ref{alg:deterministic-rounding} obeys $\marg(\vy) \in B(\cM)$ and $\cM' = \cM$ in the first recursive call, the inclusion $\marg(\vy)|_{\cN'} = \marg(\vy) \in B(\cM) = B(\cM')$ trivially holds when the algorithm begins executing. Thus, to prove the lemma we only need to prove that the invariant is maintained in every point of Algorithm~\ref{alg:deterministic-rounding} in which either the vector $\vy$ or the matroid $\cM'$ changes. The first such point is Line~\ref{line-relax}, which executes the subroutine {\Relax} on the vector $\vy$. However, by Lemma~\ref{lem:relax} this subroutine does not change $\marg(\vy)$ at all, and thus, Line~\ref{line-relax} maintains the invariant.

The second point we need to consider is the assignment of either $\vy^+$ or $\vy^-$ to $\vy$ on Lines~\ref{line:assign_plus} and~\ref{line:assign_minus}. The vectors $\vy^+$ and $\vy^-$ are generated from the vector $\vy$ using the procedure {\HitConstraint}. Since the elements $u$ and $v$ passed to this subroutine are guaranteed to be $\vy$-relaxed by Observation~\ref{obs:relaxed_set}, Lemma~\ref{lem:hit-constraint} guarantees that if the invariant holds when the subroutine {\HitConstraint} is called, then the vectors $\vy^+$ and $\vy^-$ produced by this subroutine both belong to $B(\cM')$ when restricted to $\cN'$. Thus, the assignment of either $\vy^+$ or $\vy^-$ to $\vy$ preserves the invariance.

The third point we need to consider is entering a new recursive call. Here the vector $\vy$ remains unchanged, but the martroid $\cM'$ is either restricted to $\cM'|_C$ or contracted to $\cM / C$. Lemma~\ref{lem:decomposition} shows that the invariant is maintained after this restriction if $\sum_{w \in C} \marg_w(\vy) = r_\cM(C)$. By Lemma~\ref{lem:hit-constraint}, this equality holds for both $\vy^+$ and $\vy^-$ (and thus, also for $\vy$) when $y_{\{u\}}$ and $y_{\{v\}}$ are non-zero, and we know that both these coordinates are non-zero since the fact that we have reached a recursive call implies that $y_{\{u\}}, y_{\{v\}} \in (0, 1)$ (as checked by the condition of Line~\ref{line:fractional}).

The last point we need to consider is the return from a recursive call. Here again the vector $\vy$ remains unchanged, but the matroid $\cM'$ changes. Assume without loss of generality that we return from a recursive call on Line~\ref{line-rec1} of Algorithm~\ref{alg:deterministic-rounding} (the other case is similar). When this recursive call was started, we had the equality $\sum_{w \in C} \marg_w(\vy) = r_\cM(C)$ for the reasons explained above, and thus, Lemma~\ref{lem:decomposition} guaranteed that at that time we had both $\marg(\vy)|_C \in B(\cM|_C)$ and $\marg(\vy)|_{\cN \setminus C} \in B(\cM / C)$. The first of these inclusions remains if the invariant is maintained during the recursive call, and the second of these inclusions remains since both {\Relax} and {\HitConstraint} preserve $\marg_w(\vy)$ for every element $w \in \cN \setminus \cN'$ when the elements $u$ and $v$ passed to {\HitConstraint} are chosen among the elements of $\cN'$, which implies that the recursive call does not modify $\marg(\vy)|_{\cN \setminus C}$. Hence, both the above inclusions hold also when the recursive call terminates, and thus, the return of the recursive call maintains the invariant by Lemma~\ref{lem:decomposition}.
\end{proof}

The following theorem completes the analysis of Algorithm~\ref{alg:deterministic-rounding}. Notice that this theorem implies Theorem~\ref{thm:deterministicRounding}. More specifically, to get the algorithm guaranteed by Theorem~\ref{thm:deterministicRounding} one needs to (i) make $\marg(\vy)$ belong to $B(\cM)$ as explained at the beginning of Section~\ref{sec:pipage_round}, (ii) execute Algorithm~\ref{alg:deterministic-rounding} on the resulting vector $\vy$, and then (iii) return the independent set $T=\{u \in \cN \setminus D \mid \marg_u(\vy)=1\}$.

\begin{theorem} \label{thm:rounding_analysis}
    If $\marg(\vy)\in B(\cM)$, then Algorithm \ref{alg:deterministic-rounding} returns an integral vector $\vz \in B(\cM)$ satisfying $F(\vz) \geq F(\vy)$ using $O(n^2 \cdot  2^{\ff(\vy)})$ value oracle queries to $f$ and $O(n^5 \log^2 n)$ independence oracle queries to $\cM$.
\end{theorem}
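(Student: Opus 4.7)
The plan is to verify three claims in turn: (a) the returned vector $\vz$ is integral with $\marg(\vz)\in B(\cM)$, (b) $F(\vz)\geq F(\vy)$, and (c) the oracle complexity bounds hold. For (a) and (b), most of the work has already been packaged into earlier lemmas. Integrality follows because the outer while loop terminates only once $\marg(\vy)|_{\cN'}$ is integral, and Lemma~\ref{lem:invariant_rounding} guarantees the invariant $\marg(\vy)|_{\cN'}\in B(\cM')$, so at termination of the top-level call $\marg(\vy)$ is the characteristic vector of a base of $\cM$. For value preservation, note that $\vy$ is modified only through $\Relax$ (which is non-decreasing for $F$ by Lemma~\ref{lem:relax}) and through the assignment $\vy\gets\vy^\pm$. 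Since $\vy^+$ and $\vy^-$ are obtained by sliding $\vy$ along the line of direction $\pm(\characteristic_{\{u\}}-\characteristic_{\{v\}})$ with $u\neq v$ distinct, the singletons $\{u\}$ and $\{v\}$ are disjoint elements of $2^\cN$, so Lemma~\ref{lem:extendedF} gives convexity of $F$ along this line, which in turn yields $\max\{F(\vy^+),F(\vy^-)\}\geq F(\vy)$; the algorithm always picks the larger side.

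For the iteration count, I would classify each outer-loop iteration as either (i) a progress step, in which $\HitConstraint$ returns $A'=\{v\}$ and $z_{\{v\}}=0$, strictly reducing the number of fractional coordinates in $\marg(\vy)|_{\cN'}$ by one, or (ii) a recursion step, in which both $y_{\{u\}},y_{\{v\}}$ remain in $(0,1)$ and the algorithm recurses on either $\cM'|_C$ or $\cM'/C$ whose ground set has size at most $|\cN'|/2$ by construction. Letting $I(n')$ bound the iterations in the subtree of a recursive call on ground set of size $n'$ and observing that progress steps in that subtree together integralize at most $n'$ coordinates while every direct child subcall has ground set at most $n'/2$, a standard induction gives $I(n)=O(n\log n)$.

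The per-iteration cost decomposes into two parts: the $O(1)$ evaluations of $F$ (each costing $O(2^{\ff(\vy)})$ value oracle queries by the extended multilinear extension's sparse support structure) and the $O(1)$ calls to $\HitConstraint$ (each using $O(n^3\log^2 n)$ rank queries by Lemma~\ref{lem:hit-constraint}, which can be simulated by $O(n)$ independence queries via the standard greedy reduction). Multiplying by $I(n)=O(n\log n)$ gives the claimed $O(n^5\log^2 n)$ independence oracle queries (up to the logarithmic factor), and the value-oracle total reduces to the claim $O(n^2\cdot 2^{\ff(\vy)})$ provided that $\ff(\vy)$ never grows beyond $\ff(\vy_{\text{input}})+O(1)$ at the moments $F$ is evaluated.

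The main obstacle is precisely this bound on $\ff(\vy)$. Naively, each $\Relax$ could add one to $\ff(\vy)$ and up to $n$ such calls could occur, which would make value queries blow up to $2^{\Theta(n)}$. Two design choices rescue us. First, the set $S=\Relaxed\cap\{u\in\cN'\mid y_{\{u\}}\in(0,1)\}$ deliberately reuses previously relaxed elements, so $\Relax$ is invoked only when fewer than two relaxed fractional singletons are available in the current subproblem. Second, the algorithm always recurses into the smaller of $\cM'|_C$ and $\cM'/C$. Combining these, I would argue by induction on the recursion depth that when control returns from a recursive call on $\cN'$, every element of $\Relaxed\cap\cN'$ has $\marg_u(\vy)=y_{\{u\}}\in\{0,1\}$ (because $\marg(\vy)|_{\cN'}$ is now integral and relaxed elements contribute to $\marg$ only via their singleton coordinate), so the $\Relax$-induced fractional singletons created inside the subcall have all been reabsorbed. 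Since the subcall's ground set is at most half of its parent's, the ``active'' additional fractional coordinates at any time along the current recursion path telescope geometrically to $O(1)$, yielding $\ff(\vy)\leq \ff(\vy_{\text{input}})+O(1)$ at every $F$ evaluation and completing the query analysis.
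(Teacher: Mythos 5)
Your treatment of integrality and of $F(\vz)\geq F(\vy)$ matches the paper's (the invariant $\marg(\vy)|_{\cN'}\in B(\cM')$, monotonicity of $F$ under $\Relax$, and convexity of $F$ along $\characteristic_{\{u\}}-\characteristic_{\{v\}}$), and you correctly identify the two design choices that keep $\ff(\vy)$ under control. However, there is a genuine gap in the final quantitative step: the claim that the additional fractional coordinates ``telescope geometrically to $O(1)$'' is false. The halving of the ground set controls the \emph{depth} of the recursion (at most $\log_2 n$), not the number of leftover fractional singletons per level. At each recursive call with pair $(u,v)$, exactly one of $u,v$ lies outside the child's ground set, and its singleton coordinate stays fractional for the entire duration of that child call; these contributions do not shrink with ground-set size — they accumulate one per ancestor along the current recursion path. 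The correct bound, which the paper proves, is $\ff(\vy)\leq \ff(\vy_{\mathrm{input}})+2+\log_2 n$, so each evaluation of $F$ costs up to $4n\cdot 2^{\ff(\vy_{\mathrm{input}})}$ value queries; this factor of $n$ is precisely why the theorem states $O(n^2\cdot 2^{\ff(\vy)})$ rather than $O(n\cdot 2^{\ff(\vy)})$ value queries.

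A secondary issue is your iteration count. The paper proves the sharper bound of at most $2n$ iterations in total: every iteration that does not recurse permanently integralizes one of $y_{\{u\}},y_{\{v\}}$, so there are at most $n$ such iterations, and every recursive call must terminate with one, so there are at most $n$ recursive calls. Your $O(n\log n)$ bound is derived from a recurrence that is not really set up (the recursion tree is not balanced, and a single call can spawn many children across its iterations); it loses a $\log n$ factor in the independence-query bound, and — once the $\ff$ bound is corrected to $+O(\log n)$ — it pushes the value-query count to $O(n^2\log n\cdot 2^{\ff(\vy)})$, exceeding the stated bound. Both issues are repairable, but as written the complexity analysis does not establish the theorem.
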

\begin{proof}
The inclusion $\vz \in B(\cM)$ follows immediately from Lemma~\ref{lem:invariant_rounding}. Our next goal is to prove that Algorithm~\ref{alg:deterministic-rounding} never decreases the value of $F(\vy)$, and therefore, $F(\vz) \geq F(\vy)$. To prove this goal, it suffices to consider only the places in Algorithm~\ref{alg:deterministic-rounding} in which $\vy$ changes. The first such place is Line~\ref{line-relax}, which by Lemma~\ref{lem:relax} does not reduce the value of $F(\vy)$. The other place we need to consider is the assignment of either $\vy^+$ or $\vy^-$ to $\vy$ on Lines~\ref{line:assign_plus} and~\ref{line:assign_minus}. Let $\vy'$ and $\vy''$ denote the value of $\vy$ before and after this assignment. By the condition of Line~\ref{line:assign_plus}, $F(\vy'') = \max\{F(\vy^+), F(\vy^-)\}$. Additionally, Lemma~\ref{lem:extendedF} guarantees that $F(\vy)$ is convex in the direction $\vd=\characteristic_{\{u\}}-\characteristic_{\{v\}}$, which implies at least one of the values $F(\vy^+)$ of $F(\vy^-)$ is as large as $F(\vy')$. Putting these observations together, we get $F(\vy'') \geq F(\vy')$, and thus, Lines~\ref{line:assign_plus} and~\ref{line:assign_minus} do not decrease the value of $F(\vy)$.
		
Next, we would like to bound the total number of iterations made by all the recursive calls of Algorithm~\ref{alg:deterministic-rounding}. Consider an arbitrary iteration of this algorithm that invokes another recursive call. Since a recursive call was invoked, the condition of Line~\ref{line:fractional} guarantees that both $y_{\{u\}}$ and $y_{\{v\}}$ were fractional before the recursive call, and only one of these coordinates can be made integral by the recursive call (because exactly one of the elements $u$ or $v$ belongs to $C$). Therefore, at least one of the coordinates $y_{\{u\}}$ and $y_{\{v\}}$ is still fractional following the recursive call, which implies that either $\marg_u(\vy)$ or $\marg_v(\vy)$ is fractional after the recursive call (because both $u$ and $v$ are $\vy$-relaxed at this point). Thus, the iteration we consider cannot be the last iteration (of the current recursive call) of Algorithm~\ref{alg:deterministic-rounding}. In other words, we have proved that the last iteration of every recursive call of Algorithm~\ref{alg:deterministic-rounding} does not invoke another recursive call.

Observe now that every iteration of Algorithm~\ref{alg:deterministic-rounding} that does not invoke a recursive call makes progress in the following sense. Both $\marg_u(\vy)$ and $\marg_v(\vy)$ were fractional before the iteration, but after the iteration either $\marg_u(\vy) = y_{\{u\}} \in \{0, 1\}$ or $\marg_v(\vy) = y_{\{v\}} \in \{0, 1\}$. Therefore, the number of iterations that do not invoke a recursive call cannot exceed $n$. Together with the previous observation that every recursive call of Algorithm~\ref{alg:deterministic-rounding} terminates with such an iteration, we get that Algorithm~\ref{alg:deterministic-rounding} makes at most $n$ recursive calls. Thus, Algorithm~\ref{alg:deterministic-rounding} makes at most $n$ iterations that do not invoke a recursive call, and at most $n$ iterations that invoke a recursive call, and hence, at most $2n$ iterations in total.

Given the above, we know that Algorithm~\ref{alg:deterministic-rounding} evaluates $F$ at most $4n$ times. To determine the number of value oracle queries to $f$ required for this purpose, we need to understand how the value of $\ff(\vy)$ changes during the execution of the algorithm. For this purpose, we observe that the size of the set $S$ is always at most $2$. This observation can be proved inductively by noticing the following.
\begin{itemize}
	\item When $S$ is first computed in the main recursive call (on the first time that Line~\ref{line:S_init} is executed), it is empty.
	\item Consider now a recursive call $C$ of Algorithm~\ref{alg:deterministic-rounding} that is not the main recursive call, and its parent recursive call $P$. Below $u$, $v$, $\vy$, $\vy^+$ and $\vy^-$ all denote their values in the parent recursive call $P$. The input vector for the recursive call $C$ is either the vector $\vy^+$ or the vector $\vy^-$, and by Lemma~\ref{lem:hit-constraint} the marginal vectors $\marg(\vy^+)$ and $\marg(\vy^-)$ are both equal to $\marg(\vy)$ on all coordinates except (maybe) the coordinates corresponding to $u$ and $v$. Thus, every element of the set $S$ first computed by recursive call $C$, except maybe for the elements $u$ and $v$, must belong also to the set $S$ of the parent recursive call $P$. Since, $u$ and $v$ belong to the set $S$ of the parent recursive call $P$ by definition, we get that the set $S$ first computed by recursive call $C$ is a subset of the set $S$ in its parent recursive call $P$.
	\item Consider any iteration $C$ of Algorithm~\ref{alg:deterministic-rounding} which is not the first iteration of its recursive call, and let $P$ be the previous iteration of the algorithm. Lemma~\ref{lem:hit-constraint} guarantees that the assignments done by Lines~\ref{line:assign_plus} and~\ref{line:assign_minus} of this previous iteration affect only the coordinates of $\marg(\vy)$ corresponding to $u$ and $v$. Additionally, if a recursive call was invoked by the previous iteration $P$, then this recursive call may affect coordinates of $\vy$ corresponding to the ground set that it gets, and may relax elements in this ground set. However, when this recursive call returns, the coordinates of $\marg(\vy)$ corresponding to all these elements become integral, which guarantees that they are not added to the set $S$ first computed by the current iteration $C$. Thus, every element of this set $S$, except maybe for the elements $u$ and $v$ must belong to the final set $S$ of the previous iteration $P$, which implies that the set $S$ first computed by the current iteration $C$ is a subset of the final set $S$ of the previous iteration $P$ for the same reasons explained in the previous bullet.
\end{itemize}

Another important observation is that when a recursive call $C$ is invoked, one of the elements $u$ or $v$ of the parent recursive call $P$ remains within the set $S$ of the recursive call $S$ as long as $\marg(\vy)_u$ or $\marg_v(\vy)$, respectively, are still fractional. Together with the previous observation, this implies that at every point in the execution of  Algorithm~\ref{alg:deterministic-rounding} the number of elements that belong to the set $S$ in some level of the recursion, and still have a fractional coordinate in $\marg(\vy)$ is upper bounded by $2$ plus the depth of the recursion of the algorithm. Since only elements of this kind can belong to the set $\Relaxed \cap \{u \in \cN \mid y_{\{u\}}\in(0,1)\}$, we get that the size of this set is also upper bounded by the same expression.

We now note that, since the size of $\cN'$ at least halves with every recursive call of Algorithm~\ref{alg:deterministic-rounding}, the depth of the recursion of Algorithm~\ref{alg:deterministic-rounding} is upper bounded by $\log_2 n$; and hence, $|\Relaxed \cap \{u \in \cN \mid y_{\{u\}}\in(0,1)\}| \leq \log_2 n$. In contrast, if we denote by $k$ the initial value of $\ff(\vy)$, then we claim that
\[
	|\Relaxed \cap \{u \in \cN \mid y_{\{u\}}\in(0,1)\}|
	\geq
	\ff(\vy) - k
	\enspace.
\]
To see why this is the case, we note two things.
\begin{itemize}
	\item By Lemma~\ref{lem:relax}, {\Relax} can increase $\ff(\vy)$ only by $1$, and whenever {\Relax} is called by Algorithm~\ref{alg:deterministic-rounding} the size of the set $|\Relaxed \cap \{u \in \cN \mid y_{\{u\}}\in(0,1)\}|$ increases by $1$.
	\item By Lemma~\ref{lem:hit-constraint}, {\HitConstraint} affects only coordinates of $\marg(\vy)$ corresponding to the elements $u, v \in \Relaxed$. Thus, the assignment of either $\vy^+$ or $\vy^-$ to $\vy$ on Lines~\ref{line:assign_plus} and~\ref{line:assign_minus} decreases both $\ff(\vy)$ and the size of the set $|\Relaxed \cap \{u \in \cN \mid y_{\{u\}}\in(0,1)\}|$ by exactly the number of coordinates (out of $y_{\{u\}}$ and $y_{\{v\}}$) that become integral due to this assignment.
\end{itemize}

Combining the above inequalities, we get that throughout the execution of Algorithm~\ref{alg:deterministic-rounding}, it holds that $\ff(\vy) \leq k + |\Relaxed \cap \{u \in \cN \mid y_{\{u\}}\in(0,1)\}| \leq k + 2 + \log_2$. Thus, each evaluation of $F(\vy)$ in this algorithm requires at most $2^{k + \log_2 n + 2} = 4n \cdot 2^k$ values oracle queries to $f$. Multiplying this expression by the above proved upper bound of $4n$ on the number of times Algorithm~\ref{alg:deterministic-rounding} evaluates $F$, we get the bound on the number of value oracle queries to $f$ given in the theorem.

It remains to determine the number of independence oracle queries to $\cM$ necessary for implementing Algorithm~\ref{alg:deterministic-rounding}. This algorithm accesses the matroid only through the subroutine {\HitConstraint}, which is invoked $O(n)$ times ($2$ times in each one of the up to $2n$ iterations of the algorithm). Each invocation of {\HitConstraint} then uses $O(n^3 \log^2 n)$ rank oracle queries to $\cM'$. Below we explain how such a rank oracle query can be implemented using $n$ independence oracle queries to $\cM$. Multiplying this expression by the $O(n^3 \log^2 n)$ rank oracle queries used by each invocation of {\HitConstraint} and the number $O(n)$ of such invocations leads to the bound on the independence oracle queries to $\cM$ stated in the theorem.

Consider an arbitrary minor $(\cM / D)|_C$ of $\cM$ and a set $S \subseteq C \setminus D$. We need to show how to evaluate $r_{(\cM / D)|_C}(S) = r_\cM(S \cup D) - r_\cM(D)$ using at most $n$ independence oracle queries to $\cM$. The value of $r_\cM(D)$ is the maximum size of an independent subset of $D$, and by the matroid exchange axiom, such a set $D'$ can be grown as follows. The set $D'$ is initially empty. Then, we consider once every element $w \in D$ (in an arbitrary order), and add $w$ to $D'$ unless this violates the independence of $D'$. 
Similarly, the value of $r_\cM(S \cup D)$ is the maximum size of an independent subset of $S \cup D$, and such a set can be obtained by a growth process that starts with a set $S' = D'$, and then considers once every element $w \in S \setminus D$, and adds $w$ to $S'$ unless this violates the independence of $S'$. Running the two above growth processes allow us to evaluate both $r_\cM(D)$ and $r_\cM(S \cup D)$ (and therefore, also $r_{(\cM / D)|_C}(S)$) using $|D| + |S \setminus D| = |S \cup D| \leq n$ independence oracle queries to $\cM$, as promised.
%
\end{proof}

\section{Concluding Remarks}

In this work, we introduced the {\em extended multilinear extension} as a tool for derandomizing submodular maximization algorithms that are based on the successful ``solve fractionally and then round'' approach. We demonstrated this new tool on the fundamental problem of maximizing a submodular function subject to a matroid constraint, and showed that it allows for a deterministic implementation of both the fractionally solving step and the rounding step. 

There are several related open questions and interesting research directions. 
The most obvious one is to use our new tool to derandomize additional known ``solve fractionally and then round'' algorithms. For example, one can try to derandomize the recent $0.401$-approximation algorithm of Buchbinder and Feldman~\cite{buchbinder2024constrained}, which is currently the best known algorithm for maximizing a (non-monotone) submodular function subject to a matroid constraint. Derandomizing this algorithm will completely close the gap between the best known randomized and deterministic algorithms for this fundamental problem, and we believe that it can be done. However, derandomization of this algorithm seems to require a significant additional technical effort, and thus, it is outside the scope of the current paper, and we intend to pursue it in a future work. It will also be interesting to derandomize algorithms for other constraints. For example, derandomizing the algorithms of Kulik et al.~\cite{kulik2013approximations} for maximizing a submodular function subject to a constant number of knapsack constraints. 

Another natural question is to improve the number of oracle queries required for the deterministic algorithm. While our fractional solution algorithm has an almost linear query complexity, the ``pipage-based" rounding step requires $O_\eps(n^5 \log^2 n)$ independence oracle queries to $\cM$, and it will be interesting to study whether this query complexity can be improved without sacrificing the algorithm's determinisity. A more subtle query complexity question is about to the dependence of this query complexity on $\eps$. Since our deterministic fractional solution algorithm from Section~\ref{sec:measured} constructs a solution that can have as many as $\eps^{-4}$ fractional coordinates, each evaluation of the extended multilinear extension by our algorithm might require $2^{1/\eps^4}$ queries to the underlying submodular function. In other words, our algorithms are efficient PTASs in the sense that their query complexities are $h(\eps) \cdot O(\poly(n, r))$ for some function $h$. This is also the case for the deterministic non-oblivious local search algorithm that was recently proposed for the special case of a monotone submodular objective function~\cite{BF24a}. It is open to determine whether it is possible to design a deterministic algorithm for this special case that can guarantee $(1 - 1/e - \eps)$-approximation using only $O(\poly(n, r, \eps^{-1}))$ oracle queries.

\appendix

\section{Presentation and Analysis of \texorpdfstring{\AcceleratedSplit}{\AcceleratedSplitText}} \label{app:split}

In this section, we prove Proposition~\ref{prop-partition-accelerated}, which we repeat here for convenience.

\propPartitionAccelerated*

The algorithm {\AcceleratedSplit} appears as Algorithm~\ref{alg:accelerated_split}. As mentioned above, {\AcceleratedSplit} is based on the thresholding technique of Badanidiyuru and Vondr{\'{a}}k~\cite{babanidiyuru2014fast}. It maintains a threshold $\tau$ that starts large and decreases over time. For every value of $\tau$, the algorithm makes a pass over all the elements and adds each element $u \in \cN \setminus T$ to one of the sets $T_1, T_2, \dotsc, T_\ell$ if the set $T+u$ remains independent, and the marginal contribution of $u$ with respect to the set $T_j$ to which it is added is at least $\tau$. When the threshold $\tau$ becomes small enough, the algorithm makes $T$ a base by adding to $T_1$ enough dummy elements, and then terminates.
We remark that, as written, Algorithm~\ref{alg:accelerated_split} implicitly assumes that the matroid contains no self loops, and hence, $\max_{u\in \cN}\{f(u)\} \leq f(OPT)$.\footnote{A self loop in a matroid is an element that does not belong to any independent set.} If this is not the case, the self loops should be removed before executing Algorithm~\ref{alg:accelerated_split}, which requires $O(n)$ independence oracle queries to $\cM$. It also worth mentioning that Line~\ref{line:dummy} of Algorithm~\ref{alg:accelerated_split} does not affect the value of $\sum_{j = 1}^{\ell} f(T_j \mid \varnothing)$, and thus, is technically redundant. However, this line allows us to assume that $T$ is a base when Algorithm~\ref{alg:accelerated_split} terminates, which is convenient in its analysis.

\begin{algorithm}[th]
\caption{\AcceleratedSplit$(\cN, f, \ell,\eps)$} \label{alg:accelerated_split}
\DontPrintSemicolon
Initialize: $T_1\gets\varnothing, T_2\gets\varnothing, \dotsc, T_{\ell}\gets\varnothing$ and $\tau_0 \gets  \max_{u \in \cN} \{f(\{u\})\}$.\\
Use $T$ to denote the union $\cup_{j=1}^{\ell}T_j$.\\
Let $I \gets \lceil \frac{2}{\eps} \cdot \log(2r/\eps)\rceil$. \\
\For{$k = 1$ \KwTo $I$\label{line:iteration_loop}}
{
 $\tau_k \gets (1 - \eps / 2) \cdot \tau_{k-1}$.\\
	\For{every $u \in \cN$ and $j\in [\ell]$\label{line:scan}}
	{
		\lIf{$u\not\in T$, $T + u \in \cI$ and $f(u \mid T_j) \geq \tau_k$\label{line:condition_accelerated}}
		{$T_j \gets T_j+u$.\label{line:adding}}
	}
}
Add dummy elements to $T_1$ until $T$ is a base of $\cM$.\label{line:dummy}\\
\Return $(T_1, \ldots, T_{\ell})$.
\end{algorithm}

We begin the analysis of {\AcceleratedSplit} by the following simple observation that follows directly from the algorithm's description.

\begin{observation}
The sets $T_1, T_2, \ldots, T_{\ell} \subseteq \cN$ remain disjoint throughout the execution of the algorithm, and their union $T$ remains independent (and is a base when the algorithm terminates). The algorithm requires $O(\eps^{-1} n\ell \log (r/\eps))$ value oracle queries to the function $f$ and $O(\eps^{-1} n \log (r/\eps))$ independence oracle queries to the matroid $\cM$.
\end{observation}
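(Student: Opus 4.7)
The plan is to verify each of the claimed properties directly from the pseudocode of Algorithm~\ref{alg:accelerated_split}. The proof is essentially a bookkeeping exercise, so the task reduces to identifying the right invariants and counting queries carefully.

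First I would establish the structural properties by invariant maintenance. Initially all $T_j$ are empty, so the two invariants (i) the $T_j$'s are pairwise disjoint and (ii) their union $T$ is independent hold trivially. The only place in the algorithm where these sets change (before Line~\ref{line:dummy}) is Line~\ref{line:adding}, and the guard on Line~\ref{line:condition_accelerated} ensures $u \notin T$, which preserves disjointness since $u$ is added to only one $T_j$, and $T + u \in \cI$, which is exactly what is needed to preserve independence of the union. Thus both invariants are maintained inductively throughout the main loop. For the base property at termination, Line~\ref{line:dummy} pads $T_1$ with dummy elements until $T$ becomes a base. The preliminaries guarantee that $\cN$ contains $r_\cM(\cN)$ dummy elements, and that a set $S$ is independent whenever $|S| \leq r_\cM(\cN)$ and $S \setminus D \in \cI$. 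Since $T$ is independent after the loop and at most $r_\cM(\cN)$ dummies have been used, enough unused dummies remain to pad $T$ up to a base.

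Next I would count the queries. The outer loop of Line~\ref{line:iteration_loop} runs $I = \lceil (2/\eps) \log(2r/\eps) \rceil = O(\eps^{-1} \log(r/\eps))$ times. In each iteration the inner scan considers every pair $(u, j) \in \cN \times [\ell]$. The key observation for the independence count is that the test $T + u \in \cI$ depends only on $u$ and not on $j$, so during a fixed outer iteration it is enough to evaluate it once per $u$ (caching the answer and reusing it across the $\ell$ choices of $j$, or equivalently restructuring the inner scan to iterate over $u$ on the outside); this yields $O(n)$ independence queries per outer iteration, hence $O(\eps^{-1} n \log(r/\eps))$ overall. Computing $f(u \mid T_j)$ requires two value queries (for $f(T_j + u)$ and $f(T_j)$, though the latter can be cached across $u$'s for the given $j$), which is done for at most $n \ell$ pairs per outer iteration, giving $O(\eps^{-1} n \ell \log(r/\eps))$ value queries overall. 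The final padding on Line~\ref{line:dummy} uses no oracle queries, since the dummy elements are known a priori and their effect on both $f$ and independence is known without querying the oracles.

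The only mild subtlety is the query accounting, specifically the observation that the independence test $T + u \in \cI$ is independent of $j$ and therefore need only be performed $O(n)$ rather than $O(n \ell)$ times per outer iteration; this is exactly what prevents the independence-query bound from picking up a factor of $\ell$. Everything else is a routine invariant argument and an iteration count.
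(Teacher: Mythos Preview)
Your proposal is correct and follows essentially the same approach as the paper: an invariant argument for disjointness and independence, followed by the same query count with the same key observation that the independence test $T+u\in\cI$ depends only on $u$ and can be reused across all $j\in[\ell]$. The only detail the paper mentions that you omit is the $O(n)$ value queries needed to initialize $\tau_0=\max_{u\in\cN}f(\{u\})$, but this is absorbed in the stated bound anyway.
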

\begin{proof}
The first part of the observation follows by a simple induction on the iterations of the algorithm. Clearly, all properties hold before the first iteration of the algorithm. During the iterations, the algorithm adds an element to any of the sets $T_1, T_2, \dotsc, T_\ell$ only if $T=\cup_{j = 1}^\ell T_j$ remains independent and the element is not already in one of these sets, which maintains $T$ independent and the sets $T_j$ disjoint. Finally, adding the dummy elements in Line~\ref{line:dummy} makes the independent set $T$ into a base of the matroid.

Let us now analyze the query complexity of Algorithm~\ref{alg:accelerated_split}. The initialization of the algorithm requires $O(n)$ value oracle queries. The only other part of Algorithm~\ref{alg:accelerated_split} that requires any queries is the loop starting on Line~\ref{line:scan}. Since the body of this loop requires up to $2$ value oracle query to $f$, and a single independence oracle queries to $\cM$, it is clear that the entire loop can be implemented using $O(n\ell)$ value and independence oracle queries. However, it is possible to slightly reduce the number of independence oracle queries by making sure that the loop of Line~\ref{line:scan} iterates for every element $u$ on all the values of $\ell$ before proceeding to the next element. This way, it is only necessary to check whether $T + u \in \cI$ once for every element $u \in \cN$ (instead of $\ell$ times), and thus, the loop can be implemented using $O(n\ell)$ value oracle queries and $O(n)$ independence oracle queries. The observation now follows by multiplying these query complexities by the number $I = O(\eps^{-1} \log (r/\eps))$ of iterations of the outer loop of Algorithm~\ref{alg:accelerated_split}.
\end{proof}

To complete the proof of Proposition~\ref{prop-partition-accelerated}, it only remains to prove the lower bounds stated in it on the sum $\sum_{j = 1}^{\ell} f(T_j \mid \varnothing)$.
\begin{lemma}
When Algorithm~\ref{alg:accelerated_split} terminates,
\[
    \sum_{j = 1}^{\ell} f(T_j \mid \varnothing)
    \geq
    \max\Big\{\Big(1 - \frac{1}{\ell} - \eps\Big) \cdot f(OPT) - \frac{1}{\ell} \sum_{j=1}^{\ell}f(T_j), 0\Big\} \enspace.
\]
Furthermore, if $f$ is monotone, then the last inequality improves to
\[
    \sum_{j = 1}^{\ell} f(T_j \mid \varnothing)
    \geq \max\Big\{(1 - \eps) \cdot f(OPT) - \frac{1}{\ell} \sum_{j=1}^{\ell}f(T_j), 0\Big\}\enspace.
\]
\end{lemma}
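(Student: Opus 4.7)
The plan is to mirror the proof of Lemma~\ref{lem-partition}, but replace the exact greedy comparison with a threshold-based bound that loses a factor $(1-\eps/2)^{-1}$ and a small additive error controlled by $\tau_I$. Non-negativity of $\sum_j f(T_j \mid \varnothing)$ is immediate: every non-dummy insertion into a set $T_j$ during the iterations has marginal at least the current threshold $\tau_k > 0$, so $f(T_j)$ grows monotonically from $f(\varnothing)$ (and dummies added in Line~\ref{line:dummy} contribute zero marginal). For the main inequality, Corollary~\ref{cor:perfect_matching_two_bases} (applicable because $T$ is a base after Line~\ref{line:dummy}) yields a bijection $h\colon OPT \to T$ with $(T - h(o)) + o \in \cI$ and $h(o) = o$ on $OPT \cap T$. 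Denote by $\tau^{(w)} = \tau_{k(w)}$ the threshold in effect when a non-dummy $w \in T$ was added, and set $\tau^{(w)} = 0$ for dummies inserted in Line~\ref{line:dummy}.

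The heart of the proof is the estimate that, for every $o \in OPT$ and $j \in [\ell]$, the final sets satisfy
\[
   f(o \mid T_j) \;\leq\; \frac{\tau^{(h(o))}}{1 - \eps/2} + \tau_I.
\]
The proof is a case analysis on $h(o)$. Until the algorithm inserts $h(o)$, the running union satisfies $T \subseteq T^{\text{final}} \setminus \{h(o)\}$, so $T + o \subseteq (T^{\text{final}} - h(o)) + o \in \cI$. Hence, if $h(o)$ is a non-dummy added in iteration $k(h(o))$, the scan of $(o,j)$ during iteration $k(h(o)) - 1$ finds $T + o \in \cI$ but $o$ not added, which forces $f(o \mid T_j) < \tau_{k(h(o)) - 1} = \tau^{(h(o))}/(1 - \eps/2)$ at the time of the scan; submodularity extends the bound to the final $T_j$. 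The edge case $k(h(o)) = 1$ is handled by the trivial bound $f(o \mid \varnothing) \leq \tau_0 = \tau_1/(1 - \eps/2)$. Finally, if $h(o)$ is a dummy, then $h(o) \notin T$ throughout \emph{all} iterations, so the scan of $o$ in iteration $I$ has $T + o \in \cI$ and yields $f(o \mid T_j) < \tau_I$.

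Summing over $o \in OPT$ and $j \in [\ell]$, using the bijectivity of $h$ and a telescoping argument (since $\tau^{(w)}$ is at most the marginal of $w$ at the moment of its addition, and dummies contribute zero, $\sum_{w \in T_j} \tau^{(w)} \leq f(T_j \mid \varnothing)$), gives
\[
   \sum_{o \in OPT} \sum_j f(o \mid T_j) \;\leq\; \frac{\ell}{1 - \eps/2} \sum_j f(T_j \mid \varnothing) + \ell r \, \tau_I.
\]
The choice $I = \lceil (2/\eps) \log(2r/\eps) \rceil$ forces $(1 - \eps/2)^I \leq \eps/(2r)$, whence $\tau_I \leq \eps f(OPT)/(2r)$ (using $\tau_0 = \max_u f(\{u\}) \leq f(OPT)$), and the additive term is at most $\ell \eps f(OPT)/2$. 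Combining with the submodularity inequality $\sum_j [f(T_j \cup OPT) - f(T_j)] \leq \sum_o \sum_j f(o \mid T_j)$, and bounding $\sum_j f(T_j \cup OPT)$ from below either by $(\ell - 1) f(OPT)$ in the non-monotone case (via Corollary~\ref{cor:lowerbound} applied to $g(S) = f(S \cup OPT)$ and a uniformly random $A \in \{T_1, \ldots, T_\ell\}$) or by $\ell \cdot f(OPT)$ via monotonicity, and finally rearranging with $(1 - \eps/2)(1 - 1/\ell - \eps/2) \geq 1 - 1/\ell - \eps$ and $(1 - \eps/2)^2 \geq 1 - \eps$, yields the two claimed lower bounds; the $\max$ with $0$ is the non-negativity fallback whenever the algebraic lower bound is negative.

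The main technical obstacle is the careful case analysis on $h(o)$: the case where $h(o)$ is a dummy inserted in Line~\ref{line:dummy} is what forces the $\tau_I$ additive term instead of a purely multiplicative bound, and the edge case $k(h(o)) = 1$ requires falling back to the trivial bound at the empty set. Both cases account for the extra $\eps f(OPT)$ loss relative to Lemma~\ref{lem-partition}, and the whole argument hinges on the choice of $I$ being large enough that the $\tau_I$ contribution is absorbable into this $\eps f(OPT)$ slack.
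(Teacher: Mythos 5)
Your proposal is correct and follows essentially the same route as the paper's proof: the bijection from Corollary~\ref{cor:perfect_matching_two_bases}, the comparison of $f(o \mid T_j)$ against the threshold in force one iteration before $h(o)$ was added, the separate treatment of dummies via $\tau_I$ and the choice of $I$, and the final lower bound on $\frac{1}{\ell}\sum_j f(T_j \cup OPT)$ via Corollary~\ref{cor:lowerbound} (or monotonicity). The only difference is cosmetic: you bound marginals against the final sets $T_j$ and sum directly (using $f(OPT \mid T_j) \leq \sum_{o} f(o \mid T_j)$ plus the telescoping $\sum_{w \in T_j} \tau^{(w)} \leq f(T_j \mid \varnothing)$), whereas the paper telescopes over a shrinking $OPT^i$; both yield the same intermediate inequality.
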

\begin{proof}
Since $\tau_0$ is initialized to a non-negative value, $\tau_k$ is non-negative throughout the execution of Algorithm~\ref{alg:accelerated_split}. Thus, every element added by Algorithm~\ref{alg:accelerated_split} to one of the sets $T_1, T_2, \dotsc, T_\ell$ must have a non-negative marginal, which guarantees that the sum $\sum_{j = 1}^{\ell} f(T_j \mid \varnothing)$ is non-negative.

We now need to define some notation. Let $u_1, u_2, \dotsc, u_r$ denote the elements of the base $T$, in the order they are added by Algorithm~\ref{alg:accelerated_split} to the sets $T_1, T_2, \dotsc, T_\ell$. For every $i \in [r]$, we denote by $j_i$ the index of the set $T_{j_i}$ to which $u_i$ is added by Algorithm~\ref{alg:accelerated_split}. Additionally, for every $j \in [\ell]$, we denote by $T_{j}^{i - 1}$ the set $T_{j}$ right before $u_i$ was added to $T_{j_i}$. 
Since $T$ and $OPT$ are both bases of the matroid $\cM$, by Corollary~\ref{cor:perfect_matching_two_bases} there exists a bijective function $h \colon OPT \to T$ such that for every $u \in OPT$, $(T - h(u)) + u \in \cI$ and $h(u) = u$ for every $u \in T \cap OPT$. We prove below that
\begin{equation} \label{eq:iteration_greedy_relaxed}
	f(u_i \mid T^{i-1}_{j_i})
	\geq 
     \frac{1 - \eps/2}{\ell} \sum_{j = 1}^{\ell} \left[f(T_j \cup OPT^{i - 1}) - f(T_j \cup OPT^{i})\right]  - \frac{\eps}{2r} \cdot f(OPT)
\end{equation}
holds for every $i \in [r]$, where $OPT^i$ is defined as $OPT \setminus \{h^{-1}(u_j) \mid j \in [i]\}$. Note that this inequality is a relaxation of Inequality~\eqref{eq:iteration_greedy} proved in Lemma~\ref{lem-partition}. The proof of Inequality~\eqref{eq:iteration_greedy_relaxed} is done by considering three cases.

\paragraph{Case $1$: $u_i \not \in OPT$ and was added to during some iteration $k\in \{1, \ldots, I\}$.} In this case
\begin{align*}
    f(u_i \mid T^{i-1}_{j_i}) & \geq \tau_k \geq \frac{1 - \eps/2}{\ell} \cdot \sum_{j=1}^{\ell}f(h^{-1}(u_i) \mid T_j)
     \geq \frac{1 - \eps/2}{\ell} \sum_{j=1}^{\ell} f(h^{-1}(u_i) \mid T_j \cup OPT^{i}) \\
    & =
     \frac{1 - \eps/2}{\ell} \sum_{j = 1}^{\ell} \left[f(T_j \cup OPT^{i - 1}) - f(T_j \cup OPT^{i})\right] \\
		&\geq
		\frac{1 - \eps/2}{\ell} \sum_{j = 1}^{\ell} \left[f(T_j \cup OPT^{i - 1}) - f(T_j \cup OPT^{i})\right] - \frac{\eps}{2r} \cdot f(OPT)
		\enspace,
\end{align*}
where the first inequality holds since $u_i$ was added in the $k$th iteration, the penultimate inequality follow by submodularity, and the last inequality holds by $f$'s non-negativity. To see why the second inequality also holds, we need to consider two sub-cases. The first sub-case is when $u_i$ is added to $T_{j_i}$ during the first iteration of Algorithm~\ref{alg:accelerated_split}. In this sub-case, for every $j\in[\ell]$,
	\begin{align*}
		\tau_{1}
		={} &
		(1 - \eps/2) \cdot \max_{u \in \cN} f(\{u\})
		\geq
		(1 - \eps/2) \cdot \max_{u \in \cN} f(\{u\} \mid \varnothing)\\
		\geq{} &
		(1 - \eps/2) \cdot f(h^{-1}(u_i) \mid \varnothing)
		\geq
		(1 - \eps/2) \cdot f(h^{-1}(u_i) \mid T_j)
		\enspace,
	\end{align*}
	where the second inequality follows from the non-negativity of $f$, and the last inequality holds by $f$'s submodularity since $h^{-1}(u_i) \not \in T$ in this case.
 
 The other sub-case is when $u_i$ is added to $T_{j_i}$ during some iteration $k\in \{2, \ldots, I\}$ of Algorithm~\ref{alg:accelerated_split}. Since $u_i \not \in OPT$, $h^{-1}(u_i) \not \in T$, which in particular means that $h^{-1}(u_i)$ was not added to any set $T_j$ during iteration $k-1$. Thus, the marginal contribution of $h^{-1}(u_i)$ to any set $T_j$ was at most $\tau_{k-1}= \tau_{k} / (1 - \eps/2)$ when $h^{-1}(u_i)$ was considered in iteration $k-1$. By the submodularity of $f$, the marginal contribution of $h^{-1}(u_i)$ with respect to the final set $T_j$ is also not larger than that. Thus,
\[
	\tau_k
	=
	(1 - \eps/2) \cdot \tau_{k - 1}
	\geq
	(1 - \eps/2) \cdot f(h^{-1}(u_i) \mid T_j)
	\enspace.
\]

\paragraph{Case $2$: $u_i \in OPT$ and was added to during some iteration $k\in \{1, \ldots, I\}$.} In this case $h^{-1}(u_i) = u_i \in T_{j_i}$, and hence,
\begin{align*}
    f(u_i \mid T^{i-1}_{j_i}) & \geq \Big(1-\frac{1}{\ell}\Big)\cdot f(u_i \mid T^{i-1}_{j_i}) \geq \frac{1 - \eps/2}{\ell} \sum_{j\in [\ell]- j_i} \mspace{-9mu} f(u_i \mid T_j)
    \geq \frac{1 - \eps/2}{\ell} \sum_{j\in [\ell]- j_i} \mspace{-9mu}  f(u_i \mid T_j \cup OPT^{i})\\
		&= \frac{1 - \eps/2}{\ell} \sum_{j=1}^{\ell} f(u_i \mid T_j \cup OPT^{i}) 
    = \frac{1 - \eps/2}{\ell}\sum_{j=1}^{\ell} f(h^{-1}(u_i) \mid T_j \cup OPT^{i}) \\
		&=
     \frac{1 - \eps/2}{\ell} \sum_{j = 1}^{\ell} \left[f(T_j \cup OPT^{i - 1}) - f(T_j \cup OPT^{i})\right]\\
		&\geq
		\frac{1 - \eps/2}{\ell} \sum_{j = 1}^{\ell} \left[f(T_j \cup OPT^{i - 1}) - f(T_j \cup OPT^{i})\right] - \frac{\eps}{2r} \cdot f(OPT)
		\enspace.
\end{align*}
Here, the first and last inequalities hold since  $f(u_i \mid T^{i-1}_{j_i})$ and $f(OPT)$ are non-negative, and the third inequality holds by the submodularity of $f$. The second inequality follows by repeating the arguments from the two sub-cases of the previous inequality, and observing that $h^{-1}(u_i) = u_i$ was not added to any of the sets $T_1, T_2, \dotsc, T_\ell$ until the end of iteration $k - 1$.

\paragraph{Case 3: $u_i$ is a dummy element added by Line~\ref{line:dummy}.}
Assume first that $h^{-1}(u_i) \not \in T$. Our assumption implies that $h^{-1}(u_i)$ was not added during the last iteration of Algorithm~\ref{alg:accelerated_split} to any set, and the reason for that must have been that the marginal contribution of $h^{-1}(u)$ to any set $T_j$ was at most $\tau_I$ when $h^{-1}(u)$ was considered in this last iteration. By the submodularity of $f$, we get that for every $j \in [\ell]$,
\[
	f(h^{-1}(u) \mid T_j)
	\leq
	\tau_I
	=
	(1 - 2/\eps)^I \cdot \tau_0
	\leq
	(1 - 2/\eps)^{\frac{2}{\eps} \cdot \log(2r/\eps)} \cdot f(OPT)
	\leq
	\frac{\eps}{2r} \cdot f(OPT)
	\enspace.
\]
If $h^{-1}(u_i) \in T$, then $h^{-1}(u_i) = u$, which implies that $h^{-1}(u)$ is a dummy element obeying $f(h^{-1}(u) \mid T_j) = 0 \leq (\eps / (2r)) \cdot f(OPT)$. Thus, the inequality $f(h^{-1}(u) \mid T_j) \leq \frac{\eps}{2} \cdot f(OPT)$ holds regardless of whether $h^{-1}(u_i)$ belongs to $T$ or not. Using this inequality, we get
\begin{align*}
    f(u_i \mid T^{i-1}_{j_i}) & = 0 \geq (1 - \eps/2) \cdot \bigg[\frac{1}{\ell} \cdot \sum_{j=1}^{\ell}f(h^{-1}(u_i) \mid T_j) - \frac{\eps}{2r} \cdot f(OPT)\bigg] \\
    & \geq \frac{1 - \eps/2}{\ell} \cdot \sum_{j=1}^{\ell} f(h^{-1}(u_i) \mid T_j \cup OPT^{i}) - \frac{\eps}{2r} \cdot f(OPT)\\
    &=
     \frac{1 - \eps/2}{\ell} \cdot \sum_{j = 1}^{\ell} \left[f(T_j \cup OPT^{i - 1}) - f(T_j \cup OPT^{i})\right] - \frac{\eps}{2r} \cdot f(OPT) 
		\enspace,
\end{align*}
which completes the last case in the proof of Inequality~\eqref{eq:iteration_greedy_relaxed}.

Adding up Inequality~\eqref{eq:iteration_greedy_relaxed} for all $i \in [r]$ yields
\begin{align} \label{eq:iteration_greedy_accelerated}
  \sum_{j=1}^{\ell}f(T_j \mid \varnothing)
	\geq{} &
	\frac{1 - \eps/2}{\ell} \cdot \sum_{j = 1}^{\ell} \left[f(T_j \cup OPT) - f(T_j \cup OPT^{r})\right] - \frac{\eps}{2} \cdot f(OPT)\\\nonumber
	={} &
	\frac{1 - \eps/2}{\ell} \cdot \sum_{j = 1}^{\ell} f(T_j \cup OPT) - \frac{1}{\ell} \sum_{j=1}^{\ell}f(T_j) - \frac{\eps}{2} \cdot f(OPT)
	\enspace.
\end{align}
When $f$ is monotone, the last inequality implies
\begin{align*}
	\sum_{j=1}^{\ell}f(T_j \mid \varnothing)
	\geq{} &
	(1 - \eps/2) \cdot f(OPT) - \frac{1}{\ell} \sum_{j=1}^{\ell}f(T_j) - \frac{\eps}{2} \cdot f(OPT)\\
	={} &
	(1 - \eps) \cdot f(OPT) - \frac{1}{\ell} \sum_{j=1}^{\ell}f(T_j)
	\enspace.
\end{align*}
Otherwise, the inequality $\frac{1}{\ell} \sum_{j = 1}^{\ell} f(T_j \cup OPT) \geq (1 - 1/\ell) \cdot f(OPT)$ holds for the same reasons as in the proof of Lemma~\ref{lem-partition}, and by plugging this inequality into Inequality~\eqref{eq:iteration_greedy_accelerated}, we get
\begin{align*}
	\sum_{j=1}^{\ell}f(T_j \mid \varnothing)
	\geq{} &
	(1 - \eps/2) \cdot (1 - 1/\ell) \cdot f(OPT) - \frac{1}{\ell} \sum_{j=1}^{\ell}f(T_j) - \frac{\eps}{2} \cdot f(OPT)\\
	\geq{} &
	(1 - 1/\ell - \eps) \cdot f(OPT) - \frac{1}{\ell} \sum_{j=1}^{\ell}f(T_j)
	\enspace.
	\qedhere
\end{align*}
\end{proof}

\bibliographystyle{plain}
\bibliography{Submodular}

\end{document}